\title{Stable Hypergraph Matching in Unimodular Hypergraphs}
\author[1]{P{\'e}ter Bir{\'o}}
\author[1]{Gergely Cs{\'a}ji}
\author[1]{Ildik{\'o} Schlotter}
\affil[1]{
HUN-REN Centre for Economic and Regional Studies, Budapest, Hungary 
\{biro.peter, csaji.gergely, schlotter.ildiko\}@krtk.hun-ren.hu
}
\DeclareFontShape{T1}{lmr}{b}{sc}{<->ssub*cmr/bx/sc}{}
\DeclareFontShape{T1}{lmr}{bx}{sc}{<->ssub*cmr/bx/sc}{}
\setlist{noitemsep}
\colorlet{green}{green!50!black}
\newcommand{\mycupdot}{\mathbin{\mathaccent\cdot\cup}}
\providecommand*{\cupdot}{%
  \mathbin{%
    \mathpalette\@cupdot{}%
  }%
}
\newcommand*{\@cupdot}[2]{%
  \ooalign{%
    $\m@th#1\cup$\cr
    \sbox0{$#1\cup$}%
    \dimen@=\ht0 %
    \sbox0{$\m@th#1\cdot$}%
    \advance\dimen@ by -\ht0 %
    \dimen@=.5\dimen@
    \hidewidth\raise\dimen@\box0\hidewidth
  }%
}
\providecommand*{\bigcupdot}{%
  \mathop{%
    \vphantom{\bigcup}%
    \mathpalette\@bigcupdot{}%
  }%
}
\newcommand*{\@bigcupdot}[2]{%
  \ooalign{%
    $\m@th#1\bigcup$\cr
    \sbox0{$#1\bigcup$}%
    \dimen@=\ht0 %
    \advance\dimen@ by -\dp0 %
    \sbox0{\scalebox{2}{$\m@th#1\cdot$}}%
    \advance\dimen@ by -\ht0 %
    \dimen@=.5\dimen@
    \hidewidth\raise\dimen@\box0\hidewidth
  }%
}
\newcommand{\linkproof}[1]{%
    \hyperref[#1]{$\star$}%
}
\newcommand{\stabset}{T}
\newcommand{\remsetL}{X}
\newcommand{\remsetFIN}{Y}
\newcommand{\spr}{I'}
\newcommand{\decprob}[3]{
   \begin{center}%
   \fbox{
    \begin{minipage}{0.96\linewidth}%
      \textsc{#1}\\[4pt]
      \begin{tabular}{@{}l@{\hspace{4pt}}p{0.83\linewidth}}
      \textbf{Input:} & #2\\[0.2ex]
      \textbf{Question:} & #3
      \end{tabular}
    \end{minipage}%
    }
  \end{center}
  }
\newcommand{\searchprob}[3]{
   \begin{center}%
   \fbox{
    \begin{minipage}{0.96\linewidth}%
      \textsc{#1}\\[4pt]
      \begin{tabular}{@{}l@{\hspace{4pt}}p{0.83\linewidth}}
      \textbf{Input:} & #2\\[0.2ex]
      \textbf{Output:} & #3
      \end{tabular}
    \end{minipage}%
    }
  \end{center}
  }  
\theoremstyle{plain}
\newtheorem{theorem}{Theorem}[section]
\newtheorem{Lemma}[theorem]{Lemma}
\newtheorem{Cor}[theorem]{Corollary}
\newtheorem{proposition}[theorem]{Proposition}
\newtheorem{claim}[theorem]{Claim}
\theoremstyle{definition}
\newtheorem{Def}[theorem]{Definition}
\newtheorem{Rem}[theorem]{Remark}
\newtheorem{?}[theorem]{Problem}
\newtheorem{Ex}[theorem]{Example}
\newcommand{\M}{\mu}
\newcommand{\rcp}{subpath}
\newcommand{\capacuda}{c}
\newcommand{\capacquota}{q}
\def\tnph{\textbf{NPh}}
\def\txp{\textbf{XP}}
\def\twh{\textbf{W[1]h}}
\newcommand{\poly}{\mathrm{poly}}
\newcommand{\probfont}[1]{{\textmd{\textup{\textsc{#1}}}}}
\def\shm{\probfont{SHM}}
\def\shbm{\probfont{SH$\capac$M}}
\def\maxwshbm{\probfont{MaxW}-\shbm{}}
\def\maxwshm{\probfont{MaxW}-\shm{}}
\def\uda{\probfont{UDA}}
\def\tushm{\probfont{Unimod-SHM}}
\def\tushbm{\probfont{Unimod-SH$\capac$M}}
\def\maxwlamshm{\probfont{MaxW}-\lamshm{}}
\def\maxwrcshbm{\probfont{MaxW}-\rcshbm{}}
\newcommand{\maxwsbm}{\probfont{MaxW-S$\capac$M}}
\newcommand{\udamaxwshbm}{\probfont{UDA-MaxW-SH$\capac$M}}
\newcommand{\udashbm}{\probfont{UDA-SH$\capac$M}}
\newcommand{\rcshbm}{\probfont{Subpath-SH$\capac$M}}
\newcommand{\subtrshm}{\probfont{Subtree-SHM}}
\newcommand{\subtrshbm}{\probfont{Subtree-SH$\capac$M}}
\newcommand{\lamshm}{\probfont{Laminar-SHM}}
\newcommand{\lamshbm}{\probfont{Laminar-SH$\capac$M}}
\newcommand{\dist}{{\rm{dist}}}
\newcommand{\mytop}{{\rm{top}}}
\newcommand{\n}{n_S}
\newcommand{\m}{n_U}
\def\EE{\mathcal{E}}
\def\HH{\mathcal{H}}
\def\CC{\mathcal{C}}
\def\NP{\mathsf{NP}}
\def\XP{\mathsf{XP}}
\def\PPAD{\mathsf{PPAD}}
\newcommand{\hh}[1]{\widetilde{#1}}
\newcommand{\worst}{\texttt{worst}}
\newcommand*{\claimproofname}{Proof of Claim.}
\newcommand*{\proofsketchname}{Proof sketch.}
\newenvironment{claimproof}[1][\claimproofname]{\begin{proof}[#1]}{\end{proof}}
\newenvironment{proofsketch}[1][\proofsketchname]{\begin{proof}[#1]}{\end{proof}}
\newcommand{\HE}{\mathcal{E}}
\def\nomatch{\perp}
\def\capac{b}
\def\weight{w}
\newcommand{\leqnomode}{\tagsleft@true}
\newcommand{\reqnomode}{\tagsleft@false}
\date{}
\begin{document}
\maketitle

\begin{abstract}
We study the $\NP$-hard \textsc{Stable Hypergraph Matching} (\shm) 
problem and its generalization allowing capacities, the \textsc{Stable Hypergraph $\capac$-Matching}  (\shbm) 
problem, and investigate their computational properties under various structural constraints. 
Our study is motivated by the fact that Scarf's Lemma \cite{scarf1967core} together with a result of Lovász \cite{lovasz1972normal} guarantees the existence of a stable matching whenever the underlying hypergraph is normal. Furthermore, if the hypergraph is unimodular (i.e., its incidence matrix is totally unimodular), then 
even a stable  $\capac$-matching is guaranteed to exist.
However, no polynomial-time algorithm is known  for finding a stable matching or $\capac$-matching in unimodular hypergraphs.

We identify subclasses of unimodular hypergraphs where \shm\ and \shbm\ are tractable such as laminar hypergraphs
or so-called  \rcp\ hypergraphs with bounded-size hyperedges; for the latter case, even a maximum-weight stable $\capac$-matching can be found efficiently. 
We complement our algorithms by showing that 
optimizing over stable matchings is $\NP$-hard even in laminar hypergraphs.
As a practically important special case of \shbm\ for unimodular hypergraphs, we investigate a tripartite stable matching problem with students, schools, and companies as agents, called the \textsc{University Dual Admission} 
 problem, which models real-world scenarios in higher education admissions.

Finally, we examine a superclass of \rcp\ hypergraphs that are normal but not necessarily  unimodular, namely subtree hypergraphs where hyperedges correspond to subtrees of a tree.
We establish that for such hypergraphs, stable matchings can be found in polynomial time but, in the setting with capacities, finding a stable $b$-matching is $\NP$-hard.
\end{abstract}

\section{Introduction}

Stable matchings are fundamental in economics, combinatorial optimization, and mechanism design, playing a crucial role in applications such as college admissions, job markets, and organ exchange programs. Since the seminal work of Gale and Shapley~\cite{gale1962college}, stable matchings have been extensively studied in two-sided markets where agents form pairwise relationships. However, many real-world scenarios require interactions beyond pairwise relationships, leading naturally to hypergraph-based formulations. %

The focus of our study is the \probfont{Stable Hypergraph $\capac$-Matching} (\shbm) problem, where we are given a hypergraph whose vertices represent agents, with each agent~$v$ having preferences over the incident hyperedges and, additionally, a capacity value $\capac(v)$. The task is to find a stable $\capac$-matching, i.e., a set~$M$ of hyperedges where each agent~$v$ is adjacent to at most~$\capac(v)$ hyperedges in~$M$, and no hyperedge~$e$ outside~$M$ is ``desirable'' for all agents in~$e$; see Section~\ref{sec:prelim} for the precise definition of stability. 

In the context of hypergraph matchings, stability is often difficult to achieve, as stable solutions may not always exist. On the positive side, Scarf’s Lemma~\cite{scarf1967core} guarantees the existence of a fractional stable solution in the \shbm\ problem. Furthermore, a famous result of Lovász~\cite{lovasz1972normal} in combination with Scarf’s Lemma ensures that in an instance of \textsc{Stable Hypergraph Matching} (\shm)---the restriction of \shbm\ without capacities---with a \emph{normal hypergraph}, a stable matching always exists. An even stronger result follows for \emph{unimodular hypergraphs}, where the incidence matrix is totally unimodular: Scarf’s Lemma guarantees the existence of a stable $\capac$-matching. However, these results only provide existential guarantees, leaving computational aspects largely unexplored.

Motivated by these structural insights, we investigate the computational complexity of finding stable ($\capac$-)matchings in different hypergraph classes. Furthermore, we consider not only the problem of finding \emph{any} stable ($\capac$-)matching but also of computing maximum-weight stable solutions.%

\subsection{Our Contribution}
We begin with a class inspired by a real-world problem: the University Dual Admission (\uda) problem which arises in higher education admissions where students apply to universities and internship programs simultaneously~\cite{fleiner-ferkai-biro2019}. Although we show that a stable solution always exists by reducing \uda\ to \shbm\ in a unimodular hypergraph, our proof does not directly lead to an efficient algorithm. 
though we are unable to settle the complexity of finding a stable solution in a general instance of \uda,
we provide an algorithm that finds a maximum-weight stable solution and runs in $\XP$ time parameterized by the number of universities. 
To complement this, we prove that computing a maximum-size stable matching is $\NP$-hard even %
without capacities, although we show that a stable matching can be found efficiently in this case.
Additionally, we introduce a relaxation of stability, termed \emph{half-stability}, and design a polynomial-time algorithm that always finds a half-stable matching. 

Next, we investigate \emph{laminar hypergraphs}, a class where hyperedges follow a hierarchical inclusion property. Leveraging their structure, we design a polynomial-time algorithm that finds a stable $\capac$-matching whenever the input hypergraph is laminar. By contrast, we also establish that, even in the setting without capacities, finding a maximum-weight stable matching (or even a stable matching containing a fixed edge) is $\NP$-hard.

We extend our study to a superclass of laminar hypergraphs, \emph{\rcp\ hypergraphs}, where hyperedges can be represented as subpaths of a path. Subpath hypergraphs naturally model scenarios where relationships or interactions are constrained along a linear structure, such as transportation networks or supply chains. We develop an $\XP$ algorithm parameterized by the largest hyperedge size for finding a maximum-weight stable $\capac$-matching in such hypergraphs.

Finally, we consider a superclass of \rcp\ hypergraphs, \emph{subtree hypergraphs}, where hyperedges correspond to subtrees of an underlying tree. While these hypergraphs are not necessarily unimodular, they are still normal, which ensures the existence of a stable matching by the results of Lovász~\cite{lovasz1972normal} and Scarf~\cite{scarf1967core}. For the matching case without capacities, we provide a polynomial-time algorithm, which contrasts our result that computing a stable $\capac$-matching for general capacities is $\NP$-hard.

Possible applications that motivate the study of \shbm\ for various hypergraph classes include problems related to budgeting in transportation networks, project selection management in companies, or coalition formation in politics; see Appendix~\ref{sec:app-motivapplic}.

Our findings, summarized in Table~\ref{tab:results}, contribute to the understanding of stable matchings in structured hypergraphs, bridging theoretical guarantees with computational feasibility.

\begin{table}[t]

    \centering
    \renewcommand{\arraystretch}{1.2}
    \setlength{\tabcolsep}{6pt}
    \begin{tabular}{@{}l@{\hspace{8pt}}c@{\hspace{4pt}}c@{\hspace{8pt}}c@{\hspace{4pt}}c@{}}
        \toprule
        & \multicolumn{2}{c}{\textbf{Matching}} & \multicolumn{2}{c}{\textbf{b-Matching}} \\
        & Stable & Max-Weight Stable & Stable & Max-Weight Stable \\
        \midrule
        UDA       & 
        \textbf{P}[R\ref{rem:uda-allone}]& \tnph, \txp($\m$)[T\ref{thm:max-uda},T\ref{thm:constant_uni}]  & \textbf{?}, \txp ($\m$)[T\ref{thm:constant_uni}]  & \tnph, \txp($\m$)[T\ref{thm:max-uda},T\ref{thm:constant_uni}]  \\
        Laminar   & \textbf{P} \cite{chandrasekaran2024scarf} & \tnph, \txp($\ell_{max}$) [T\ref{thm:laminar-weighted-nphard},T\ref{thm:RC1P-fixed-length}] & \textbf{P} [T\ref{thm:laminar-shm-poly}]& \tnph, \txp($\ell_{max}$) [T\ref{thm:laminar-weighted-nphard},T\ref{thm:RC1P-fixed-length}]\\
        Subpath      & \textbf{P} \cite{chandrasekaran2024scarf}  & \tnph, \txp($\ell_{max}$) [T\ref{thm:laminar-weighted-nphard},T\ref{thm:RC1P-fixed-length}] & \textbf{?}, \txp($\ell_{max}$) [T\ref{thm:RC1P-fixed-length}]& \twh, \txp($\ell_{max}$) [T\ref{thm:RC1P-Whardness},T\ref{thm:RC1P-fixed-length}] \\
        Subtree   & \textbf{P}[T\ref{thm:subtree-poly}] & \tnph [T\ref{thm:laminar-weighted-nphard}] & \tnph [T\ref{thm:fa}] & \tnph [T\ref{thm:laminar-weighted-nphard}]\\
        \bottomrule
    \end{tabular}
    \caption{Computational complexity of finding an arbitrary or a maximum-weight stable ($b$-)matching in different hypergraph classes. 
    $\tnph,\textbf{P},\txp$, and $\twh$ refers to $\NP$-hardness,  polynomial-time solvability, solvability in~$\XP$, and $\mathsf{W}[1]$-hardness (the latter two with the given parameter), respectively.
    Parameter $\ell_{max}$ is the size of the largest hyperedge and $\m$ is the number of universities. 
    Recall the hierarchy of hypergraphs which we may write informally as
    $\{\text{laminar}\} \subseteq \{\text{subpath}\} \subseteq \{\text{subtree}\}$.
    \label{tab:results}
    }
    \label{tab:complexity}
\end{table}

\subsection{Related Work}

In their seminal paper, Gale and Shapley \cite{gale1962college} provided a model and an algorithmic solution for college admission problems. 
Their solution concept is called \emph{stability}, which means that an application is fairly rejected by a college if the college filled its quota with better applicants. Gale and Shapley showed that their so-called deferred-acceptance (DA) algorithm always finds a stable solution that is optimal for the students if they are proposing in the algorithm. %

The work of Gale and Shapley~\cite{gale1962college} inspired extensive research in mathematics, computer science, operations research, as well as in economics and game theory. %
For an overview, we recommend the reader to consult the book of Manlove \cite{manlove2013algorithmics} in computer science, the books of Roth and Sotomayor \cite{roth-sotomayor1990} and Haeringer \cite{haeringer2018market} in economics. %
 Besides the theoretical research, the DA algorithm has also been used in many important applications, firstly in the US resident allocation scheme (called NRMP) since 1952~\cite{roth1984} and also in several nationwide college admission and school choice programs. %

One of our motivating problems originates in the Hungarian university scheme where it is possible to apply for a so-called \emph{dual program} that consist of a normal university major together with an internship at a company. For a student to get accepted to such a dual program, both the university and the company have to grant admission, and these parties can have different rankings. Note that this feature is not integrated in the current admission system, as the hiring decisions of the companies are not shared with the central coordinators when conducting the matching algorithm, which causes serious coordination problems, see a detailed description of the problem by Fleiner et al.~\cite{fleiner-ferkai-biro2019}.

Rather independently from the topic of matching under preferences, Scarf~\cite{scarf1967core} showed that so-called balanced NTU-games (a family of cooperative games with non-transferable utility) have non-empty core by providing a finite algorithm that computes a core solution in such games. %
Instead of introducing these game-theoretical notions, we only interpret and use Scarf's Lemma in the context of stable matchings and $\capac$-matchings in hypergraphs.   

The \textsc{Stable Hypergraph Matching} problem, introduced by Aharoni and Fleiner~\cite{AharoniFleiner03}, is essentially equivalent to the problem of finding a core solution in simple NTU-games, also known as \emph{core stability in hedonic coalition formation games}~\cite{banerjee2001core,woeginger2013core,aziz2012existence}; in the hypergraph formulation of these problems, vertices represent agents and hyperedges represent the possible coalitions. %
The setting where agents have capacities, giving rise to the \shbm\ problem, was introduced by Bir\'o and Fleiner \cite{biro2016fractional} who proposed a general framework and proved the existence of fractional stable solutions using Scarf's Lemma for $\capac$-matching problems.
They also highlighted that the matching polytope has the integer property  for normal hypergraphs, therefore the solution determined by Scarf's algorithm for the \shm\ problem is always integer and thus  corresponds to a stable matching.

However, the running time of computing a fractional solution by Scarf's algorithm is not polynomial in general. The PPAD-hardness of computing a fractional stable matching was proved by Kintali et al.~\cite{kintali2013reducibility} and was linked with several other PPAD-hard combinatorial problems. %
Recently Cs\'aji extended these PPAD-hardness results for the fractional stable matching problem for three-partite hypergraphs and for the stable allocation with couples problem \cite{csaji2022complexity}. On the positive side, for certain stable matching problems it was shown that  Scarf's algorithm terminates in polynomial time for some appropriately defined pivot rule. Very recently Faenza et al.\ \cite{faenza2023scarf} showed this property for the classical stable marriage problem, and Chandrasekaran et al.~\cite{chandrasekaran2024scarf} for \shm\ for so-called arborescence hypergraphs (see Section~\ref{sec:prelim} for the definition).
Even though arborescence hypergraphs include subpath hypergraphs, 
the results by Chandrasekaran et al.~\cite{chandrasekaran2024scarf} do not imply any of our results, as they involve neither capacities nor optimization over stable matchings.

\paragraph*{Organization.}
We start by introducing the necessary notation and formally define our studied computational problems in Section~\ref{sec:prelim}. We proceed with the University Dual Admission (\uda) problem in Section~\ref{sec:uda} as a practical example of \shbm. 
We move on to more abstract unimodular hypergraph classes in Section~\ref{sec:algorithms}: in Section~\ref{sec:laminar} we study the class of  laminar hypergraphs, while in Section~\ref{sec:rc1p} we focus on the more general class of \rcp\ hypergraphs.
We examine subtree hypergraphs in Section~\ref{sec:subtree}, and finally conclude in Section~\ref{sec:conclusion}.

Results marked with~($\star$)
have their proofs deferred to the appendices.

\section{Preliminaries}
\label{sec:prelim}
In Section~\ref{sec:notation} we define the concepts related to graphs and hypergraphs that we need in our paper, and in Section~\ref{sec:problems} we provide the definition of the computational problems we will investigate.

\subsection{Notation and Terminology}
\label{sec:notation}
We use the notation $[\ell]=\{1,2,\dots,\ell\}$ for each positive integer~$\ell$. We let $\mathbb{Z}, \mathbb{Z}_+, \mathbb{R}$,  and $\mathbb{R}_+$ denote the set of integers, non-negative integers, reals, and non-negative reals, respectively.
We use basic graph terminology with standard notation; see Appendix~\ref{sec:app-graphs} for all necessary definitions.

\medskip
\noindent
{\bf Matrices, polyhedra and Scarf's Lemma.}
For a matrix~$A$, we let $A_i$ denote its $i$th row.
A matrix~$A$ is
\emph{totally unimodular} or \emph{TU} if each subdeterminant of~$A$ has value in~$\{-1,0,1\}$.
Totally unimodular matrices are known for their useful properties. Most importantly, if $A$ is TU and $\capac$ is integral, then all extreme points of the polyhedron $\{ x\in \mathbb{R}^m:Ax\le b, x\ge 0\}$ are integral. A point $z$ in the polyhedron $\{ x\in \mathbb{R}^m:Ax\le b, x\ge 0\}$
is an \emph{extreme point} if it cannot be written as a strictly convex combination of two different points in the polyhedron. That is, there are no $\lambda \in (0,1)$ and $z_1,z_2\in \{ x\in \mathbb{R}^m:{Ax\le b}, x\ge 0\}$ such that $z=\lambda z_1 + (1-\lambda )z_2$. %

A matrix $A$ is a \textit{network matrix} if it can be obtained from a directed graph $G=(V,E)$ and an edge set~$F \subseteq E$ of~$G$ for which $(V,F)$ is a spanning tree of~$G$ in the undirected sense, via the following method. 
With each row of~$A$ we associate an edge $f\in F$, and with each column we associate an edge $e\in E\setminus F$. 
Given edges~$f \in F$ and~$e \in E \setminus F$, let $C_{e,f}$ denote the unique cycle (in the undirected sense) contained in $(V,F\cup \{ e\})$.
Then, the entry of~$A$ at the intersection of row~$f$ and  column~$e$ has value
\begin{itemize}
    \item $1$, if $f$ is contained in~$C_{e,f}$, and the edges~$e$ and~$f$ have the same orientation along the cycle~$C_{e,f}$; 
\item $-1$, if $f$ is contained in~$C_{e,f}$, and $e$ and~$f$ have a different orientation along~$C_{e,f}$,  and 
\item 
$0$, if $f$ is not contained in the cycle~$C_{e,f}$.
\end{itemize}

It is well known that network matrices are totally unimodular (see e.g., \cite[Theorem 13.20]{Schrijver}).

The following is a key lemma of Scarf \cite{scarf1967core}, useful for many stable matching problems.

\begin{Lemma}[Scarf \cite{scarf1967core}] Let $A \in \mathbb{R}_+^{n \times m}$ be a matrix such that every column of~$A$ has a nonzero element, and let $b \in \mathbb{R}^n_+$. Suppose that every row $i \in [n]$ has a strict ordering $\succ_i$ over those columns $j \in [m]$ for which $A_{ij}>0$. Then there is an extreme point of $\{x \in \mathbb{R}^m: Ax\le b, \; x\ge 0\}$ that dominates every column in some row, where we say that $x\in \mathbb{R}^m$ \emph{dominates} column $j$ in row $i$ if $A_{ij}>0$, $A_ix=b_i$, and $k \succeq_i j$ for all $k\in [m]$ such that $A_{ik}x_k>0$. Also, such an extreme point can be found by a finite algorithm.
\end{Lemma}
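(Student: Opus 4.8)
The plan is to prove Scarf's Lemma by the classical complementary-pivoting (path-following) argument that simultaneously tracks two combinatorial structures on $n$-element sets of columns: a \emph{cardinal} structure coming from the polyhedron and an \emph{ordinal} structure coming from the orderings $\succ_i$. First I would put the system into equality form by appending $n$ slack columns, working with $\bar A = [A \mid I] \in \mathbb{R}_+^{n \times (m+n)}$, so that $\bar A y = b,\ y \ge 0$ encodes $Ax \le b,\ x\ge 0$ and the extreme points of the polyhedron correspond exactly to \emph{feasible bases}, i.e.\ $n$-subsets $B$ of columns for which $\bar A_B$ is nonsingular and $\bar A_B^{-1} b \ge 0$. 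To make the pivoting steps unique and avoid ties, I would first perturb $b$ lexicographically so that the polyhedron is non-degenerate and all relevant comparisons become strict, then remove the perturbation at the very end by a compactness/limiting argument.

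Next I would set up the ordinal side. Extending each $\succ_i$ so that the slack column of row~$i$ is ranked appropriately (least preferred in row~$i$ and irrelevant elsewhere), I would call an $n$-set $B$ \emph{ordinal} if it dominates every column $j \in [m+n]$ in the sense of the statement: for each $j$ there is a row~$i$ with $\bar A_{ij}>0$ such that every $k \in B$ relevant to row~$i$ satisfies $k \succeq_i j$. A set that is \emph{both} a feasible basis and ordinal yields, via $x = \bar A_B^{-1} b$ restricted to the original coordinates, precisely the extreme point demanded by the lemma. The heart of the argument is Scarf's combinatorial observation that ordinal sets admit a well-defined \emph{ordinal pivot}: expelling a distinguished column from an ordinal set leaves, generically, a unique replacement column that restores domination, so that the sets ``almost'' dominating all columns (dominating all but one controlled slot) line up along edges.

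I would then run the two pivots against each other in the style of Lemke's algorithm — which is exactly Scarf's finite algorithm. Consider configurations that are feasible bases and are ordinal except at a single controlled slot; each such configuration admits exactly two moves, a cardinal (simplex) pivot between adjacent vertices of the polyhedron and an ordinal pivot. These configurations thus form the nodes of a graph in which every node has degree two, except for genuine solutions (feasible and fully ordinal) and one explicitly constructed starting configuration, which have degree one. By the handshake/parity lemma the number of degree-one nodes is even; starting the path at the designated initial configuration and following it through finitely many bases (with no repeats, since the degree structure precludes cycles) must terminate at a second degree-one node, i.e.\ a set that is simultaneously a feasible basis and ordinal. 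This proves existence and furnishes the promised finite algorithm.

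The main obstacle is verifying the degree-two property of the path graph, which is exactly where the two pivots must be shown to be unique and to interlock. On the cardinal side this is the standard non-degeneracy argument for simplex pivots, guaranteed by the lexicographic perturbation; the delicate part is the ordinal pivot, where one must check that removing a column from an ordinal set admits a unique reinsertion preserving domination, and that the column which \emph{ceases} to be dominated after a cardinal pivot is precisely the one the subsequent ordinal pivot acts upon. Handling the boundary cases — the designated start and the possibility that a pivot ``runs off'' to a slack column — relies on the careful ranking of slack columns fixed in the setup, and recovering an unperturbed extreme point at the end (with the domination property preserved) needs the short compactness argument mentioned above.
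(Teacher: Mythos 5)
The paper contains no proof of this statement: it is Scarf's Lemma, quoted verbatim from \cite{scarf1967core} and used throughout purely as a black box (in combination with Theorem~\ref{thm:lovasz} or total unimodularity) to guarantee that stable matchings and $b$-matchings exist. There is therefore no in-paper argument to compare yours against; I can only judge the proposal on its own merits.

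What you describe is a faithful reconstruction of Scarf's original complementary-pivoting proof, and the architecture is right: append slack columns, perturb $b$ lexicographically to eliminate degeneracy (note that the hypothesis that every column of the nonnegative matrix $A$ has a nonzero entry makes the polyhedron a polytope, which is what guarantees that every cardinal pivot has a blocking constraint), play simplex pivots against ordinal pivots along a path whose degree-one endpoints are a designated start and the desired ``feasible plus dominating'' configurations, and remove the perturbation by a limiting argument over the finitely many bases. However, as you yourself flag, the entire mathematical content of the lemma sits in the two facts you defer: that a dominating (ordinal) set admits a \emph{unique} domination-restoring replacement when a prescribed column is expelled, and that the cardinal and ordinal pivots interlock so that intermediate configurations have degree exactly two while the start has degree one. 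Without these, the proposal is a correct roadmap rather than a proof. Two structural points you would also need to repair in a full write-up: Scarf's algorithm does not maintain a single $n$-set that is ``a feasible basis and ordinal except at one slot,'' but a \emph{pair} of $n$-sets --- one feasible basis and one dominating set --- sharing $n-1$ columns, with the two pivot types applied alternately until the sets coincide; and when proving the replacement step one cannot leave slack columns ``irrelevant'' in foreign rows --- the standard convention ranks the slack column of row~$i$ worst in row~$i$ but best in every other row (equivalently, columns with zero entry in a row are treated as most preferred there), and this convention is exactly what makes the starting pair well-defined and of degree one. These details are worked out in \cite{scarf1967core} and, in the dominating-set language used in the paper's statement, in \cite{AharoniFleiner03}.
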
 

\medskip
\noindent
{\bf Hypergraphs, matchings and $\capac$-matchings.  }
A \emph{hypergraph}~$\HH=(V,\EE)$ contains a set~$V$ of vertices and a set~$\EE \subseteq 2^V$ of \emph{hyperedges}; we may simply refer to hyperedges as \emph{edges} when this causes no confusion. We will say that a hyperedge~$e \in \EE$ is \emph{incident} to a vertex~$v \in V$ if $v \in e$.
The \emph{incidence matrix} of a hypergraph~$\HH=(V,\EE)$ has $|V|$ rows and $|\EE|$ columns, and the entry at the intersection of column~$v \in V$ and hyperedge $e \in \EE$ is~$1$ if $e$ is incident to~$v$, and~$0$ otherwise. For a subset $F\subseteq \EE$, let $F(v) := \{ e\in F\mid v\in e\}$.

Given a hypergraph $\HH = (V,\EE)$, a \emph{matching} is a subset $M\subseteq \EE$ that satisfies that $|M(v)|\le 1$ for every $v\in V$. 
A matching~$M$ leaves a vertex~$v$ \emph{unmatched} if $M(v)=\emptyset$, otherwise it \emph{covers}~$v$.
Given \emph{capacities} $\capac (v)\in \mathbb{Z}_+$ for each $v\in V$, we say that $M\subseteq \EE$ is a \emph{$\capac$-matching} if $|M(v)|\le \capac (v)$ for each $v\in V$.
For a $\capac$-matching $M$, we say that $v$ is \textit{unsaturated} in~$M$ if $|M(v)|<\capac (v)$, 
 \textit{saturated} if 
$|M(v)|=\capac (v)$, and 
\textit{oversaturated} 
if $|M(v)|>\capac (v)$.

\medskip
\noindent
{\bf Hypergraph classes.  }
Let us now describe the hypergraph classes relevant to our paper.

Let $\HH$ be a hypergraph, and let $\HH'$ be any \emph{partial hypergraph} of $\HH$, obtained by deleting some of its edges.  
The \emph{chromatic index} of a hypergraph, denoted $\chi_e(\HH)$, is the smallest number of colors required to color its edges  
such that no two edges of the same color share a common vertex. It is clear that the maximum degree  
$\Delta(\HH)$—the largest number of edges incident to any single vertex—provides a lower bound for $\chi_e(\HH)$.  

\begin{Def}A hypergraph $\HH$ is \emph{normal} if $\chi_e(\HH') = \Delta(\HH')$ for every partial hypergraph $\HH'$ of $\HH$.
\end{Def}
Lovász \cite{lovasz1972normal} provided an alternative characterization of normal hypergraphs.%

\begin{theorem}[Lovász]
\label{thm:lovasz}
    A hypergraph $\HH$ is normal if and only if all extreme points of the polyhedron $\{ x\in \mathbb{R}^m: Ax\le 1, x\ge 0\}$ are integral, where $A$ is the incidence matrix of $\HH$.
\end{theorem}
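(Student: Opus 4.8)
The plan is to read the polytope $P=\{x\in\mathbb{R}^{|\EE|}:Ax\le\mathbf{1},\,x\ge0\}$ as the \emph{fractional matching polytope} of $\HH$. Since every hyperedge is nonempty, the constraint at any vertex contained in $e$ already forces $x_e\le 1$, so the integral points of $P$ are exactly the incidence vectors $x^M$ of matchings $M\subseteq\EE$. Hence ``all extreme points of $P$ are integral'' is equivalent to ``$P=\mathrm{conv}\{x^M:M \text{ a matching}\}$'', and I would prove the two implications separately. Throughout I use that a partial hypergraph $\HH'$ has matching polytope $P'=P\cap\{x_e=0:e\notin\EE(\HH')\}$, which is a face of $P$; consequently, if $P$ is integral then so is $P'$.

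For the direction ``integral $\Rightarrow$ normal'', fix a partial hypergraph $\HH'$ and set $\Delta=\Delta(\HH')$; I would argue $\chi_e(\HH')=\Delta$ by induction on $\Delta$, the inequality $\chi_e(\HH')\ge\Delta$ being immediate. The key observation is that the point $\tfrac{1}{\Delta}\mathbf{1}$ lies in $P'$ and saturates \emph{exactly} the maximum-degree vertices: for $v$ with $\deg_{\HH'}(v)=\Delta$ we get $(A'\tfrac1\Delta\mathbf1)_v=1$, while for every other $v$ the value is $<1$. Using integrality of $P'$, write $\tfrac1\Delta\mathbf1=\sum_k\lambda_k x^{M_k}$ as a convex combination of matchings. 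For each maximum-degree vertex $v$ the corresponding constraint is tight, so every $M_k$ in the support must cover $v$; thus any single $M_k$ is a matching covering all maximum-degree vertices. Removing it drops the maximum degree to $\Delta-1$, and the inductive hypothesis colours $\HH'\setminus M_k$ with $\Delta-1$ colours, giving $\chi_e(\HH')\le\Delta$.

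For the converse ``normal $\Rightarrow$ integral'', I would take an arbitrary extreme point $x^*$ of $P$; since $A$ is integral, $x^*$ is rational, so there is $D\in\mathbb{Z}_+$ with $m_e:=Dx^*_e\in\mathbb{Z}_+$ for all $e$. Form the blow-up in which each edge $e$ is taken with multiplicity $m_e$; its maximum degree is $\max_v\sum_{e\ni v}m_e=D\max_v(Ax^*)_v\le D$. If this blow-up can be properly edge-coloured with $D$ colours, the colour classes are matchings $M_1,\dots,M_D$ of $\HH$ with $\sum_i x^{M_i}=Dx^*$, whence $x^*=\tfrac1D\sum_i x^{M_i}$ is a convex combination of matching vectors; extremality of $x^*$ then forces $x^*=x^{M_i}$ for some $i$, i.e.\ $x^*$ is integral.

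The main obstacle is exactly this colouring step: normality as defined only speaks about \emph{partial} hypergraphs (subsets of $\EE$), whereas the blow-up contains parallel copies, so I must upgrade normality to the weighted statement that the multiset chromatic index equals the weighted maximum degree $\max_v\sum_{e\ni v}m_e$. This is the analogue of Lovász's replication lemma for perfect graphs and is the crux of the argument; the complementary fractional-to-integral colouring needed in the other direction, by contrast, is handled cleanly by the saturation-and-peeling induction above and is not the bottleneck.
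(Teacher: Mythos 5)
First, a point of reference: the paper itself gives no proof of Theorem~\ref{thm:lovasz} --- it is imported verbatim from Lov\'asz~\cite{lovasz1972normal} --- so your attempt has to stand entirely on its own. Your direction ``integral $\Rightarrow$ normal'' does stand: restricting to a partial hypergraph $\HH'$ passes to a face of the polytope (hence preserves integrality), the point $\tfrac{1}{\Delta}\mathbf{1}$ is tight precisely at the maximum-degree vertices, so every matching in the support of a convex decomposition of it must cover all of those vertices, and peeling off one such matching lowers the maximum degree, closing the induction on $\Delta$. That half is correct and complete.

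The converse, however, contains a genuine gap, and it is the one you flag yourself: the blow-up argument needs that a normal hypergraph satisfies $\chi_e=\Delta$ for edge \emph{multisets}, whereas normality as defined quantifies only over subsets of $\EE$. Labelling this ``the crux'' and leaving it unproved is not a reduction of the problem; it \emph{is} the problem --- essentially all of the content of the hard direction of Lov\'asz's theorem sits in this replication step, and it does not follow formally from the definition. To see that it is not routine, try induction on total multiplicity: delete one parallel copy of an edge $e_0$, colour the remaining multi-hypergraph with $\Delta'$ colours (where $\Delta'$ is its maximum degree), and re-insert the copy. This fails exactly when no vertex of $e_0$ has degree $\Delta'$: then the maximum degree does not grow, no fresh colour is available, and an arbitrary $\Delta'$-colouring need not contain a class whose edges all avoid $e_0$. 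The missing lemma is provable by the hypergraph analogue of the perfect-graph replication lemma: in that bad case every colour class covers every maximum-degree vertex, so deleting $M\setminus\{e_0\}$, where $M$ is the class containing $e_0$, drops the maximum degree to at most $\Delta'-1$; colour what remains with $\Delta'-1$ colours by normality (or the inductive hypothesis), and add $(M\setminus\{e_0\})\cup\{e_0'\}$ back as one further class, which is a matching because every edge of $M\setminus\{e_0\}$ is disjoint from $e_0$ and the copy $e_0'$ has the same vertex set as $e_0$. Iterating one duplication at a time yields the multiset statement your argument invokes. Without this lemma (or some substitute, e.g.\ an LP-duality argument), your proposal establishes only the easy direction of the theorem.
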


An important subclass of normal hypergraphs is the class of unimodular hypergraphs.
\begin{Def}
A hypergraph is \emph{unimodular} if its incidence matrix is totally unimodular.
\end{Def}
Next, we introduce subclasses of unimodular hypergraphs.
\begin{itemize}
    \item A \emph{network hypergraph} is a hypergraph whose incidence matrix is a network matrix.
    \item \emph{Arborescence hypergraphs} are a subclass of network hypergraphs:
$\HH = (V,\EE)$ is an \emph{arborescence hypergraph} if there is an arborescence $F$ over~$V$  such that each edge of~$\EE$ forms a directed path in~$F$.
\item \emph{Subpath hypergraphs} are the special case of arborescence hypergraphs where the underlying arborescence is required to be a path.
    Formally, $\HH = (V,\EE)$ is a \emph{\rcp\ hypergraph} if there exists a directed path $F$ over~$V$  such that each edge of~$\EE$ forms a  subpath in~$F$.
    \item %
A \emph{laminar hypergraph} is a hypergraph~$\HH=(V,\EE)$ for which there are no two hyperedges~$e_1,e_2 \in \EE$ that satisfy $e_1 \setminus e_2 \neq \emptyset$ and $e_2 \setminus e_1 \neq \emptyset$. 
Laminar hypergraphs are known to be \rcp\ hypergraphs; although this seems to be folklore, we provide a proof in Appendix~\ref{sec:app-hyp}.
\end{itemize}

Finally, we define a superclass of arborescence hypergraphs.
A hypergraph $\HH=(V,\EE)$ is a \textit{subtree hypergraph} if there is a tree $F=(V,E)$ on the ground set~$V$ such that each hyperedge induces a subtree of $F$. 
While subtree hypergraphs are not necessarily unimodular, they still form a subclass of normal hypergraphs \cite[Section 4.4]{bretto-hypergraphs}. 

Appendix~\ref{sec:app-hyp} offers a full, graphical overview of the inclusion relations among these hypergraph classes.

\subsection{Stable Hypergraph Matching}
\label{sec:problems}
We start by defining the \textsc{Stable Hypergraph Matching} and \textsc{Stable Hypergraph $\capac$-Matching} problems. Let $\HH=(V,\EE )$ be a hypergraph. For each $v\in V$  we are given a strict preference list~$\succ_v$ over the  set $\EE(v)$ of  hyperedges containing~$v$. In the case of \textsc{Stable Hypergraph $\capac$-Matching}, we are also given capacities $\capac(v)\in \mathbb{Z}_+$ for each $v\in V$.

\begin{Def}
\label{def:shm-stability}
    A hyperedge~$e\in \EE$ is \emph{dominated} by a ($\capac$-)matching~$M$ at some vertex~$v \in V$ if $v$ is saturated in~$M$ and
    $f \succ_v e$ for each hyperedge $f \in M(v)$. 
    A hyperedge $f$ \emph{blocks}~$M$ if $f\notin M$ and there exists no vertex~$v\in f$ at which  $M$ dominates~$f$. 
 A ($\capac$-)matching is \textit{stable} if no hyperedge blocks it.
\end{Def} 

The most important question in such instances is to find a stable matching or $\capac$-matching. Hence, we get the following computational problem.

\searchprob{Stable Hypergraph $\capac$-Matching \textup{or} \shbm{}}
{
A hypergraph $\HH=(V,\EE)$ with capacities $\capac:V \rightarrow \mathbb{Z}$ and strict preferences $(\succ_v)_{v \in V}$ over incident edges for each vertex in~$V$.
}
{
Find a stable hypergraph $\capac$-matching in~$\HH$.
}

\textsc{Stable Hypergraph Matching} (or \shm) is defined analogously for matchings instead of $\capac$-matchings. It is clear that \shm\ is a subcase of \shbm\ where we have $\capac \equiv 1$.

For $\Pi \in \{$\textsc{Unimod}, \textsc{Subpath},  \textsc{Subtree}$\}$, we denote by $\Pi$-\shm\ and $\Pi$-\shbm\ the restrictions of \shm\ and \shbm\ to unimodular; subpath; or subtree hypergraphs, respectively.

As was observed by Aharoni and Fleiner~\cite{AharoniFleiner03}, every stable hypergraph matching for an instance $(\HH ,(\succ_v)_{v \in V})$ of \shm\  with hypergraph~$\HH=(V,\EE)$
can be seen as an integral point
in the polyhedron $\{x \in \mathbb{R}^m: Ax \leq 1, x \geq 0\}$ that dominates every column according to the preferences~$(\succ_v)_{v \in V}$, where $A$ is the incidence matrix of~$\HH$. Thus, Scarf's Lemma and Theorem~\ref{thm:lovasz} imply the existence of a stable matching if the hypergraph is normal.
Furthermore, for an instance  $(\HH,\capac ,(\succ_v)_{v \in V})$ of \shbm, the integral points of the polyhedron $\{ x\in \mathbb{R}^m: Ax\le \capac, 0\le x\le 1\}$ that dominate every column correspond to the stable hypergraph $\capac$-matchings, where $\capac$ (with a slight abuse of notation) is the vector containing the capacities of the vertices.
If the incidence matrix~$A$ is TU, then all extreme points of this polyhedron are integral, and thus Scarf's Lemma guarantees the existence of a stable $\capac$-matching whenever the underlying hypergraph is unimodular. 
The guaranteed existence of a stable matching or stable $\capac$-matching makes the class of normal hypergraphs and its subclass, unimodular hypergraphs, particularly interesting and useful.%

Although the existence of a stable matching or $b$-matching may be guaranteed in restricted hypergraph classes, it is also of interest to find a maximum-weight stable hypergraph matching.
For this, an optimization variant of \shbm\ (and \shm) can be formulated by associating a \emph{weight} with each hyperedge.
Given a weight function $\weight:\EE \rightarrow \mathbb{Z}$ over a set~$\EE$ of hyperedges, we extend it   by setting $\weight(M)=\sum_{e \in M}\weight(e)$ for each $M \subseteq \EE$.

\searchprob{Maximum-Weight \shbm\ \textup{or} \maxwshbm{}}
{
A hypergraph $\HH=(V,\EE )$ with capacities $\capac:V \rightarrow \mathbb{Z}$, weights $\weight:\EE \rightarrow \mathbb{Z}$, and strict preferences $(\succ_v)_{v \in V}$ over incident edges for each vertex in~$V$.
}
{
Find a maximum-weight stable hypergraph $\capac$-matching in~$\HH$.
}

We will refer to the optimization variant of the matching case, i.e., \shm\ as
\textsc{Maximum-Weight \shm{}} or \maxwshm.
For $\Pi \in \{$\textsc{Unimod}, \textsc{Subpath},  \textsc{Subtree}$\}$, we denote by $\Pi$-\maxwshm\ and $\Pi$-\maxwshbm\ the restrictions of \maxwshm\ and \maxwshbm\ to unimodular, subpath, or subtree hypergraphs, respectively.

We remark that in the special case when the hypergraph $\HH$ is a bipartite graph, \maxwshbm\  coincides with the maximum-weight stable $\capac$-matching problem (\maxwsbm) which is solvable in polynomial time~\cite{Fleiner03-bmatching}.

\section{The University Dual Admission Problem}\label{sec:uda}

In this section, we introduce a practically relevant problem in student allocation which, as we will show, can also be modeled as a special case of \tushm.

The problem is motivated by practical situations where certain programs in higher education are funded jointly as a result of cooperation between universities and companies, leading to a more complicated dual admission system. In our model, motivated by a real scenario in Hungarian higher education, we assume that such programs, funded by various companies, can be treated independently from each other (even though different programs might be funded by the same company)---hence, we shift the focus from the funding companies to the programs themselves.

Formally, we have a set $U=\{ u_1,\dots,u_{\m} \}$ of \emph{universities}, a set $P_i=\{ p_{i1},\dots,p_{ik_i}\}$ of \emph{programs} for each university~$u_i \in U$, and a set $S=\{ s_1,\dots,s_{\n}\}$ of \emph{students}.  The set of programs is denoted by~$P=\bigcup_{i \in [\m]} P_i$. We may refer to the set of students, universities, and programs together as the set of \emph{agents}.
Each university~$u_i\in U$ has a \emph{capacity}~$\capacuda (u_i)$, and each program~$p_{ik} \in P$ has a \emph{quota}~$\capacquota (p_{ik})$.

The students apply to both a university $u_i \in U$ and a program $p_{ik} \in P_i$  available at university~$u_i$. 
Since each program~$p_{ik} \in P$ uniquely determines the university~$u_i$ offering it, this can alternatively be viewed as students applying simply to programs.\footnote{Nevertheless, universities play an important role in the admission problem, as will become clear in the definition of a stable assignment (Definition~\ref{def:uda-stability}).}
Thus, for each student $s_j \in S$ we assume a strict preference order $\succ_{s_j}$ over the set of programs acceptable for~$s_j$. 
Additionally, each university~$u_i \in U$ has a strict preference order~$\succ_{u_i}$ over the students, and each program $p_{ik} \in P$ has a strict ordering~$\succ_{p_{ik}}$ over the students.
A triple $(s_j,u_i,p_{ik})$ is an \emph{acceptable triple} if $s_j$ finds $p_{ik}$ acceptable; note that we do not consider acceptability for universities or programs explicitly, as we implicitly assume that a student finds a program acceptable only if both the program and university offering it finds the student acceptable as well.
We may treat acceptable triples as subsets of~$S \cup U \cup P$ of size~$3$.

An \emph{assignment} $\M$ is a function~$\M:S \rightarrow P \cup \{\nomatch\}$ mapping each student ${s_j \in S}$ either to a program that is acceptable for~$s_j$, or to the special symbol~$\nomatch$ meaning that~$s_j$ is left unassigned. For simplicity, we assume that a program is acceptable for~$s_j$ if and only if it is preferred by~$s_j$ to~$\nomatch$. For a program~$p_{ik} \in P$, we let $\M(p_{ik})$ denote the set of students assigned to it by~$\M$, and similarly, for a university~$u_i \in U$ we let $\M(u_i)$ denote the set of students assigned by~$\M$ to some program in~$P_i$. An assignment~$\M$ is \emph{feasible} if $|\M(u_i)|\le \capacuda (u_i)$ for each $u_i \in U$ and $|\M(p_{ik})|\le \capacquota(p_{ik})$ for each $p_{ik} \in P$; we say that a university or a program is \emph{unsaturated} if the corresponding inequality is strict, otherwise it is \emph{saturated}.

\begin{Def}
\label{def:uda-stability}
A feasible assignment~$\M$ for an instance of \uda\ is \emph{stable} if there is no \emph{blocking} student--university--program triple, that is, a triple $(s_j,u_i,p_{ik})$  such that the following three conditions hold:
\begin{itemize}
    \item[(i)] $p_{ik}\succ_{s_j} \M(s_j)$; 
    \item[(ii)] $u_i$ satisfies one of the followings: 
    \begin{itemize}
        \item $u_i$ is unsaturated,
        \item there is a student $s_{j'}\in \M(u_i)$ such that $s_j\succ_{u_i}s_{j'}$, or
        \item   $s_j\in \M(u_i)$;
    \end{itemize}  
    \item[(iii)] the program $p_{ik}$ is unsaturated, or there is a student $s_{j''}\in \M(p_{ik})$ such that $s_j\succ_{p_{ik}}s_{j''}$.
\end{itemize}    
\end{Def}
We are ready to formally define the corresponding computational problem.
\decprob{University Dual Admission \textup{or} UDA}
{
An instance $I=(S,U,P,\capacuda,\capacquota,(\succ_a)_{a \in S \cup U \cup P})$ of \uda\ as described above.
}
{
Find a feasible and stable assignment for $I$.
}

Let us remark that the above problem can easily be used to model a situation where students may apply either to university--program pairs or only to universities:
to allow for this option, we simply need to create a dummy program~$p^*_i$ for each university~$u_i \in U$ with quota~$\capacuda(u_i)$ whose preferences are identical to the preferences of~$u_i$.

\smallskip
A very natural idea to solve  \uda\ is to use a Gale--Shapley-like  proposal--rejection algorithm and iterate it until it outputs a stable matching. We give an example where a student-proposing variant of the Gale--Shapley  algorithm  goes into an infinite loop when used on an instance of~\uda, even if students propose in a lexicographic order. It is also possible to create instances where universities or programs propose but the algorithm still cycles in a similar way. This suggests that the problem should be solved in a different manner.

\subsection{\texorpdfstring{Formulating \uda\ as a Special Case of \shbm}{Formulating UDA as a Special Case of SHbM}}
\label{sec:uda-vs-shm}
In this section, %
we first show that \uda\ can be formulated as a special case of the \tushbm, and as a consequence, 
a stable assignment always exists for every instance of~\uda. 
By contrast, we also prove that a stable assignment of maximum size is $\NP$-hard to find.

\begin{Def}
\label{def:uda-hypergraph}
Given an instance~$I$ of \uda, let $\HH_I=(S \cup U \cup P,\EE_I)$ denote the
\emph{hypergraph associated with~$I$} whose vertices are the agents in~$I$ and whose hyperedges are the acceptable triples in~$I$. %
We say that a hypergraph $\HH$ has the \emph{\uda\ property} if it can arise as $\HH_I$ for some instance $I$ of \uda.
\end{Def}

We call \udashbm\ and \udamaxwshbm\ the restriction of \shbm\ and \maxwshbm\ to hypergraphs having the \uda\ property, respectively.
 
To create an instance of \shbm\ that is equivalent with our instance~$I$ of \uda, we also need to define a strict ranking over the hyperedges of~$\HH_I$ incident to each  student, university, or program. For a student~$s_j \in S$, his preference list~$\succ_{s_j}$ in~$I$ %
can be viewed as a preference list over its incident hyperedges: $s_j$ prefers a hyperedge $\{ s_j,u_i,p_{ik}\}$ over~$\{ s_j,u_{i'},p_{i'k'}\}$ if and only if $p_{ik} \succ_{s_j} p_{i'k'}$. For a program~$p_{ik} \in P$, the preference order $\succ_{p_{ik}}$ %
similarly defines a ranking over the incident hyperedges, as each such hyperedge contains the same university~$u_i$. For a university~$u_i \in U$, we extend its preference list to all incident hyperedges the following way. We let $u_i$ prefer $\{ s_j,u_i,p_{ik}\}$ to $\{ s_{j'},u_i,p_{ik'}\} $ if and only if either $s_j \succ_{u_i}s_{j'}$ or $s_j=s_{j'}$ and $p_{ik} \succ_{s_j} p_{ik'}$. Let $\succ_a^{\HH}$ denote the preferences of agent~$a \in S \cup U \cup P$ in~$I$ thus defined.

Additionally, we define the capacity function $\capac^{\HH}:S \cup U \cup P$ over the vertices of~$\HH_I$ as 
\[\capac^{\HH}(a)=
\left\{
\begin{array}{ll}
1, & \textrm{ if $a \in S$,} \\
\capacuda(u_i),  & \textrm{ if $a =u_i \in U$,} \\
\capacquota(p_{ik}), & \textrm{ if $a=p_{ik} \in P$.}
\end{array}
\right.
\]

\begin{restatable}[\linkproof{sec:app-proof-uda-reduces-to-shm}]{theorem}{thmudareducestoshm}
\label{thm:uda-reduces-to-shm}
    The stable assignments of an instance of~$I=(S,U,P,\capacuda,\capacquota,(\succ_a)_{a \in S \cup U \cup P})$ of \uda\ %
    correspond bijectively to
    the stable $\capac^{\HH}$-matchings of the instance $I'=(\HH_I,\capac^{\HH},{(\succ^{\HH}_a)_{a \in S \cup U \cup P})}$ of \shbm.
\end{restatable}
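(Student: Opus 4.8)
The plan is to exhibit the natural bijection explicitly, check that it identifies feasible assignments with $\capac^{\HH}$-matchings, and then verify that stability is preserved by comparing the blocking conditions vertex by vertex.

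First, I would define the map $\phi$ that sends a feasible assignment $\M$ to
\[
 M = \phi(\M) = \bigl\{ \{s_j,u_i,p_{ik}\} : s_j \in S,\ \M(s_j) = p_{ik} \in P_i \bigr\},
\]
whose inverse recovers $\M(s_j)$ as the unique program $p_{ik}$ with $\{s_j,u_i,p_{ik}\} \in M$ (setting $\M(s_j) = \nomatch$ if no such edge exists). Because $\capac^{\HH}(s_j) = 1$ forces each student into at most one edge of any $\capac^{\HH}$-matching, $\phi$ and this inverse are well defined and mutually inverse. For the feasibility correspondence I would note that $M(u_i)$ is precisely the set of edges encoding students whom $\M$ sends to a program of $u_i$, so $|M(u_i)| = |\M(u_i)|$; similarly $|M(p_{ik})| = |\M(p_{ik})|$ and $|M(s_j)| \le 1$. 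Hence the constraints $|M(v)| \le \capac^{\HH}(v)$ hold for all vertices $v$ exactly when $\M$ is feasible, so $\phi$ restricts to a bijection between feasible assignments and $\capac^{\HH}$-matchings.

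The heart of the proof is the claim that a triple $(s_j,u_i,p_{ik})$ blocks $\M$ in the sense of Definition~\ref{def:uda-stability} if and only if the edge $f = \{s_j,u_i,p_{ik}\}$ blocks $M = \phi(\M)$ in the sense of Definition~\ref{def:shm-stability}. By the latter definition, $f$ blocks $M$ iff $f \notin M$ and $M$ fails to dominate $f$ at each of the vertices $s_j$, $u_i$, $p_{ik}$, so I would translate ``$M$ does not dominate $f$ at $v$'' into a UDA condition for each of the three vertex types. At the student $s_j$, saturation means $\M(s_j) \neq \nomatch$, and as $\succ^{\HH}_{s_j}$ mirrors $\succ_{s_j}$ on programs, non-domination reduces exactly to $p_{ik} \succ_{s_j} \M(s_j)$, that is condition~(i) (which also forces $f \notin M$). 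At the program $p_{ik}$, all incident edges share the university $u_i$ and differ only in the student, so $\succ^{\HH}_{p_{ik}}$ mirrors $\succ_{p_{ik}}$; using $|M(p_{ik})| = |\M(p_{ik})|$, non-domination reduces to condition~(iii).

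The step I expect to be the main obstacle is matching non-domination at the university $u_i$ with condition~(ii), since $\succ^{\HH}_{u_i}$ is the lexicographic order comparing students by $\succ_{u_i}$ and breaking ties (same student, different program) by $\succ_{s_j}$. Unwinding non-domination at $u_i$ gives ``$u_i$ unsaturated, or some $g = \{s_{j'},u_i,p_{ik'}\} \in M$ with $f \succ^{\HH}_{u_i} g$''; the disjunct $f \succ^{\HH}_{u_i} g$ splits into $s_j \succ_{u_i} s_{j'}$, matching the second bullet of~(ii), and the tie-break $s_j = s_{j'}$ with $p_{ik} \succ_{s_j} p_{ik'}$. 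The delicate point is reconciling this tie-break with the third bullet ``$s_j \in \M(u_i)$'' of~(ii), which carries no program condition: here I must invoke condition~(i), which supplies $p_{ik} \succ_{s_j} \M(s_j) = p_{ik'}$ for the program $p_{ik'} \in P_i$ witnessing $s_j \in \M(u_i)$. Thus the three UDA conditions are not checked independently of one another---condition~(i) is needed precisely to align the university condition. Conversely, each disjunct of~(ii) yields a concrete edge $g \in M(u_i)$ with $f \succ^{\HH}_{u_i} g$; and since $\succ^{\HH}_{u_i}$ is a total strict order on edges incident to $u_i$ and $f \notin M$, ``not every $g \in M(u_i)$ satisfies $g \succ^{\HH}_{u_i} f$'' is equivalent to ``some such $g$ satisfies $f \succ^{\HH}_{u_i} g$''. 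Assembling the three vertex-wise equivalences shows that blocking triples correspond to blocking edges, so $\M$ is stable iff $\phi(\M)$ is, giving the desired bijection between stable assignments and stable $\capac^{\HH}$-matchings.
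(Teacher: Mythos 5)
Your proof is correct and takes essentially the same route as the paper's: both construct the natural correspondence between assignments and $\capac^{\HH}$-matchings and then verify that blocking triples correspond exactly to blocking hyperedges by a vertex-by-vertex comparison of the domination conditions. In particular, you correctly isolate the one genuinely delicate point, which the paper's proof also handles this way: for the case $s_j \in \M(u_i)$ in condition~(ii), the tie-breaking rule in $\succ^{\HH}_{u_i}$ requires invoking condition~(i) to conclude that the hyperedge $\{s_j,u_i,p_{ik}\}$ beats $\{s_j,u_i,\M(s_j)\}$ at the university.
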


Next, we show that the incidence matrix of the hypergraph associated with an arbitrary instance of \uda\ is a network hypergraph and is thus unimodular.

\begin{theorem}
\label{thm:uda-network-shm}
For each instance~$I$ of \uda, the associated hypergraph~$\HH_I$ is a network hypergraph. %
\end{theorem}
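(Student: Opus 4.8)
The plan is to produce, explicitly, a directed graph $G$ together with a spanning tree whose associated network matrix is literally the incidence matrix of $\HH_I$. The guiding observation is the one built into the definition of a network matrix: the support of the column associated with a non-tree edge $e$ is exactly the set of tree edges lying on the unique tree path between the endpoints of $e$. So I need a spanning tree in which the \emph{agents} play the role of tree edges, and in which every acceptable triple $\{s_j,u_i,p_{ik}\}$ is realized as a tree path consisting of precisely the three edges representing $s_j$, $u_i$, and $p_{ik}$. The point that makes this achievable despite students being incident to many triples is to route students and universities through one shared central vertex, so that every triple becomes a short path rather than a triangle.

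Concretely, I would take a central vertex $O$, a vertex $x_j$ for each student $s_j$, a vertex $y_i$ for each university $u_i$, and a vertex $z_{ik}$ for each program $p_{ik}$. The tree edges $F$ are $Ox_j$ (representing $s_j$), $Oy_i$ (representing $u_i$), and $y_iz_{ik}$ (representing $p_{ik}$). Since students and universities attach directly to $O$ and each program attaches to its university's vertex $y_i$, the graph $(V_G,F)$ is connected and acyclic, hence a spanning tree, and $|F|$ equals the number of agents, giving the required bijection between rows of the incidence matrix and tree edges. For each acceptable triple $t=\{s_j,u_i,p_{ik}\}$ I then add a non-tree edge $e_t$ joining $x_j$ and $z_{ik}$. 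The unique tree path from $x_j$ to $z_{ik}$ is $x_j\!-\!O\!-\!y_i\!-\!z_{ik}$, so the fundamental cycle of $e_t$ comprises $e_t$ together with exactly the three tree edges for $s_j$, $u_i$, and $p_{ik}$; thus column $e_t$ has support precisely $\{s_j,u_i,p_{ik}\}$, matching the incidence matrix row-for-row.

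The only genuinely delicate step, and the one I expect to be the main obstacle, is the \emph{sign pattern}: because the incidence matrix is a $0/1$ matrix, every nonzero entry must come out as $+1$, not merely as $\pm 1$, so the edge orientations must be chosen coherently rather than arbitrarily. I would orient $Oy_i$ from $O$ to $y_i$ and $y_iz_{ik}$ from $y_i$ to $z_{ik}$ (so the path runs outward from $O$ through the university vertex to the program leaf), orient $Ox_j$ from $x_j$ to $O$, and orient each $e_t$ from $z_{ik}$ to $x_j$. Traversing the fundamental cycle of $e_t$ in the direction of $e_t$, that is, along $z_{ik}\to x_j\to O\to y_i\to z_{ik}$, each of the three tree edges is traversed in agreement with its chosen orientation, so $e_t$ and every incident tree edge have the same orientation along the cycle, yielding entry $+1$ in each case.

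Putting these pieces together shows that the incidence matrix of $\HH_I$ equals the network matrix of $(G,F)$, so $\HH_I$ is a network hypergraph. Since network matrices are totally unimodular, this also immediately gives that $\HH_I$ is unimodular, which is the consequence used in the surrounding discussion. I would keep the verification of dimensions, connectivity, and the per-triple cycle computation brief, as they are routine once the construction and orientations are fixed.
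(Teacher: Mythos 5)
Your proof is correct and takes essentially the same approach as the paper: both build a star-shaped spanning tree with an auxiliary central vertex, one tree edge per agent (programs hanging off their universities, universities and students hanging off the center), and one non-tree edge per acceptable triple joining the student and program vertices, so that each fundamental cycle is a directed cycle passing through exactly the three agent edges. The only cosmetic difference is that your orientations are the global reversal of the paper's, which produces the identical network matrix.
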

\begin{proof}
We create a directed graph~$D=(V,E)$ together with a spanning tree $F$ in the undirected graph $\overline{D}=(V,\overline{E})$ obtained by replacing each arc~$(u,v)$ of~$D$ by an edge between $u$~and~$v$. We will show that the incidence matrix of~$\HH_I$ is exactly the network matrix that corresponds to~$D$ with the spanning tree~$F$.

We create a vertex in~$V$ for each student, university, and program, and we further define an additional vertex~$x$. Formally, we set \[V=\{v(a):a \in S \cup U \cup P\} \cup \{x\}.\] 
The set~$E$ of  arcs is defined as follows. First, for each program $p_{ik}$ at some university~$u_i$, we add an arc $e(p_{ik})=(v(p_{ik}),v(u_i))$. Second, for each university~$u_i$, we add an arc $e(u_i)=(v(u_i),x)$. 
Third, for each student $s_j$, we add an arc $e(s_j)=(x,v(s_j))$. 
Let $E_F=\{e(a):a \in S \cup U \cup P\}$ denote the set of arcs defined so far.
Finally, for each acceptable triple $(s_j,u_i,p_{ik})$, i.e., for each hyperedge in~$\HH_I$, we add an arc $(v(s_j),v(p_{ik}))$. 
It is straightforward to see that $F:=\{\{u,v\}:(u,v) \in E_F\}$ is a spanning tree in~$\overline{D}$.
Let $A$ be the network matrix associated with $D$ and~$F$.
Note that the edges of~$F$ are in a one-to-one correspondence with the vertices of the hypergraph $\HH_I$, i.e., the set of agents in~$I$. Furthermore, the edges of $E\setminus F$ are in a one-to-one correspondence with the hyperedges of $\HH_I$.

Thus, it only remains to show that $A$ is indeed the incidence matrix of~$\HH_I$, i.e., if a vertex $v\in V$ is contained in a hyperedge $\hat{e} \in \EE_I$, then the corresponding entry in~$A$ is~1, and~0 otherwise. To see this, consider the unique cycle~$C_{\hat{e}}$ defined by an arc $(v(s_j),v(p_{ik}))$ corresponding to some hyperedge $\hat{e}=\{ s_j,u_i,p_{ik}\}$, which is the cycle formed by the four edges $ (v(s_j),v(p_{il}))$, $e(p_{ik})$, $e(u_i)$, and~$e(s_j)$. Observe that $C_{\hat{e}}$ contains exactly three arcs of~$F$, namely the arcs of~$D$ corresponding to the three vertices (i.e., agents) incident to~$\hat{e}$. 
Hence, there are exactly three non-zero entries in the  column of~$A$ corresponding to~$\hat{e}$, and these three entries are in the three rows that correspond to the vertices incident to~$\hat{e}$.
Moreover, since all arcs of~$C_{\hat{e}}$ have the same orientation along~$C_{\hat{e}}$, each of these three entries of~$A$ has value~$1$, as required.
\end{proof}

While Theorems~\ref{thm:uda-reduces-to-shm} and~\ref{thm:uda-network-shm} guarantee the existence of a stable assignment for every \uda\ instance, a maximum-size stable assignment is $\NP$-hard to find, where the \emph{size} of an assignment is the number of students it assigns  to some program.

\begin{restatable}[\linkproof{sec:app-proof-max-uda}]{theorem}{thmmaxuda}
\label{thm:max-uda}
Finding a maximum size stable assignment in an instance of \uda\ is $\NP$-hard, even if all capacities are one.
\end{restatable}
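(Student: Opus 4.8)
The plan is to reduce from a known $\NP$-hard problem in stable matching theory, most naturally the problem of finding a maximum-size stable matching in an instance with incomplete preference lists---for instance \textsc{Maximum Stable Marriage with Incomplete Lists} (\textsc{Max-SMI}), which is $\NP$-hard even in very restricted forms (Manlove et al.). The key structural observation we must exploit is that in \uda, a blocking triple $(s_j,u_i,p_{ik})$ requires simultaneous ``interest'' from the student \emph{and} non-domination at \emph{both} the university~$u_i$ and the program~$p_{ik}$. This two-layered admission structure (condition (ii) at the university level and condition (iii) at the program level, both in Definition~\ref{def:uda-stability}) is exactly the feature that lets us encode hardness while keeping all capacities equal to one, which is the strong form claimed in the statement.

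\medskip
\noindent\textbf{Construction.} First I would take an instance of the source problem and build a \uda\ instance in which the ``programs'' and ``universities'' act as a pair of gatekeepers that can veto each other's preferred applicant. A clean way to force non-trivial behaviour is to set up gadgets where a student $s_j$ can be assigned to one of two (or more) programs, and where the choice is coupled across students through shared universities or shared programs so that only certain global assignments avoid blocking triples. Because we insist on unit capacities, each program and each university admits at most one student; this turns the admission structure into something close to a tripartite stable matching with preference lists, and the coupling between the university-layer and the program-layer is what will encode the combinatorial constraints of the source instance. I would design the preferences $\succ_{s_j}$, $\succ_{u_i}$, and $\succ_{p_{ik}}$ so that the stable assignments of the \uda\ instance correspond exactly to the stable matchings of the source instance, with the \emph{size} (number of assigned students) preserved or shifted by a fixed additive constant.

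\medskip
\noindent\textbf{Correctness.} The two directions are: (a) every stable matching of the source instance lifts to a stable assignment of the \uda\ instance of the corresponding size, and (b) every stable assignment of the \uda\ instance projects back to a stable matching of the source of (at least) the corresponding size. For direction (a) I must verify that no triple $(s_j,u_i,p_{ik})$ blocks; the delicate point is checking all three stability conditions simultaneously, and in particular ruling out blocking triples that mix the ``two layers'' in ways that have no analogue in the source problem. Here I would rely on carefully chosen tie-free preference lists in the gadgets to make the university-layer automatically non-blocking whenever the program-layer is, so that condition~(ii) and condition~(iii) cannot both fail to dominate. For direction (b), the main subtlety is that a stable \uda\ assignment might in principle do something ``off-pattern'' (e.g.\ route a student through an unexpected program); I would argue via the gadget structure that any stable assignment must respect the intended correspondence, typically by showing that deviating assignments always admit a blocking triple.

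\medskip
\noindent\textbf{Main obstacle.} I expect the hardest part to be ensuring that \emph{no spurious blocking triples} arise in the \uda\ instance---precisely because stability in \uda\ is weaker than one might naively expect, as a triple blocks only if the student is \emph{unsatisfied at the program} \emph{and} the university-layer also fails to dominate. This asymmetry (a university can reject a student on account of a better student assigned to a \emph{different} program of the same university) makes it easy to accidentally create or destroy blocking triples when wiring the gadgets together. The key to overcoming this will be to keep each university attached to a single controlled program in the gadget, or to make universities's preferences agree with their programs' preferences wherever the coupling matters, so that conditions (ii) and (iii) collapse to a single effective condition that mirrors ordinary stability in the source instance. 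Finally I would confirm that the reduction is polynomial and that the maximum-size stable assignment has the claimed value if and only if the source instance has a stable matching of the target size, yielding $\NP$-hardness even under the unit-capacity restriction.
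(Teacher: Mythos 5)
Your high-level plan has the right shape---reduce from a hard variant of stable marriage, keep unit capacities, preserve sizes---and you correctly sense that the two-layer admission structure is the only possible source of hardness. But there are two genuine gaps, and they are linked. First, the source problem you name, \textsc{Max-SMI} (incomplete lists, \emph{strict} preferences), is not $\NP$-hard: by the Rural Hospitals theorem all stable matchings of such an instance have the same size, so a maximum one is found by Gale--Shapley. The hardness you need lives in the variants \emph{with ties}, e.g.\ \textsc{Max-SMTI} or the problem of deciding whether a complete stable matching exists (\textsc{com-smti})~\cite{MIIMM02}; the paper reduces from the latter, restricted to instances where every tie is on the women's side and has length two. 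Second---and this is the missing idea---since \emph{all} preferences in \uda\ are strict, any reduction from a problem with ties must say how a tie is simulated by strict lists. The paper's gadget does exactly this: for a woman $w_i$ whose list is a tie $\{m_k,m_l\}$, it creates a university $u_i$ with a single program $p_i$ (both of capacity one) and gives them \emph{opposite} strict rankings, $s_k \succ_{u_i} s_l$ and $s_l \succ_{p_i} s_k$. Then neither $(s_k,u_i,p_i)$ nor $(s_l,u_i,p_i)$ can block an assignment in which the other student occupies $p_i$, because one of the two gatekeepers always dominates---precisely the ``veto'' intuition you state at the outset, but never turn into a gadget.

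The trouble is that your proposed resolution of the ``main obstacle'' goes in the opposite direction: you suggest making the universities' preferences agree with their programs' preferences wherever the coupling matters, so that conditions~(ii) and~(iii) of Definition~\ref{def:uda-stability} collapse into one. With unit capacities and one program per university, that collapse makes the \uda\ instance stability-equivalent to an ordinary strict-preference \textsc{SMI} instance---exactly the polynomial case---so no size-based hardness can be extracted from it. The construction must instead keep the disagreement between university and program on the tied pairs (and agreement only where the woman's list is strict); the spurious-blocking worry you raise is then handled not by aligning preferences but by observing that in any blocking triple the woman-side pair $(u_i,p_i)$ must come from a strict-list woman, since for a tied woman one of the two agents always dominates. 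Naming the correct source problem and making this tie-simulation gadget explicit is what turns your outline into the paper's proof.
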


To handle this computational difficulty, we formulate an integer program whose solutions are exactly the stable assignments  for an instance~$I=(S,U,P,\capacuda,\capacquota,(\succ_a)_{a \in S \cup U \cup P})$ of \uda; see Appendix~\ref{sec:app-uda-ILP}.

\subsection{\texorpdfstring{An XP Algorithm for \udamaxwshbm}{An XP Algorithm for UDA-MaxW-SHbM}}

Let us describe an algorithm for an instance
$I=(\HH,\capac,\weight,(\succ_v)_{v \in V})$ of \udamaxwshbm, where $\HH = (V,\EE)$ over vertex set $V=S\cupdot U\cupdot P$ is a hypergraph that satisfies the \uda\ property.  Let $U=\{u_1,\dots,u_{\m}\}$. We can assume that $P=\bigcupdot_{i\in [\m]}P_i$, where the vertices $p\in P_i$ satisfy that for any $p\in e\in \EE$, we have $\{ u_i,p\} \subset e$.

The pseudocode of the algorithm is given in Algorithm~\ref{alg:uda-xp}.
As a first step, we guess a \emph{strategy} $\sigma : U\to \EE \cup \{ \nomatch \}$, that is, for each $u_i\in U$ we guess whether it will be saturated in the solution, and if so, what will be the worst hyperedge adjacent to $u_i$. 
Then, we create an instance $I'_{\sigma}$ of \maxwsbm, the maximum-weight bipartite stable $\capac$-matching problem.
The underlying graph of $I'_{\sigma}$ will be the bipartite graph $G_\sigma=(S \cup P,E_{\sigma})$ where $E_\sigma$ contains an edge~$\{s,p\}$ if (i) there is a (unique) vertex $u\in U$ such that $e=\{ s,u,p\} \in \EE$ and (ii) for that vertex $u$, $e \succeq_{u} \sigma(u)$, where we assume that $e \succeq_{u} \nomatch$ always holds.  
The preferences of the agents in $I'_{\sigma}$ are the projection of their original preferences. This is well defined, as each edge~$\{s,p\} \in E_\sigma$ uniquely defines a hyperedge $\{ s,u,p\} \in \EE$; 
we further set $\capac '(s)=\capac (s)$, $\capac'(p)=\capac(p)$, 
and $\weight' (\{ s,p\}) = \weight (\{ s,u,p\})$ so that 
capacities and weights remain the same.

Let us say that a $\capac'$-matching $M'$ in~$I'_{\sigma}$ is \emph{valid} if, $M:= \{ \{ s,u,p\}\in \EE\mid \{ s,p\} \in M'\}$ satisfies that for each $u\in U$, $|M(u)|=\capac (u)$ whenever $\sigma(u) \neq \nomatch$, and 
$ |M(u)|<\capac (u)$ otherwise. 

For a given strategy~$\sigma$, the algorithm proceeds by computing a maximum-weight stable $\capac'$-matching~$M'$ in~$I'_{\sigma}$. If $M'$ is valid, then it adds the $\capac$-matching~$M:= \{ \{ s,u,p\}\in \EE\mid \{ s,p\} \in M'\}$ corresponding to~$M'$ to a set $\stabset$. If $M'$ is not valid, then we discard the guessed strategy~$\sigma$. Finally, we output a maximum-weight stable $\capac$-matching from $\stabset$.

\begin{algorithm}
\caption{\udamaxwshbm($\HH=(V,\EE),\capac,\weight,(\succ_v)_{v \in V})$ where $\HH$ defined over $S \mycupdot U \mycupdot P$ has the \uda\ property}
\label{alg:uda-xp}
\begin{algorithmic}[1]
\State $\stabset:= \emptyset$
\For{all strategies $\sigma : U\to \EE\cup \{ \nomatch \}$
}
\State $E_{\sigma}:=\emptyset$
\For{each $e=\{s,u,p\} \in \EE$}
\If{$e\succeq_{u}\sigma (u)$} %
$E_{\sigma}:= E_{\sigma}\cup \{ \{s,p\}\}$
\EndIf
\EndFor
\State Set $I'_{\sigma} := ((S\cup P, E_{\sigma}), \capac', \weight',(\succ'_v)_{v\in S\cup P} )$ where 
$b'$, $\weight'$, and $\succ'_v$ are obtained by projection from $b$, $~\weight$, and $\succ_v$, respectively.
\State %
Compute a maximum-weight stable $\capac'$-matching $M'$ in $I'_{\sigma}$.
\State  $M:=\{ \{s,u,p\} \in \EE\mid \{ s,p\} \in M'\} $
\If{$|M(u)| = \capac (u)$ whenever $\sigma (u)\ne \nomatch$, and $|M(u)|<\capac (u)$ otherwise for all $u\in U$}
\State $\stabset:= \stabset\cup \{ M\}$
\EndIf
\EndFor
\State \textbf{return} a maximum-weight $\capac$-matching $M$ from $\stabset$.
\end{algorithmic}

\end{algorithm}

\begin{theorem}
    \label{thm:constant_uni}
  \udamaxwshbm\ can be solved in $\mathcal{O}((\Delta_U+1)^{|U|})\cdot \poly( |\EE|)$ time, where $U$ correspond to the class of universities and $\Delta_U$ is the maximum degree of a vertex $u\in U$.  
\end{theorem}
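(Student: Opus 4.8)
The plan is to establish correctness and the running time of Algorithm~\ref{alg:uda-xp} by showing that the set $\stabset$ it constructs contains exactly the stable $\capac$-matchings of the input instance, from which the maximum-weight guarantee follows immediately. The running time is the easy part: there are at most $(\Delta_U+1)^{|U|}$ strategies $\sigma$, since for each university~$u_i$ the strategy either sets $\sigma(u_i)=\nomatch$ or picks one of the at most~$\Delta_U$ hyperedges incident to~$u_i$; for each guess, constructing $G_\sigma$ and solving the resulting \maxwsbm\ instance takes $\poly(|\EE|)$ time by the polynomial-time algorithm of Fleiner~\cite{Fleiner03-bmatching} cited after the definition of \maxwshbm. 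Multiplying gives the claimed bound.

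The substantive work is proving the two inclusions that establish $\stabset$ equals the set of stable $\capac$-matchings. First I would argue that every $M$ added to $\stabset$ is stable in~$I$. Fix a strategy~$\sigma$ producing a valid stable $\capac'$-matching~$M'$ with corresponding $M$. Suppose a hyperedge $e=\{s,u,p\}\notin M$ blocks~$M$; I need to derive a contradiction. The key case split is on whether $e\succeq_u \sigma(u)$. If $e\succeq_u\sigma(u)$, then $\{s,p\}\in E_\sigma$, so the blocking of $e$ at $s$ and $p$ translates into $\{s,p\}$ blocking~$M'$ in~$I'_\sigma$, contradicting stability of~$M'$; here I must check that domination in~$\HH$ at~$s$ and at~$p$ corresponds exactly to domination in~$G_\sigma$, which holds because the capacities and preferences at~$s,p$ are projected unchanged. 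If instead $\sigma(u)\succ_u e$, then by validity $u$ is saturated (since $\sigma(u)\neq\nomatch$) and every hyperedge in $M(u)$ is at least as good as $\sigma(u)$ under the extended preference $\succ_u^{\HH}$, hence strictly better than~$e$; this means $M$ dominates~$e$ at~$u$, so $e$ does not block. This is the step that requires care, because the extended university preferences $\succ_u^{\HH}$ compare triples by student first and program second, and I must confirm that $e\succ$-domination at~$u$ in the \shbm\ sense (Definition~\ref{def:shm-stability}) really follows from $\sigma(u)$ being the worst matched hyperedge together with the construction of~$E_\sigma$.

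For the reverse inclusion, I would take an arbitrary stable $\capac$-matching~$M$ of~$I$ and exhibit the strategy~$\sigma$ under which the algorithm recovers it. Define $\sigma(u_i)$ to be the worst hyperedge of $M(u_i)$ under $\succ_{u_i}^{\HH}$ if $u_i$ is saturated in~$M$, and $\sigma(u_i)=\nomatch$ otherwise. With this choice, each matched hyperedge of $M$ satisfies the filter condition, so $M':=\{\{s,p\}:\{s,u,p\}\in M\}$ is a $\capac'$-matching in~$I'_\sigma$, and validity holds by construction. I would then verify that $M'$ is stable in~$I'_\sigma$: any edge $\{s,p\}\in E_\sigma$ blocking~$M'$ corresponds to a triple $e=\{s,u,p\}$ with $e\succeq_u\sigma(u)$, and the blocking at $s$ and $p$ plus the filter condition at~$u$ would make $e$ block~$M$, a contradiction. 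Finally, since the algorithm computes a \emph{maximum-weight} stable $\capac'$-matching for this~$\sigma$, the $M$ it produces has weight at least $\weight(M')=\weight(M)$, so no stable $\capac$-matching is heavier than the algorithm's output; combined with the first inclusion showing every output is a genuine stable $\capac$-matching, this proves optimality. The main obstacle throughout is the bookkeeping around the extended preference $\succ_u^{\HH}$ at universities and ensuring the correspondence between domination in the tripartite hypergraph and in the projected bipartite instance is tight in both directions.
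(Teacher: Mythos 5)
Your overall strategy is the same as the paper's: guess for each university its worst matched hyperedge (or $\nomatch$), project to a bipartite \maxwsbm\ instance, and prove the two directions of the correspondence (every valid stable $\capac'$-matching yields a stable $\capac$-matching, and the optimal stable $\capac$-matching is recovered under the right guess). Your forward direction, with the case split on $e\succeq_u\sigma(u)$ versus $\sigma(u)\succ_u e$, matches the paper's Claim~\ref{clm:const-uni} essentially verbatim, and your running-time analysis is the same.

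However, there is a genuine gap in your reverse direction. You show that for the strategy~$\sigma$ induced by an optimal stable $\capac$-matching~$M$, the projection $M'=\{\{s,p\}:\{s,u,p\}\in M\}$ is a \emph{valid} stable $\capac'$-matching in~$I'_\sigma$, and you then conclude that the algorithm's output for this~$\sigma$ has weight at least $\weight(M')=\weight(M)$. But the algorithm does not compute~$M'$; it computes \emph{some} maximum-weight stable $\capac'$-matching~$M''$ in~$I'_\sigma$, and it adds the corresponding $\capac$-matching to~$\stabset$ \emph{only if $M''$ passes the validity test} --- otherwise the strategy is discarded entirely. Your argument gives no reason why $M''$ (which may differ from~$M'$) should be valid: validity is a condition on the degrees $|M''(p)|$ summed over the programs of each university, and weight maximality says nothing about degrees. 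The paper closes exactly this hole with the many-to-many Rural Hospitals Theorem~\cite{Fleiner03-bmatching}: all stable $\capac'$-matchings of~$I'_\sigma$ saturate every vertex of $S\cup P$ to the same degree, hence since one valid stable $\capac'$-matching exists (namely~$M'$), \emph{every} stable $\capac'$-matching of~$I'_\sigma$ is valid, and in particular the one the algorithm computes. Without this (or an equivalent degree-invariance argument), your proof of optimality does not go through.
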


\begin{proof}

We show that in the end of Algorithm~\ref{alg:uda-xp}, $\stabset$ will contain a maximum-weight stable $\capac$-matching for~$I$. 

\begin{claim}
\label{clm:const-uni}
    If $M'$ is a valid stable $\capac'$-matching in~$I'_{\sigma}$, then the corresponding $\capac$-matching $M=\{ {\{s,u,p\}  \in \EE} \mid \{ s,p\} \in M'\} $ is feasible and stable in~$I$. 
\end{claim}
\begin{claimproof}
Let $M'$ be a valid stable $\capac'$-matching in~$I'_\sigma$ and let $M$ be defined as in the statement of the claim. Then, $M$ is a feasible $\capac$-matching in~$I$: for $v\in S\cup P$, $|M(v)|\le \capac (v)$ follows from $\capac '(v)=\capac (v)$, and the capacities of the agents $u\in U$ are respected due to the validity of~$M'$. It remains to show that $M$ is stable in~$I$.

For a hyperedge $e=\{ s,u,p\}$, let $e_{\spr}=\{ s,p\}$.
Suppose on the contrary that a hyperedge $e=\{ s,u,p\} $ blocks~$M$. 
First, if $e_{\spr} \in E_\sigma$, then as $p$ (respectively, $s$) is either unsaturated in both~$M$ and~$M'$, or $e\succ_p f$ ($e\succ_s f'$) for some $f\in M(p)$ ($f'\in M(s)$) implying $e_{\spr}\succ'_pf_{\spr}\in M'$ ($e_{\spr}\succ'_sf'_{\spr}\in M'$); 
 this means that the edge $e_{\spr}$ blocks~$M'$, a contradiction. Second, if $\{s,p\}\notin E_\sigma$, then $\sigma(u) \succ_u e$. In particular, $\sigma (u)\ne \nomatch$ and due to the validity of~$M'$ we know that $u$ is saturated in~$M$. 
Moreover, by construction, any edge~$\{s,p\}\in M'$ satisfies $\{ s,u,p\} \succeq_{u} \sigma (u)$. Therefore, $u$ is saturated in~$M$ with hyperedges that are not worse for~$u$ than~$\sigma (u)$.
Hence $e$ cannot block~$M$, a contradiction.
\end{claimproof}

\begin{claim}
\label{clm:constuni-rightguess}
    Suppose that $M$ is a stable $\capac$-matching in~$I$. Define $\sigma:U \rightarrow \EE \cup \{\nomatch\}$ such that $\sigma(u)$ is the worst hyperedge in~$M(u)$ according to~$\succ_u$ if $u$ is saturated in~$M$, and $\nomatch$ otherwise. Then every stable $\capac'$-matching $M'$ in $I'_{\sigma}$ is valid.
\end{claim}
\begin{claimproof}
Let $M'$ be the $\capac'$-matching in $I'_{\sigma}$ defined as $M'=\{ \{s,p\}:\{s,u,p\}\in M\}$. By the definition of $\sigma$, it holds that $M' \subseteq E_{\sigma}$, and moreover, $M'$ is valid by construction. We claim that $M'$ is stable in~$I'_{\sigma}$. 

To show that $M'$ is stable in $I'_{\sigma}$, assume for the sake of contradiction that some edge $\{s,p\} \in E_\sigma$ blocks $M'$. Then, both $s$ and $p$ are either unsaturated or prefer $\{ s,p\}$ to some edge in~$M'$. 
Since $M$ is stable, this means that the hyperedge $e=\{s,u,p\}$ must be dominated at $u$. Therefore, we get that $u$ is saturated in $M$ and $\sigma (u)\succ_u e$. However, then the edge~$\{s,p\}$ cannot be contained in~$E_\sigma$, a contradiction.

Due to the many-to-many extension of the well-known Rural Hospitals Theorem (see e.g., \cite{Fleiner03-bmatching}), for any stable $\capac'$-matching $M''$ in~$I'_{\sigma}$, we have $|M''(v)|=|M'(v)|$ for all $v\in S\cup P$. Hence, all stable $\capac'$-matchings of $I'_{\sigma}$ must be valid, because for any stable $\capac'$-matching $N'$ and its corresponding $\capac$-matching~$N$ in~$I$, we have $|N(u_i)|=\sum_{p\in P_i}|N'(p)|$ for $u_i\in U$ by the \uda\ property. This concludes the proof.
\end{claimproof}

As we know that a stable $\capac$-matching for~$I$ exists, Claims \ref{clm:constuni-rightguess} guarantees that 
by iterating over all possible strategies, the algorithm is bound to find the strategy $\sigma$ corresponding to the maximum-weight stable $\capac$-matching $M$. By Claims~\ref{clm:const-uni} and \ref{clm:constuni-rightguess},  $M'=\{ \{ s,p\} \mid \{ s,u,p\} \in M\}$ will be a valid stable $\capac'$-matching for $I_{\sigma}$, and since any valid $\capac'$-matching $N'$ gives a stable $\capac$-matching $N$ in $I$ with $\weight (N')=\weight (N)$, the algorithm indeed puts $M$ or a stable $\capac$-matching with the same weight in $\stabset$.

Since there are at most $(\Delta_U+1)^{|U|}$ possible strategies to check, and for each one them, the instance~$I'_\sigma$ and a maximum-weight stable $\capac$-matching in~$I'_\sigma$ can be computed in $\poly(|\EE|)$ time by a result of Fleiner~\cite{Fleiner03-bmatching}, the claimed running time also follows. \end{proof}

\paragraph{Further results.} On the positive side, we define a relaxed notion of stability in \uda, called \emph{half-stability}, which intuitively means that for a triple to block, if the program is saturated, then the university and the program should be able to agree on a student that can be rejected; we show that we can find a half-stable assignment in polynomial time.  Due to space constraints, this result is deferred to Appendix~\ref{sec:half-stab}. 
\begin{Rem}\label{rem:uda-allone}
If $\capacuda (u)=1$ for all universities, then a simple deferred acceptance algorithm between the students and universities produces a stable assignment. This statement is explained further and proved formally in Appendix \ref{sec:uda-allone}. 
\end{Rem}
\section{\texorpdfstring{\shbm\ for Unimodular Hypergraphs}{SHbM for Unimodular Hypergraphs}}
\label{sec:algorithms}

We proceed to study \tushbm\ in further classes of unimodular hypergraphs.

\subsection{Laminar Hypergraphs}
\label{sec:laminar}
We construct a polynomial-time algorithm that finds a stable $\capac$-matching if the hypergraph $\HH=(V,\mathcal{E)}$ is laminar. As laminar hypergraphs are unimodular, the existence of a stable $\capac$-matching is guaranteed.
We present our algorithm for this problem in Algorithm~\ref{alg:laminar}.

\begin{algorithm}
\caption{Laminar-\shbm%
$(\HH,\capac,(\succ_v)_{v \in V})$ where $\HH=(V,\EE)$ is  laminar}
\label{alg:laminar}
\begin{algorithmic}[1]
\State $M:=\emptyset$  \Comment{$M$ is the current $\capac$-matching}
\State $\mathcal{C}:=\emptyset$   \Comment{$\mathcal{C}$ denotes the checked hyperedges}
\While{$\EE\setminus \mathcal{C} \ne \emptyset$}
\State Pick an inclusion-wise minimal hyperedge $f$ in $\EE\setminus \mathcal{C}$.\label{line:pickedge}
\State $\mathcal{C}:=\mathcal{C}\cup \{f\}$

\If{$f$ blocks~$M$}

    \State Set $M:=M\cup \{f\}$\label{line:addedge} and initialize $\remsetL_f:=\emptyset$.
    \ForAll{$v \in V$ with $|M(v)|>\capac(v)$}
        \State Add to~$\remsetL_f$ the hyperedge in~$M(v)$ that is worst for~$v$.\label{line:compute-Df}
    \EndFor
    \State Set $\remsetFIN_f:=\{e:e \textrm{ is inclusion-wise maximal in } \remsetL_f\}$ \label{line:compute-Dmaxset} and $M:=M \setminus \remsetFIN_f$\label{line:adjustM}.
\EndIf \label{line:ifend}
\EndWhile
\State \textbf{return} $M$
\end{algorithmic}
\end{algorithm}

\begin{theorem}
\label{thm:laminar-shm-poly}
\lamshbm\ is solved in polynomial time by Algorithm~\ref{alg:laminar}. 
\end{theorem}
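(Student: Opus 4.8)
The plan is to prove three things about Algorithm~\ref{alg:laminar}: that it terminates in polynomial time, that the returned set $M$ is always a feasible $\capac$-matching, and that $M$ is stable. Polynomial running time is immediate: the set $\CC$ of checked edges grows by exactly one element per iteration of the while loop and no edge is ever removed from $\CC$, so there are precisely $|\EE|$ iterations, each performing only a blocking test, a scan for oversaturated vertices, and the computation of the worst edges and the inclusion-maximal edges of $\remsetL_f$ --- all doable in $\poly(|V|+|\EE|)$ time. The substance lies in the two correctness claims, for which I would first record a structural consequence of the processing order: since line~\ref{line:pickedge} always selects an inclusion-minimal unchecked edge, whenever $f$ is picked every edge $g$ already in $M$ satisfies $f \not\subsetneq g$, so by laminarity each such $g$ is either contained in $f$ or disjoint from $f$.

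For feasibility I would argue by induction that $M$ is a feasible $\capac$-matching at the end of every iteration. The only nontrivial case is when $f$ blocks and is added on line~\ref{line:addedge}. Then every oversaturated vertex $v$ lies in $f$ and has $|M(v)|=\capac(v)+1$ exactly (feasibility held before, and $f$ raises its degree by one). The family $\remsetL_f$ is a subfamily of the laminar family $\EE$, so its inclusion-maximal members $\remsetFIN_f$ are pairwise disjoint; and for each oversaturated $v$ the worst edge $e_v\in\remsetL_f$ is contained in some $e'\in\remsetFIN_f$ (unique by disjointness), whence $v\in e_v\subseteq e'$. Hence removing $\remsetFIN_f$ on line~\ref{line:adjustM} decreases the degree of each oversaturated vertex by exactly one, restoring saturation, while it can only lower other degrees; so no vertex is left oversaturated. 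I would also note $f\notin\remsetFIN_f$: since $f$ blocks and each oversaturated $v$ was saturated before the addition, there is some $g\in M(v)$ with $f\succ_v g$, so $f$ is never the worst edge at $v$; thus $f$ survives and $f\in M$ at the end of its iteration.

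The heart of the argument is a monotonicity lemma, which I expect to be the main obstacle and the part requiring the most care. I would show that once a vertex $v$ is saturated at the end of some iteration it stays saturated forever, and moreover its worst matched edge $t_v$ only improves (weakly) with respect to $\succ_v$. The proof is a per-iteration case analysis: if $v\notin f$ then, using the structural observation that every edge of $\remsetFIN_f$ is $\subseteq f$, the set $M(v)$ is untouched; if $v\in f$ then $v$ becomes oversaturated and loses exactly one edge $e'$ with $e'\succeq_v t_v$, while $f\succ_v t_v$ holds because $f$ blocks at the previously-saturated $v$. Tracking the worst element through this single addition and single deletion shows $t_v$ cannot decrease. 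The delicate point is that a vertex which is unsaturated at the start of an iteration may be transiently saturated and then desaturated within the same iteration; the observation to make is that such a vertex was unsaturated to begin with and so was never serving as a ``dominator'', which is exactly why this transient behaviour is harmless.

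Finally I would deduce stability from the invariant that, at the end of every iteration, every checked edge $e\notin M$ is dominated by $M$ at some vertex $v\in e$ (that is, $v$ is saturated and every edge of $M(v)$ is preferred to $e$). Three cases establish and then maintain this. If $e$ does not block when checked, it is dominated at some saturated $v$ with $t_v\succ_v e$; by the monotonicity lemma $v$ stays saturated and $t_v$ only improves, so this domination persists (and $e$, being checked, is never re-added). If $e$ blocks, it is added; the only way it can later leave $M$ is as a member of $\remsetFIN_{f'}$ in a later iteration, where it equals the worst edge $e_{v'}$ at an oversaturated vertex $v'$ --- after its removal $v'$ is saturated and every remaining edge of $M(v')$ is strictly preferred to $e$, so $e$ is dominated at $v'$, and again monotonicity makes this permanent. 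Since every edge is eventually checked, at termination every edge outside $M$ is dominated, i.e.\ no edge blocks $M$, so $M$ is stable.
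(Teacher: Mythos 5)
Your proof is correct and follows essentially the same route as the paper's: the same structural claim about $\remsetFIN_f$ (pairwise disjoint subsets of $f$ with $f \notin \remsetFIN_f$), the same saturation-monotonicity lemma, and the same two-case stability argument (edge dominated when checked versus edge removed as someone's worst edge). Your treatment is somewhat more explicit on the degree accounting and the transient-saturation subtlety, but these are refinements of, not departures from, the paper's argument.
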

\begin{proof}
First, let us show that the set returned by Algorithm~\ref{alg:laminar} is a $\capac$-matching.
Observe that whenever some hyperedge~$f$ is added to~$M$ on line~\ref{line:addedge}, the set~$\remsetL_f$ constructed on line~\ref{line:compute-Df} covers every vertex~$v$ that became oversaturated as a consequence of adding~$f$. Therefore, $\remsetFIN_f$ also covers these vertices, so  
after the removal of all hyperedges of~$\remsetFIN_f$ from the matching, the resulting set again satisfies all capacity constraints. Therefore, the algorithm indeed returns a  $\capac$-matching~$M$.

It remains to show that $M$ is stable. Consider the iteration of lines~\ref{line:pickedge}--\ref{line:ifend} during Algorithm~\ref{alg:laminar} where some hyperedge~$f$ is picked on line~\ref{line:pickedge}, and let  $M_f$ and $M'_f$ denote the current $\capac$-matching at the beginning and at the end of this iteration, respectively.
We need the following claim. 
\begin{claim}
    For any edge~$f$ added to the $\capac$-matching at line~\ref{line:addedge}, 
    the set $\remsetFIN_f$ is a family of pairwise disjoint hyperedges  whose union is a subset of~$f$, with $f \notin \remsetFIN_f$.
\end{claim}
\begin{claimproof}
Since the algorithm always picks an inclusion-wise minimal hyperedge on line~\ref{line:pickedge}, the laminarity of~$\HH$ implies that every hyperedge that had been checked (i.e., put into~$\CC$) before $f$ is either a subset of~$f$ or is disjoint from it. In particular, since the algorithm only puts hyperedges of the current $\capac$-matching into~$\remsetL_f$, we get that all hyperedges in~$\remsetL_f$ are contained in~$f$ 
(recall that $f$ shares at least one oversaturated vertex with each hyperedge in~$\remsetL_f$). 
Since $\remsetFIN_f$ contains only the inclusion-wise maximal sets from~$\remsetL_f$, the laminarity of~$\remsetL_f \subseteq \EE$ implies that the hyperedges in~$\remsetFIN_f$ are pairwise disjoint.     

To see $f \notin \remsetFIN_f$, it suffices to observe the following. Since $f$ blocks the current $\capac$-matching~$M_f$, for each~${v \in V}$ that is oversaturated in~$M_f \cup \{f\}$ we know that $f$ cannot be the worst hyperedge incident to~$v$ in~$M_f(v) \cup \{f\}$.
\end{claimproof}

Consider now the iteration during which $f$ is picked on line~\ref{line:pickedge}. If the algorithm adds~$f$ to the current $\capac$-matching~$M_f$ on line~\ref{line:pickedge}, then $M'_f=M_f \cup \{f\} \setminus \remsetFIN_f$, otherwise $M'_f=M_f$.
Hence, the above claim implies
$|M'_f(v)| \geq |M_f(v)|$ for each $v \in V$,
since $v$ can only be incident to at most one hyperedge in~$\remsetFIN_f$.
Moreover, if $|M_f(v)|=\capac(v)$ for some~$v \in V$, then the worst hyperedge in~$M'_f(v)$ is weakly preferred by~$v$ to the worst hyperedge in~$M_f(v)$: 
this is obvious if $v \notin f$, and if $v \in f$, then $v$ prefers~$f$ to at least one hyperedge in~$M_f(v)$, so $v$'s worst hyperedge cannot be worse than before, as $f$ is the only new hyperedge in $M_f'(v)$ not in $M_f(v)$ already.
Hence, whenever some vertex becomes saturated with hyperedges better than some $e \in \EE$, then this remains true during the remainder of the algorithm.
We refer to this observation as the \emph{saturation-monotonicity} of the algorithm.

To prove the stability of~$M$, consider an arbitrary edge~$f \in \EE \setminus M$ and the iteration during which~$f$ is examined at line~\ref{line:pickedge}. 
If $f$ is not added to~$M_f$ on line~\ref{line:addedge}, then some vertex~$v$ incident to~$f$ must be saturated in~$M_f$ with hyperedges that are all preferred to~$f$ by~$v$. By the saturation-monotonicity of the algorithm, this remains true for the returned $\capac$-matching~$M$, so $f$ cannot block~$M$. 
By contrast, if $f$ is added to~$M_f$ on line~\ref{line:addedge}, then $f$ must have been removed at some point on line~\ref{line:adjustM} during a later iteration as a consequence of adding some hyperedge~$e$ to the $\capac$-matching~$M_{e}$ on line~\ref{line:addedge} for which $f \in \remsetFIN_{e}$.
By our construction of~$\remsetFIN_{e}$, we know that $f$ is the worst hyperedge for some vertex $u \in V$ in~$M_e(u)$, so after removing $\remsetFIN_{e}$ from the $\capac$-matching~$M_e \cup \{e\}$, vertex~$u$ is saturated with hyperedges that are all preferred by~$u$ to~$f$. Again, by the saturation-monotonicity of the algorithm, this remains true for~$u$ during the algorithm, so $f$ cannot block~$M$. 
Thus, $M$ is indeed a stable $\capac$-matching for~$\HH$.

In every iteration, a hyperedge of~$\EE \setminus \CC$ gets added to~$\CC$, so there are $|\EE|$ iterations. As each iteration  can be performed in linear time, the running time is polynomial. 
\end{proof}

We complement Theorem~\ref{thm:laminar-shm-poly} by a hardness result for a very restricted weight function.
\begin{restatable}[\linkproof{sec:app-proof-laminar-weighted-nphard}]{theorem}{thmrcpconstrainednphard}
\label{thm:rc1p-constrained-nphard}    
\label{thm:laminar-weighted-nphard}    
The \maxwlamshm\ problem is $\NP$-complete  on laminar hypergraphs, even if all edges have weight~$0$ with the exception of a single edge of weight~$1$. 
\end{restatable}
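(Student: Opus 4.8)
The plan is to prove $\NP$-completeness by reduction from a well-known $\NP$-complete problem, and the very specific weight structure (a single edge of weight $1$, all others of weight $0$) is the crucial hint: a maximum-weight stable matching has weight $1$ if and only if there exists \emph{some} stable matching containing that special edge, and weight $0$ otherwise. Thus the theorem is equivalent to showing that deciding whether a fixed edge $e^\star$ can be contained in \emph{some} stable matching of a laminar hypergraph is $\NP$-complete. Membership in $\NP$ is immediate: a stable matching is a polynomial-size certificate, and by Definition~\ref{def:shm-stability} stability (no blocking edge) can be checked in polynomial time, as can membership of $e^\star$. So the entire difficulty lies in hardness.

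For the hardness direction I would reduce from a problem already known to be $\NP$-hard in the stable-matching literature; the natural candidate is a variant where one must decide existence of a stable matching satisfying some constraint. A clean source would be a satisfiability-style problem such as (one-in-three) \textsc{3-SAT}, or alternatively to piggyback on the construction used for Theorem~\ref{thm:max-uda}, since that already establishes $\NP$-hardness of an optimization question over stable assignments in a unimodular setting. First I would design gadgets---vertices with carefully chosen capacities and preference lists---so that the laminar family of hyperedges encodes the clause/variable structure of the SAT instance. The key constraint to respect is \textbf{laminarity}: any two hyperedges must be nested or disjoint, which severely limits how gadgets can share vertices. I would therefore build a rooted laminar tree of hyperedges, using small ``local'' hyperedges deep in the tree to encode truth assignments of variables, and larger enclosing hyperedges to propagate or enforce clause satisfaction, with the special edge $e^\star$ placed so that it is stable precisely when a consistent satisfying assignment exists.

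The argument would then split into the two standard directions. For completeness (satisfiable $\Rightarrow$ weight $1$ achievable), given a satisfying assignment I would exhibit an explicit stable matching $M$ containing $e^\star$, verifying directly that no hyperedge blocks $M$: for each potential blocking edge, point to an incident saturated vertex whose matched edges all dominate it. For soundness (weight $1$ achievable $\Rightarrow$ satisfiable), I would argue that any stable matching $M$ with $e^\star \in M$ forces the variable gadgets into configurations encoding a consistent truth assignment, and that stability (absence of blocking edges in the clause gadgets) forces every clause to be satisfied; the contrapositive---an unsatisfied clause yields a blocking edge---is the technical heart.

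The main obstacle I anticipate is reconciling the expressive power needed to encode SAT with the rigid nesting imposed by laminarity. In a general hypergraph one can let gadgets overlap arbitrarily to transmit information, but here every pair of hyperedges must be comparable under inclusion or disjoint, so ``crossing'' dependencies between variables and clauses cannot be represented directly. Overcoming this will likely require encoding the clause--variable incidence not through shared vertices across gadgets but through the \emph{nesting depth} and through capacity values at the shared ancestor vertices, so that the domination relation of Definition~\ref{def:shm-stability} carries the logical constraints up and down the laminar tree. Getting the preference lists and capacities to make a single blocking edge appear exactly when a clause is violated, while keeping everything laminar, is where the careful work lies.
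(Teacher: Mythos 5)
Your reformulation (weight~$1$ achievable if and only if some stable matching contains the distinguished edge~$e^\star$), the $\NP$-membership argument, and the choice of a SAT-style reduction all match the paper's approach. But the proposal stops exactly where the actual proof begins: you never produce a construction, and the direction you sketch for overcoming the laminarity obstacle would not work. First, you plan to use ``carefully chosen capacities'' and to encode clause--variable incidence ``through capacity values at the shared ancestor vertices''---but the statement concerns \maxwlamshm, i.e.\ the \emph{matching} problem, where every vertex implicitly has capacity~$1$; capacities are simply not available as a gadget tool here, and a construction relying on them would at best prove hardness of the $\capac$-matching variant, which is weaker than the claim. Second, your premise that laminarity prevents clause and variable gadgets from sharing vertices, so that incidence must be routed through ``nesting depth,'' misses the idea that actually makes the reduction go through.

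The paper resolves the tension in a much more direct way. Take vertices $\{z\} \cup \{v_i,\bar v_i : i \in [n]\}$; one edge $e_z=\{z\}$ of weight~$1$; for each variable two \emph{parallel} edges $e_i,\bar e_i$ on the pair $\{v_i,\bar v_i\}$; and for each clause $c_j$ an edge $f_j$ whose vertex set is \emph{all} of~$V$. Laminarity is then trivial (every edge is either contained in~$V$ or equal to it, and the variable pairs are pairwise disjoint), and the clause--variable incidence is encoded entirely in the \emph{preference lists}: $v_i$ ranks the clause edges containing the literal $x_i$ below~$e_i$ but above~$\bar e_i$ (symmetrically for~$\bar v_i$ and~$\bar x_i$), while $z$ prefers every~$f_j$ to~$e_z$. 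Then $e_z \in M$ forces each $f_j$ to be dominated at some $v_i$ or~$\bar v_i$, which by the placement of $f_j$ in that vertex's list happens exactly when the truth value selected by the variable gadget satisfies~$c_j$; conversely, a satisfying assignment yields a stable matching containing~$e_z$. So the lesson is the opposite of your anticipated obstacle: laminarity only forbids \emph{crossing} edges, and making every clause edge maximal (equal to the whole vertex set) avoids crossings for free, letting all gadgets share all vertices; the logical structure then lives entirely in the preference orders, not in the incidence structure or in capacities.
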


\subsection{Subpath Hypergraphs}
\label{sec:rc1p}
 Moving our focus to instances of \shbm\ where the underlying hypergraph~$\HH$ is a \rcp\ hypergraph, we present an algorithm for the \maxwrcshbm\ problem that runs in polynomial time if 
the maximum size of a hyperedge in~$\HH$ is bounded by a constant.

\newcommand{\maxW}{\omega}
\begin{theorem}
\label{thm:RC1P-fixed-length}
 \maxwrcshbm\
can be solved in $O((\Delta (\capac_{\max}+1))^{\ell_{\max}} |\EE| \log \maxW)$ time for an instance ($\HH=(V,\EE),\capac,\weight,(\succ_v)_{v\in V})$, 
where $\capac_{\max}$ is the maximum capacity of a vertex, $\ell_{\max}$ is the maximum size of an edge in~$\EE$, $\Delta$ is the maximum degree of a vertex, and $\maxW$ is the maximum absolute value of $\weight (e)$ over~$e\in \EE$.
\end{theorem}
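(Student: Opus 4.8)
The plan is to exploit the interval structure of subpath hypergraphs and solve \maxwrcshbm\ by a left-to-right dynamic program along the defining path. Fix the path~$F$ and order the vertices $v_1,\dots,v_n$ accordingly; every hyperedge $e\in\EE$ is then an interval $[a_e,b_e]$ of consecutive vertices with $b_e-a_e+1\le \ell_{\max}$, and every vertex lies in at most $\Delta$ such intervals. Before designing the DP I would first reformulate stability in terms of \emph{thresholds}, exactly in the spirit of Algorithm~\ref{alg:uda-xp}: for a candidate $\capac$-matching~$M$ and each vertex~$v$, record $\sigma(v)=\nomatch$ if $v$ is unsaturated, and otherwise let $\sigma(v)$ be the worst hyperedge of~$M(v)$ under~$\succ_v$. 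A routine check shows that $M$ is a stable $\capac$-matching if and only if (i) each $v$ with $\sigma(v)\ne\nomatch$ is saturated with $\capac(v)$ hyperedges all weakly preferred to $\sigma(v)$, with $\sigma(v)\in M$; (ii) each $v$ with $\sigma(v)=\nomatch$ is strictly unsaturated; and (iii) every $e\notin M$ is \emph{dominated}, i.e.\ has an endpoint~$v$ with $\sigma(v)\succ_v e$. The point of this reformulation is that all three conditions are \emph{local}: they relate only vertices and hyperedges sharing a common interval, hence only vertices lying within $\ell_{\max}$ positions of one another.

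With this in hand, I would scan the positions $1,\dots,n$ from left to right, deciding the membership of each hyperedge in~$M$ and the threshold value $\sigma(v)$ of each vertex as the scan reaches them. Because every interval has length at most $\ell_{\max}$, a hyperedge decided at or before the current frontier can touch only the next $\ell_{\max}-1$ vertices; consequently the only information about the past that future decisions depend on is confined to a sliding window of the last $\ell_{\max}$ vertices. For each vertex $v_j$ in this window the DP state stores two quantities: the guessed threshold $\sigma(v_j)$, one of the at most $\Delta$ hyperedges incident to $v_j$ or the symbol $\nomatch$, and the number of already-decided incident hyperedges placed in~$M$, an integer in $\{0,1,\dots,\capac_{\max}\}$. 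This yields at most $\Delta(\capac_{\max}+1)$ possibilities per window vertex and hence at most $(\Delta(\capac_{\max}+1))^{\ell_{\max}}$ states at each frontier, matching the base of the claimed bound.

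The transition from one frontier to the next decides the status of the newly exposed hyperedges, updates the partial counts, and, when a vertex~$v_j$ leaves the window (so that all its incident hyperedges have been decided), \emph{finalizes} it: here I would verify that its count equals $\capac(v_j)$ exactly when $\sigma(v_j)\ne\nomatch$ and is strictly smaller otherwise, that $\sigma(v_j)$ really is the worst selected hyperedge at~$v_j$, and that every incident hyperedge left out of~$M$ is dominated at one of its in-window endpoints. A transition is admissible only if all these local checks pass, and we accumulate $\weight(e)$ whenever a hyperedge~$e$ is placed in~$M$. Running the DP to maximize the accumulated weight over all admissible transition sequences then yields a maximum-weight stable $\capac$-matching; the existence of at least one stable $\capac$-matching (guaranteed because subpath hypergraphs are unimodular) ensures the table is nonempty. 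There are $O(|\EE|)$ stages, each handling $(\Delta(\capac_{\max}+1))^{\ell_{\max}}$ states, and since the stored optimal values are bounded by $|\EE|\maxW$, their additions and comparisons cost $O(\log\maxW)$ bit operations, which accounts for the final $\log\maxW$ factor.

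The main obstacle, and the part needing the most care in the write-up, is proving that the \emph{local} finalization checks faithfully encode \emph{global} stability, in particular the blocking condition~(iii). I would argue that every hyperedge~$e$ has all of its endpoints inside a single window at some moment of the scan, so its domination status can be decided without looking outside that window; conversely, that the threshold variables $\sigma(\cdot)$ and the counts stored in the state are exactly the data referenced by conditions (i)--(iii), so that admissible transition sequences correspond bijectively to stable $\capac$-matchings of the prescribed weight. Establishing this correspondence, and checking that a sliding window of width $\ell_{\max}$ genuinely shields future decisions from everything but the recorded state, is where the interval structure together with the bounds $\ell_{\max}$, $\Delta$, and $\capac_{\max}$ is used essentially, and it is the crux of the correctness argument.
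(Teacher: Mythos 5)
Your proposal follows essentially the same route as the paper's proof: a left-to-right dynamic program over a sliding window of the last $\ell_{\max}$ vertices, whose per-vertex states record a threshold (the paper's ``strategy'' $\sigma$) together with a saturation count, where your admissibility and finalization checks correspond to the paper's ``compatible''/``plans to dominate''/``complete'' conditions and your max-weight pruning of states corresponds to its $\weight$-representative sets of pairs $(\sigma,M)$. The correspondence between admissible transition sequences and stable $\capac$-matchings that you identify as the crux is exactly what the paper establishes in Lemmas~\ref{lem:subpath_real} and~\ref{lem:RC1P-represent}, so your approach is sound and matches the paper's.
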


A brute-force approach to solve the \maxwrcshbm\ problem would be to consider the edges one-by-one and enumerate all possible $\capac$-matchings that do not violate the capacities and that may turn out to be stable in the end. 
However, such a brute-force method would not yield a polynomial-time algorithm even if $\ell_{\max}$ is constant.
Hence, we apply a more sophisticated approach  using dynamic programming: 
we process edges one-by-one, but instead of recording all possible partial solutions, 
we only maintain a smaller set of $\capac$-matchings that, in a certain sense, represents all possible strategies for obtaining a stable $\capac$-matching. 
We start by introducing the necessary concepts.

\medskip
\noindent
{\bf Strategies.}
A \emph{strategy} over a given set $W \subseteq V$ of vertices is a function $\sigma:W \rightarrow \HE \cup \{\nomatch\}$
such that $\sigma(w)$ is incident to~$w$ for each $w \in W$ unless $\sigma(w)=\nomatch$.
The interpretation of a strategy is that $\sigma(w)$ is the planned \emph{threshold} of~$w$, that is, 
the worst edge that we plan to put into the $\capac$-matching among those incident to~$w$; the symbol $\nomatch$ corresponds to being unsaturated.
We extend the notation~$\succ_w$ such that $e \succ_w \nomatch$ holds for each $e$ incident to~$w$.

\begin{Def}
A $\capac$-matching~$M$ \emph{realizes} a strategy~$\sigma$ over~$W$ w.r.t.\ a subset $\HE' \subseteq \HE$ of the edges, 
if 
\begin{itemize}
\item[(i)] $M$ is \emph{compatible} with~$\sigma$, meaning that for each $w \in W$: 
\begin{itemize}
    \item if $\sigma(w) \in \EE$, then $e \succeq_w \sigma(w)$ holds for every $e \in M(w)$, 
and
    \item if $\sigma(w)=\nomatch$, then $|M(w)|<c(w)$;
\end{itemize}
\item[(ii)] $\sigma(w) \in M(w)$ for each $w \in W$ with $\sigma(w) \in \HE'$,
and 
\item[(iii)] every edge in $\HE' \setminus M$ that blocks~$M$ is planned to be dominated by~$\sigma$ where we say that
$\sigma$ \emph{plans to dominate} an edge $e$ if there exists some $w \in W$ such that $\sigma(w) \succ_w e$.
\end{itemize}
\end{Def}
We call a strategy \emph{realizable} for an edge set~$\HE' \subseteq \HE$ if some $\capac$-matching realizes it w.r.t.~$\HE'$.

Observe that if a strategy~$\sigma$ plans to dominate an edge~$e$ at some vertex~$w$ 
and $M$ is a $\capac$-matching compatible with~$\sigma$ that also satisfies $|M(w)|=\capac(w)$,  
then $e$ cannot block~$M$, because $e$ is \emph{dominated} by~$M$ at~$w$, meaning that $M$ contains $\capac(w)$ edges incident to~$w$, 
and $w$ prefers each of them to~$e$.

Given a strategy~$\sigma$ over~$W$ and a $\capac$-matching compatible with~$\sigma$, 
we say that $M$ is \emph{complete} on a vertex set~$W' \subseteq W$ for~$\sigma$ 
if $|M(v)|=\capac(v)$ 
for each $v \in W'$ such that $\sigma(v) \neq \nomatch$.

\medskip
\noindent
{\bf The algorithm.}
Let $(\HH,\capac,\weight,(\succ_v)_{v \in V})$ be our instance of \maxwrcshbm\ with a \rcp\  hypergraph $\HH=(V,\EE)$, and let
$v_1,\dots,v_n$ be the ordering  of~$V$ witnessing the \rcp\ property of~$\HH$. That is, for each hyperedge $e\in \EE$ there exist indices $i\le j$ such that $e= \{ v_i,v_{i+1}, \dots, v_j\}$.  

For a hyperedge~$e \in \EE$, we write $l(e)=\max\{i:v_i \in e\}$ and we say that $v_{l(e)}$ is the \emph{endpoint} of~$e$.
Let $e_1,\dots, e_m$ be an ordering of $\HE$ where $l(e_i)<l(e_j)$ implies $i<j$, that is, we order the edges increasingly according to their endpoints.
Let $\HE_i=\{e_1,\dots,e_i\}$, and let $L_i$ contain the last $\ell_{\max}$ vertices contained in $\bigcup \HE_i$, that is, 
$L_i=\{v_j: \max\{ 0,l(e_i)-\ell_{\max} \} < j \leq l(e_i)\}$.

Our algorithm considers each set $\HE_i=\{e_1,\dots,e_i\}$, and
computes a set $S_i$ of pairs $(\sigma,M)$ where $\sigma$ is a strategy over~$L_i$ and $M$ is a $\capac$-matching in $(V,\HE_i)$ that realizes~$\sigma$.
We will ensure that $S_i$ is \emph{representative} for~$\HE_i$ in the following sense: 
\begin{Def}
Set $S_i$ is \emph{$\weight$-representative} for~$\HE_i$ if the following holds:
whenever there is a strategy~$\sigma$ and a $\capac$-matching~$M$ realizing~$\sigma$ with respect to~$\HE_i$, 
then there is a $\capac$-matching~$M'$ equivalent to~$M$ on~$L_i$ for which $(\sigma,M') \in S_i$ and $\weight(M') \geq \weight(M)$.
Here, $M$ is \emph{equivalent} to $M'$ on $L_i$, 
if $|M(v)|=|M'(v)|$ for each $v \in L_i$; this is denoted by $M \sim_{L_i} M'$.
\\
The operation of 
\emph{updating} $S_i$ with $(\sigma,M)$ means that 
we put $(\sigma,M)$ into~$S_i$
unless 
there is some pair $(\sigma,\hat{M})$ already in $S_i$ for which $\hat{M}$ is equivalent on~$L_i$ to $M$ and $\weight(\hat{M})\geq \weight(M)$.
\end{Def}

We initialize $S_0$ to contain only the pair $(\sigma_\emptyset,\emptyset)$, 
where $\sigma_\emptyset$ is the function with empty domain. 
We then iterate over $i=1$ to~$m$, and build the set~$S_i$ as follows;
for a formal description, see Algorithm~\ref{alg:RC1P}.

\begin{description}
\item[Adjusting the domain from~$L_{i-1}$ to~$L_i$:] 
For each $(\sigma,M) \in S_{i-1}$, check whether $M$ is complete for~$\sigma$ on~$L_{i-1} \setminus L_i$. 
If not, then we will not be able to make~$\sigma$ complete by adding edges from~$\HE \setminus \HE_{i-1}$, so we ignore the pair~$(\sigma,M)$.
Otherwise, we change the domain of~$\sigma$ from~$L_{i-1}$ to~$L_i$ (forgetting all values assigned to vertices of $L_{i-1} \setminus L_i$) as follows:
We set $\sigma(v)$ for each $v \in L_i \setminus L_{i-1}$ to all possible values in~$\{ e \in \HE:v \in e\} \cup \{\nomatch\}$, 
and for each strategy $\sigma'$ over $L_i$ obtained this way, we put $(\sigma',M)$ into a set~$T_i$.
\item[Computing $S_i$ based on $e_i$:]
For each $(\sigma,M) \in T_i$ we check whether $M$ still realizes $\sigma$ with respect to~$\HE_i$ by checking if (i) %
$\sigma$ plans to dominate~$e_i$, and 
(ii) for each $v \in e_i$, the threshold $\sigma(v)$ does not coincide with $e_i$.
If these conditions hold, then $M$ realizes~$\sigma$ with respect to~$\HE_i$. Hence, we update~$S_i$ with $(\sigma,M)$.
Next, we check whether we can add~$e_i$ to~$M$: if adding $e_i$ does not violate the capacities and yields a $\capac$-matching compatible with~$\sigma$, 
then we update $S_i$ with $(\sigma,M \cup\{e_i\})$.
\end{description}

The algorithm finishes by checking whether $S_m$ contains a pair~$(\sigma,M)$ such that~$M$  is complete for~$\sigma$ on~$L_m$, and if so, outputs among all such $\capac$-matchings~$M$ one that maximizes $\weight(M)$.

\begin{algorithm}
\caption{\maxwrcshbm$(\HH=(V,\EE),\capac,\weight,(\succ_v)_{v \in V})$ where $\HH$ is a subpath hypergraph}
\label{alg:RC1P}
\begin{algorithmic}[1]
\State Set $L_0:=\emptyset$ and $\sigma_\emptyset$ as the empty function over~$\emptyset$.
\State Set $S_0:=\{(\sigma_\emptyset,\emptyset)\}$.
\For {$i=1$ {\bf to} $m$} 
	\ForAll{pair~$(\sigma,M) \in S_{i-1}$}  		\label{alg2:lineA}
		\If{$M$ is not complete for $\sigma$ on $L_{i-1} \setminus L_{i}$} 	\label{alg2:lineA+1}
			\State {\bf break;} \Comment{Continue with next pair in $S_{i-1}$.}
		\EndIf
		\ForAll{$\sigma'': L_i \setminus L_{i-1} \rightarrow \HE \cup \{\nomatch\}$}
            \If{for each $v \in L_i \setminus L_{i-1}$, either $\sigma''(v) \ni v$ or $\sigma''(v)=\nomatch$} 
			 \State Set $\sigma'$ as the union $\sigma|_{L_{i-1} \cap L_i}$ and $\sigma''$, with domain~$L_i$.
			\State $T_i \leftarrow (\sigma',M)$		\label{alg2:lineB}
            \EndIf
		\EndFor
	\EndFor
	\State $S_i:=\emptyset$
	\ForAll{pair~$(\sigma,M) \in T_i$} 
		\If{$\sigma$ plans to dominate $e_i$ and $e_i \neq \sigma(v)$ for each~$v \in e_i$} 			\label{alg2:lineX-1}
			\If{$\not\exists (\sigma,\hat{M}) \in S_i: M \sim_{L_i} \hat{M} \text{ and } \weight(\hat{M})\geq \weight(M)$} $S_i \leftarrow (\sigma,M)$		\label{alg2:lineX}
			\EndIf
		\EndIf							\label{alg2:lineX+2}
		\If{$e_i \succeq_v \sigma(v)$ for each $v \in e_i$ and $\forall v \in e_i:|M(v)|<\capac(v)$} \label{alg2:lineY-2}
			\State $M':=M \cup \{e_i\}$
			\If{$\not\exists (\sigma,\hat{M}) \in S_i: M' \sim_{L_i} \hat{M} \text{ and } \weight(\hat{M})\geq \weight(M')$} $S_i \leftarrow (\sigma,M')$		\label{alg2:lineY}
			\EndIf
		\EndIf							\label{alg2:lineY+2}
	\EndFor
\EndFor
\State Set $\weight^\star:=-\infty$ and $M^\star:=\emptyset$.
\ForAll{pair~$(\sigma,M) \in S_m$} 
	\If{$M$ is complete for~$\sigma$ on~$L_m$ and $\weight(M)>\weight^\star$}			\label{alg2:line-Z} 
    \State 
        Set $\weight^\star:=\weight(M)$ and $M^\star:=M$.\label{alg2:line-findmax} 
    \EndIf
\EndFor
\State {\bf return} $M^\star$.			\label{alg2:line-ret}	
\end{algorithmic}
\end{algorithm}

\begin{restatable}[\linkproof{sec:app-proof-subpath_real}]{Lemma}{subpathlemmareal}
\label{lem:subpath_real}
If Algorithm~\ref{alg:RC1P} puts a pair~$(\sigma,M)$ into~$S_i$ for some $i \in [m]$, then $M$ realizes~$\sigma$ with respect to~$\HE_i$. 
\end{restatable}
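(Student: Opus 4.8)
The plan is to prove the statement by induction on~$i$, carrying the invariant that \emph{every} pair that reaches $S_i$ realizes its strategy with respect to~$\HE_i$. The base case $i=0$ is immediate: the only pair is $(\sigma_\emptyset,\emptyset)$, and all three conditions of realization hold vacuously since $L_0=\emptyset$ and $\HE_0=\emptyset$. For the inductive step I would trace exactly how a pair $(\sigma,N)$ enters $S_i$: it originates from some $(\sigma_0,M)\in S_{i-1}$ that survives the completeness test on line~\ref{alg2:lineA+1}, is re-based onto the window~$L_i$ to give $(\sigma,M)\in T_i$, and is then inserted into~$S_i$ either unchanged in the carry-over branch (line~\ref{alg2:lineX}, so $N=M$) or extended by~$e_i$ in the extension branch (line~\ref{alg2:lineY}, so $N=M\cup\{e_i\}$). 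The whole argument rests on two structural facts about the subpath ordering that I would establish first.

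The two \emph{windowing facts} are: (a) every vertex of $L_{i-1}\setminus L_i$ is contained in no edge~$e_j$ with $j\ge i$; and (b) every vertex of $L_i\setminus L_{i-1}$ is contained in no edge~$e_j$ with $j\le i-1$, so in particular $M(v)=\emptyset$ for such~$v$ since $M$ is a matching in $(V,\HE_{i-1})$. Both follow from the facts that each edge is an interval ending at its endpoint, that $\ell_{\max}$ bounds edge lengths, and that edges are sorted by endpoint, so a vertex leaves (respectively, enters) the trailing length-$\ell_{\max}$ window exactly when all later endpoints have moved past it (respectively, no earlier endpoint has reached it). Fact~(b) makes compatibility and membership trivial at freshly entered vertices, while Fact~(a) is what lets us forget vertices safely.

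Conditions (i) and (ii) I would then dispatch by direct inspection of the two insertion guards. At inherited vertices $w\in L_{i-1}\cap L_i$ the strategy value is unchanged, so compatibility and the membership condition $\sigma(w)\in N(w)$ whenever $\sigma(w)\in\HE_{i-1}$ transfer verbatim from the induction hypothesis, using $M\subseteq N$; at freshly entered vertices Fact~(b) makes both hold trivially. The only genuinely new obligations concern~$e_i$ and the vertices it touches, and these are precisely what the guards enforce: in the extension branch the test $e_i\succeq_v\sigma(v)$ for all $v\in e_i$ on line~\ref{alg2:lineY-2} guarantees that the new edge respects every threshold, and $e_i\in N(v)$ supplies membership when $\sigma(v)=e_i$; in the carry-over branch the guard $e_i\neq\sigma(v)$ on line~\ref{alg2:lineX-1} rules out the only way a threshold could point at an edge absent from $N=M$.

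The crux, which I expect to be the main obstacle, is condition~(iii): every edge of $\HE_i\setminus N$ that blocks~$N$ must be planned to be dominated by~$\sigma$. For the new edge~$e_i$ this is immediate, since line~\ref{alg2:lineX-1} explicitly requires $\sigma$ to plan to dominate~$e_i$ in the carry-over branch, and $e_i\in N$ in the extension branch. The real work is the old edges $e\in\HE_{i-1}\setminus M$. I would first reduce the extension branch to the carry-over analysis by showing that if $e$ blocks $N=M\cup\{e_i\}$ then it already blocks~$M$: any vertex~$v$ at which $M$ dominates~$e$ is saturated in~$M$, hence (by the capacity guard $|M(v)|<\capac(v)$ on line~\ref{alg2:lineY-2}) lies outside~$e_i$, so $N(v)=M(v)$ and $v$ still dominates~$e$, contradicting that $e$ blocks~$N$. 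It then suffices to treat a blocking edge~$e$ of~$M$. By the induction hypothesis $\sigma_0$ plans to dominate~$e$ at some $w\in L_{i-1}$, i.e.\ $\sigma_0(w)\succ_w e$, which forces $\sigma_0(w)\neq\nomatch$. Here the completeness test on line~\ref{alg2:lineA+1} is decisive: if $w\in L_{i-1}\setminus L_i$, then completeness makes $w$ saturated in~$M$, and compatibility makes every edge of $M(w)$ weakly better than $\sigma_0(w)$ and hence strictly better than~$e$, so $M$ \emph{actually} dominates~$e$ at~$w$, contradicting that $e$ blocks~$M$. Therefore the witness~$w$ must lie in $L_{i-1}\cap L_i$, where $\sigma(w)=\sigma_0(w)\succ_w e$, so $\sigma$ still plans to dominate~$e$. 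The subtlety to get right throughout is the interplay between \emph{planned} domination (a statement about~$\sigma$) and \emph{actual} domination (a statement about~$M$): it is exactly the completeness gate that converts a planned domination at a soon-to-be-forgotten vertex into an actual one, which is impossible for a genuinely blocking edge, thereby confining every surviving blocking edge's witness to the new window~$L_i$.
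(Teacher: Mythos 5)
Your proof is correct and follows essentially the same route as the paper's: induction on~$i$, splitting on the carry-over branch (line~\ref{alg2:lineX}) and the extension branch (line~\ref{alg2:lineY}), dispatching conditions (i)--(ii) via the insertion guards together with the fact that vertices of $L_i \setminus L_{i-1}$ lie in no edge of $\HE_{i-1}$, and, for condition (iii), using the completeness test on line~\ref{alg2:lineA+1} to convert a planned domination at a forgotten vertex of $L_{i-1}\setminus L_i$ into an actual domination by the matching, contradicting blocking. The only cosmetic difference is organizational: you isolate the windowing facts and the ``blocks $M\cup\{e_i\}$ implies blocks $M$'' reduction as explicit preliminary claims, which the paper states inline (the latter with less justification than you give).
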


\begin{proofsketch}
We use induction on~$i$. The case $i=0$ is trivial, so assume that the lemma holds for $i-1$.

First, suppose that $(\sigma,M)$ was added to~$S_i$ on line~\ref{alg2:lineX}, so $M \subseteq \HE_{i-1}$.
Let $(\sigma^-,M) \in S_{i-1}$ be the pair selected on line~\ref{alg2:lineA}
in the cycle during which $(\sigma,M)$ was put into~$T_i$ on line~\ref{alg2:lineB}. 
By induction, %
\begin{itemize}
    \item[(a)] $M$ is compatible with~$\sigma^-$,
    \item[(b)] $\sigma^-(v) \in M(v)$ for every $v \in L_{i-1}$ where $\sigma^-(v) \in \HE_{i-1}$, and 
    \item[(c)] every edge in $\EE_{i-1} \setminus M$ that blocks~$M$ is planned to be dominated by~$\sigma^-$.
\end{itemize}

First, observe that 
$M$ is compatible with~$\sigma$: the compatibility conditions for each $v \in L_{i-1} \cap L_i$ follow immediately from $\sigma^-(v)=\sigma(v)$ due to~(a), while the compatibility condition for each $v \in L_i \setminus L_{i-1}$ 
holds irrespective of the value of~$\sigma(v)$, because edges of~$M \subseteq \HE_{i-1}$ have no vertices in~$L_i \setminus L_{i-1}$. 

Second, by our assumption that 
Algorithm~\ref{alg:RC1P} added $(\sigma,M)$ to~$S_i$ on line~\ref{alg2:lineX}, we know that $\sigma(v) \neq e_i$ for each $v$ incident to~$e_i$. 
Hence, if $\sigma(v) \in \EE_i$ for some $v \in L_i$, then $\sigma(v) \in \EE_{i-1}$, and consequently, $v \in L_i \cap L_{i-1}$ (because vertices of $L_i \setminus L_{i-1}$ are not contained in any edge of~$\EE_{i-1}$). 
Therefore, (b) implies
$\sigma(v)=\sigma^-(v) \in M(v)$. 
Third, it remains to show that every edge $e \in \HE_i \setminus M$ that blocks $M$ is planned to be dominated by~$\sigma$.
First note that Algorithm~\ref{alg:RC1P} checks that this holds for~$e_i$ explicitly on line~\ref{alg2:lineX-1}, so we may assume $e \neq e_i$.
By (c), we know that every edge in~$\HE_{i-1} \setminus M$ that blocks~$M$ is planned to be dominated by~$\sigma^-$, 
so there is some $v \in L_{i-1}$ such that $\sigma^-(v) \succ_v e$. On the one hand, if $v \in L_i \cap L_{i-1}$, then $e$ is planned to be dominated by $\sigma$ too, because $\sigma(v)=\sigma^-(v)$.
On the other hand, if $v \in L_{i-1} \setminus L_{i}$, then Algorithm~\ref{alg:RC1P} must have confirmed on line~\ref{alg2:lineB}
that $M$ is complete on~$v$ (otherwise it would not have put $(\sigma^-,M)$ into~$T_i$),  which implies that $M$ dominates~$e$ at~$v$. Thus, $e$ does not block~$M$, proving 
that every edge in $\EE_i \setminus M$ that blocks~$M$ is planned to be dominated by~$\sigma$.
This finishes the proof that $M$ realizes~$\sigma$ with respect to~$\HE_i$.

The case when  $(\sigma,M)$ was put into~$S_i$ on line~\ref{alg2:lineY} is similar; see Appendix~\ref{sec:app-rc1p-proofs} for the remainder.

\end{proofsketch}

\begin{Lemma}
\label{lem:RC1P-represent}
Set~$S_i$ computed by Algorithm~\ref{alg:RC1P} is $\weight$-representative for~$\HE_i$ for  each $i\in [m]$. 
\end{Lemma}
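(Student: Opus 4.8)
The plan is to prove the $\weight$-representativeness of $S_i$ by induction on~$i$, mirroring the structure of Lemma~\ref{lem:subpath_real} but now tracking the weight-optimality guarantee rather than mere realizability. The base case $i=0$ is immediate, since $S_0=\{(\sigma_\emptyset,\emptyset)\}$ and the only realizable strategy over the empty vertex set is $\sigma_\emptyset$, realized by the empty $\capac$-matching of weight~$0$. For the inductive step, I would fix an arbitrary strategy~$\sigma$ over~$L_i$ and a $\capac$-matching~$M$ realizing~$\sigma$ with respect to~$\HE_i$, and exhibit a pair $(\sigma,M') \in S_i$ with $M' \sim_{L_i} M$ and $\weight(M') \geq \weight(M)$.

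The key idea is to split on whether $e_i \in M$ or not, and to "restrict" $M$ to $\HE_{i-1}$ so as to invoke the inductive hypothesis. Let $M^- := M \setminus \{e_i\}$, which is a $\capac$-matching in~$(V,\HE_{i-1})$. I would first define a strategy~$\sigma^-$ over~$L_{i-1}$ that $M^-$ realizes with respect to~$\HE_{i-1}$: one sets $\sigma^-(v)=\sigma(v)$ for $v \in L_{i-1} \cap L_i$ and chooses appropriate thresholds for $v \in L_{i-1} \setminus L_i$ witnessing that $M^-$ dominates (at such vertices) every blocking edge of~$\HE_{i-1}$ that $\sigma$ planned to dominate via those forgotten vertices; here one must verify that $M^-$ is indeed complete on $L_{i-1}\setminus L_i$ for~$\sigma^-$, which follows because $M$ realizes~$\sigma$ and those vertices lie in no edge of $\HE\setminus\HE_{i-1}$. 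Applying the inductive hypothesis to $(\sigma^-, M^-)$ yields a pair $(\sigma^-, N) \in S_{i-1}$ with $N \sim_{L_{i-1}} M^-$ and $\weight(N) \geq \weight(M^-)$. I would then trace this pair through one iteration of the main loop: the completeness check on line~\ref{alg2:lineA+1} passes (since $N \sim_{L_{i-1}} M^-$ and $M^-$ is complete on $L_{i-1}\setminus L_i$), the domain-adjustment cycle produces $(\sigma, N) \in T_i$ by choosing $\sigma''$ to match $\sigma$ on $L_i \setminus L_{i-1}$, and then—depending on whether $e_i \in M$—either the realizability-preserving branch (lines~\ref{alg2:lineX-1}--\ref{alg2:lineX}) or the edge-adding branch (lines~\ref{alg2:lineY-2}--\ref{alg2:lineY}) fires, producing a candidate pair $(\sigma, N)$ or $(\sigma, N \cup \{e_i\})$. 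The "update" semantics then guarantees that $S_i$ contains some pair $(\sigma, M')$ with $M'$ equivalent on~$L_i$ and weight at least as large.

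The main obstacle I anticipate is verifying that the relevant branch of the algorithm actually fires for $(\sigma, N)$—that is, that the guard conditions on lines~\ref{alg2:lineX-1} and~\ref{alg2:lineY-2} are met. This requires translating the fact that $M$ realizes~$\sigma$ into statements about $\sigma$ and~$N$: for the no-$e_i$ case, that $\sigma$ plans to dominate~$e_i$ and $\sigma(v) \neq e_i$ for all $v \in e_i$ (which come from conditions (iii) and (i) of realization applied to~$M$), and for the $e_i \in M$ case, that $e_i \succeq_v \sigma(v)$ and $|N(v)| < \capac(v)$ for all $v \in e_i$ (the former from compatibility of~$M$, the latter from $N \sim_{L_{i-1}} M^-$ together with $e_i \in M$, noting that every $v \in e_i$ lies in~$L_i$ and in fact in~$L_{i-1}$ except possibly the endpoint). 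A delicate point is that equivalence $N \sim_{L_{i-1}} M^-$ only controls the \emph{number} of incident edges, not their identity, so I must argue that the capacity slack and the threshold comparisons are preserved under equivalence; this is exactly where the subpath structure matters, since $e_i$'s vertices all lie within the last $\ell_{\max}$ vertices and hence within~$L_i$, so equivalence on~$L_i$ (resp.\ $L_{i-1}$) fully determines the saturation status relevant to~$e_i$. I would finally need to confirm that weight is additive along the edge-adding branch, i.e.\ $\weight(N \cup \{e_i\}) = \weight(N) + \weight(e_i) \geq \weight(M^-) + \weight(e_i) = \weight(M)$, completing the bound. The bookkeeping around which vertices are forgotten versus newly introduced when passing from~$L_{i-1}$ to~$L_i$ is the most error-prone part and deserves careful case analysis.
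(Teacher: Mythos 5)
Your proposal follows essentially the same route as the paper's proof: induction on~$i$; constructing $\sigma^-$ over~$L_{i-1}$ by keeping $\sigma$ on~$L_{i-1}\cap L_i$ and assigning worst-edge/$\nomatch$ thresholds to the forgotten vertices; showing that $M\setminus\{e_i\}$ realizes~$\sigma^-$ with respect to~$\HE_{i-1}$; invoking the inductive hypothesis; and tracing the resulting pair through the completeness check, the domain adjustment, and the guards on lines~\ref{alg2:lineX-1} and~\ref{alg2:lineY-2}, split on whether $e_i\in M$, with the same weight bookkeeping. The only slips are cosmetic and easily repaired: the guard $\sigma(v)\neq e_i$ follows from realization condition~(ii), not~(i); planned domination of~$e_i$ when $e_i$ does \emph{not} block~$M$ needs the extra observation that at a dominating vertex~$v$ one has $\sigma(v)=\worst(M,v)$; and more than just the endpoint of~$e_i$ may lie outside~$L_{i-1}$, though all such vertices have degree~$0$ in any $\capac$-matching contained in~$\HE_{i-1}$, so the capacity guard still passes.
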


\begin{proof}
We prove the lemma by induction on~$i$. Clearly, the lemma trivially holds for $i=0$, so 
let us assume that the statement holds for $i-1$.

To prove that $S_i$ remains $\weight$-representative for~$\HE_i$, it 
suffices to show the following: if $(\sigma,M)$ is such that $M \subseteq \HE_i$ realizes $\sigma$ with respect to~$\HE_i$, 
then the algorithm 
updates $S_i$ with \emph{some} pair $(\sigma,M')$ 
either at line~\ref{alg2:lineX} or at line~\ref{alg2:lineY} for which $M' \sim_{L_i} M$
(note that %
$\sim_{L_i}$ is transitive) and $\weight(M') \geq \weight(M)$. 

Given $\sigma$ and $M$, let us define the strategy $\sigma^-$ over~$L_{i-1}$ as follows:
\[
\sigma^-(v) = \left\{ 
\begin{array}{lp{8.5cm}} 
\sigma(v), & \textrm{if $v \in L_{i-1} \cap L_i$}; \\
\nomatch, & \textrm{if $v \in L_{i-1} \setminus L_i$ and $|M(v)|<\capac(v)$}; \\
\worst(M,v), & \textrm{if $v \in L_{i-1} \setminus L_i$ and $|M(v)|=\capac(v)$}
\end{array} \right.
\] 
where $\worst(M,v)$ is the worst edge in~$M(v)$ according to~$\succ_v$. Note that $M$ is complete on vertices of~$L_{i-1} \setminus L_i$ for~$\sigma^-$.
We distinguish between two cases.

{\bf Case A:} $e_i \notin M$. 
We claim that $M$ realizes $\sigma^-$ w.r.t.~$\HE_{i-1}$.
First, since $M$ realizes $\sigma$ w.r.t.~$\HE_i$, we know that 
(i) $M$ satisfies the compatibility conditions for $\sigma^-$ on vertices of $L_i \cap L_{i-1}$, because $\sigma$ coincides with $\sigma^-$ on these vertices, and 
(ii) $\sigma^-(v)=\sigma(v) \in M(v)$ for each $v \in L_i \cap L_{i-1}$ that satisfies $\sigma(v) \in \HE_i$.
Considering some $v \in L_{i-1} \setminus L_i$, by the definition of~$\sigma^-$, 
we get that $M$ is compatible with~$\sigma^-$, 
and also that $\sigma(v) =\worst(M,v) \in M(v)$ for every $v \in L_{i-1} \setminus L_i$ with $\sigma(v) \neq \nomatch$.
Hence, to see the claim, we only need to show that $\sigma^-$ plans to dominate every edge $e \in \HE_{i-1} \setminus M \subseteq \HE_i \setminus M $ that blocks~$M$.
Clearly, $e$ is planned to be dominated by~$\sigma$ at some $v \in L_i$
(since $M$ realizes $\sigma$ w.r.t.~$\HE_i$). However, as $e \in \HE_{i-1}$, it follows that $v \in L_{i-1} \cap L_i$, 
and thus $\sigma^-$ indeed plans to dominate~$e$ at~$v$ by $\sigma^-(v)=\sigma(v)$.
This proves the claim that $M$ realizes $\sigma^-$ with respect to~$\HE_{i-1}$.
Therefore, by induction there exists a $\capac$-matching~$M'$ with $\weight(M') \geq \weight(M)$ such that $(\sigma^-,M') \in S_{i-1}$
and $M' \sim_{L_{i-1}} M$. 

Recall that $M$ is complete on $L_{i-1} \setminus L_i$ for~$\sigma^-$, 
and thus $M' \sim_{L_{i-1}} M$ implies that $M'$ is also complete on~$L_{i-1} \setminus L_i$ for~$\sigma^-$.
Hence, $(\sigma^-,M')$ passes the test on line~\ref{alg2:lineA+1}. 
As Algorithm~\ref{alg:RC1P} tries all possible ways to extend~$\sigma^-$, it follows that $(\sigma,M') \in T_i$.
Now, since $M$ realizes $\sigma$ w.r.t.~$\HE_i$ but $e_i \notin M$,
we get that $e_i \neq \sigma(v)$ for any $v \in e_i$. Additionally, $\sigma$ plans to dominate $e_i$: if $e_i$ blocks $M$, then this is because $M$ realizes $\sigma$ w.r.t.\ $\EE_i$ (by our assumptions),
and if $e_i$ does not block~$M$, then it is dominated at some vertex~$v \in L_i$, which in turn implies that $v$ is saturated by~$M$ and $\worst(M,v)=\sigma(v) \succ_v e_i$, that is, $\sigma$ indeed plans to dominate~$e_i$.
Therefore, Algorithm~\ref{alg:RC1P} 
updates~$S_i$ with $(\sigma,M')$
at line~\ref{alg2:lineX}.%

{\bf Case B:} $e_i \in M$. 
 Consider the $\capac$-matching $M^-=M \setminus \{e_i\}$;
we claim that $M^-$ realizes $\sigma^-$ w.r.t.~$\HE_{i-1}$.
Again, since $M$ realizes $\sigma$ w.r.t.~$\HE_i$, we know that 
(i)~$M^- \subseteq M$ satisfies the compatibility conditions for~$\sigma^-$ on vertices of~$L_i \cap L_{i-1}$, because $\sigma$ coincides with $\sigma^-$ on these vertices, and 
(ii)~$\sigma^-(v)=\sigma(v) \in M^-(v)$ for each $v \in L_i \cap L_{i-1}$ that satisfies $\sigma(v) \in \HE_{i-1}=\EE_i \setminus \{e_i\}$.
By the definition of~$\sigma^-$ on $L_{i-1} \setminus L_i$, it follows that $M^-$ is compatible with~$\sigma^-$, 
and  $\sigma(v) \in M^-(v)$ for each $v \in L_{i-1}$ with $\sigma(v) \in \HE_{i-1}$. 
Hence, to see the claim, we only need to show that $\sigma^-$ plans to dominate every edge $e \in \HE_{i-1} \setminus M^-=\HE \setminus M$ that blocks~$M^-$. 
If $e$ blocks $M$ as well, then $\sigma$ plans to dominate it at some $v \in L_i$
(since $M$ realizes~$\sigma$ w.r.t.~$\HE_i$). Since $e \in \HE_{i-1}$, it follows that $v \in L_{i-1} \cap L_i$, 
and thus $\sigma^-$ plans to dominate~$e$ as well. 
If, by contrast, $e$ does not block~$M$, then there must exist some $v \in L_i \cap e_i$ 
where $M$ dominates $e$. Then, $|M(v)|=\capac(v)$ and $\worst(M,v)=\sigma(v) \succ_v e$ (since $M$ realizes $\sigma$ w.r.t.~$\HE_i$), 
and so $\sigma^-(v) \succ_v e$ by the definition of $\sigma^-$.
Hence, $\sigma^-$ plans to dominate~$e$ at~$v$.
This proves the claim that $M^-$ realizes $\sigma^-$ with respect to~$\HE_{i-1}$. 
Therefore, by induction there exists a $\capac$-matching~$M'$ with $\weight(M') \geq \weight(M^-)$ such that $(\sigma^-,M') \in S_{i-1}$
and $M' \sim_{L_{i-1}} M^-$. 

Recall that $M^-$ is complete on $L_{i-1} \setminus L_i$ for~$\sigma^-$, 
and thus $M' \sim_{L_{i-1}} M^-$ implies that $M'$ is also complete on~$L_{i-1} \setminus L_i$ for~$\sigma^-$.
Therefore, $(\sigma^-,M')$ passes the test on line~\ref{alg2:lineA+1}. 
As Algorithm~\ref{alg:RC1P} tries all possible ways to extend~$\sigma^-$, it follows that $(\sigma,M') \in T_i$.
Since $e_i \in M$ and $M$ is compatible with~$\sigma$, we know that $e_i \succeq_v \sigma(v)$  for each $v \in L_i$.
By $M' \sim_{L_{i-1}} M^-$ and since $M^- \cup \{e_i\}=M$ is a $\capac$-matching, we know $|M'(v)|<\capac(v)$ for each $v \in e_i \cap L_{i-1}$;
since $M' \subseteq \HE_{i-1}$, it is also clear that $|M'(v)|=0<\capac(v)$ for each $v \in L_i \setminus L_{i-1}$. 
Therefore, the pair $(\sigma,M')$ satisfies the conditions on line~\ref{alg2:lineY-2} and so 
Algorithm~\ref{alg:RC1P}
updates $S_i$ with $(\sigma,M' \cup \{e_i\})$
at line~\ref{alg2:lineY}. By $M' \cup \{e_i\} \sim_{L_i} M^- \cup \{e_i\} =M$ and $\weight(M' \cup \{e_i\})\ge \weight(M^-)+\weight(e_i)= \weight(M)$, this finishes the proof of the lemma.%
\end{proof}

\begin{proof}[Proof of Theorem~\ref{thm:RC1P-fixed-length}]
We start by proving that Algorithm~\ref{alg:RC1P} solves {\maxwrcshbm} correctly on \rcp\ hypergraphs.

First, we show that the output is stable. Observe that if  Algorithm~\ref{alg:RC1P} returns a $\capac$-matching~$M$ on line~\ref{alg2:line-ret}, 
then $M$ realizes $\sigma$ on~$\HE$, and thus $\sigma$ plans to dominate every edge blocking~$M$. 
However, as $M$ is complete for $\sigma$ (due to the condition on line~\ref{alg2:line-Z}), $M$ in fact dominates every edge that $\sigma$ plans to dominate. 
Thus, no edge may block~$M$.

Assume now that $M$ is a maximum-weight stable $\capac$-matching in our instance. 
Let $\sigma$ be the strategy over $L_m$ defined as 
\begin{equation*} \label{eq:sigma-for-M} 
\sigma(v) = \left\{ 
\begin{array}{ll} 
\nomatch, & \textrm{if $|M(v)|<\capac(v)$}; \\
\worst(M,v), & \textrm{if $|M(v)|=\capac(v)$}.
\end{array} \right.
\end{equation*}

It is clear that $M$ realizes~$\sigma$ over $\HE$. 
Thus, by Lemma~\ref{lem:RC1P-represent} there exists a $\capac$-matching~$M'$ satisfying $\weight(M') \geq \weight(M)$ for which $(\sigma,M') \in S_m$ and $M \sim_{L_m} M'$. 
Moreover, as $M$ is complete on~$L_m$ for~$\sigma$ and $M \sim_{L_m} M'$, we know that $M'$ is complete on~$L_m$ for $\sigma$ 
as well.
Hence, Algorithm~\ref{alg:RC1P} will set $\weight^\star$ to at least~$\weight(M') \geq \weight(M)$ 
on line~\ref{alg2:line-findmax} at latest when it examines the pair $(\sigma,M')$, and will thus 
return a stable $\capac$-matching of weight at least $\weight(M)$ on line~\ref{alg2:line-ret}, which must be a maximum-weight stable $\capac$-matching. 

To bound the running time of Algorithm~\ref{alg:RC1P}, 
note that the number of possible strategies for a given set~$L_i$ is at most $\Delta^{|L_i|} \leq \Delta^{\ell_{\max}}$
where $\Delta$ is the maximum number of edges incident to any vertex. 
Given a strategy~$\sigma$ over $L_i$, we also know $|\{M:(\sigma,M) \in S_i\}| \leq (\capac_{\max}+1)^{|L_i|} \leq (\capac_{\max}+1)^{\ell_{\max}}$
since this set does not contain two $\capac$-matchings that are equivalent on~$L_i$. This yields
\[|S_i| \leq (\Delta(\capac_{\max}+1))^{\ell_{\max}}.\]
There are $m$ iterations, and the computation of $S_i$ from~$S_{i-1}$ is proportional to~${|S_{i-1}|+|S_i|}$, 
as any line of the algorithm can be performed in $O(\ell_{\max} \log \maxW)$ time. 
Thus,
we obtain that the total running time is $O((\Delta(\capac_{\max}+1))^{\ell_{\max}}m\log \maxW)$, 
as promised.
\end{proof}

\begin{Rem}
    We remark that Algorithm~\ref{alg:RC1P} can be used to decide whether  a stable $\capac$-matching exists that contains  a given set of ``forced'' edges (or is disjoint from a given set ``forbidden'' edges) by setting an appropriate weight function.
    Furthermore, it can also be used to 
     decide the existence of a stable $\capac$-matching that saturates (or does not saturate) a given vertex~$v$. 
     Indeed, to decide whether some stable $\capac$-matching saturates~$v$, it suffices to add a newly created hyperedge~$e_v$ incident only to~$v$, extend the preferences of~$v$ such that $e_v$ becomes the least-favorite edge according to~$\succ_v$, and decide whether this modified instance admits a stable $\capac$-matching that does not contain~$e_v$.
     Deciding whether some stable $\capac$-matching leaves $v$ unsaturated can be done analogously, using the same construction but searching for a stable $\capac$-matching that contains~$e_v$.
\end{Rem}

Theorem~\ref{thm:RC1P-Whardness} shows that the running time of the algorithm in Theorem~\ref{thm:RC1P-fixed-length} cannot be improved to an FPT algorithm with parameter~$\ell_{\max}$, not even if we additionally parameterize the problem with the maximum capacity and severely restrict the weight function.

\begin{restatable}[\linkproof{sec:app-proof-RC1P-Whardness}]{theorem}{thmrcpwhardness}
\label{thm:RC1P-Whardness}
The \maxwrcshbm{} problem is $\mathsf{W}[1]$-hard when parameterized by $\capac_{\max} +\ell_{\max}$, where $\capac_{\max}$ is the maximum capacity of a vertex and $\ell_{\max}$ is the maximum size of an edge, 
even if all edges have weight~$0$ with the exception of a single edge of weight~$1$.
\end{restatable}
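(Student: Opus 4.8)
The plan is to first exploit the very restricted weight function. Since every edge except the designated edge~$e^*$ has weight~$0$ while $e^*$ has weight~$1$, and since \rcp\ hypergraphs are unimodular (so a stable $\capac$-matching is guaranteed to exist), a maximum-weight stable $\capac$-matching has weight~$1$ exactly when some stable $\capac$-matching contains~$e^*$, and weight~$0$ otherwise. Hence it suffices to prove that deciding whether a \rcp\ hypergraph admits a stable $\capac$-matching \emph{containing a prescribed edge~$e^*$} is $\mathsf{W}[1]$-hard with parameter $\capac_{\max}+\ell_{\max}$. I would prove this by a parameterized reduction from \textsc{Multicolored Clique} (given a graph with its vertices partitioned into color classes $V_1,\dots,V_k$, decide whether there is a clique using exactly one vertex from each class), which is $\mathsf{W}[1]$-hard parameterized by~$k$; the analogous reduction from \textsc{Grid Tiling} would also work. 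The key is to guarantee that the produced instance satisfies $\capac_{\max}+\ell_{\max}=O(k)$, so that an FPT algorithm in $\capac_{\max}+\ell_{\max}$ would yield one in~$k$.

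I would organize the vertices of the underlying path, from left to right, into three kinds of gadgets. First, \emph{selection gadgets}: for each class~$i$, a chooser vertex of capacity~$1$ carrying one incident edge per vertex of~$V_i$, with a preference list over these edges arranged so that in any stable $\capac$-matching a single ``selected'' edge survives at the chooser, encoding the chosen vertex of~$V_i$. Second, \emph{propagation (relay) gadgets}: a chain of short, overlapping intervals that carries the $k$ selected values rightward as a bounded-width ``bundle'', so that the current choices of all classes are locally readable at every position of the path; the overlaps together with capacities and preferences force each carried value to remain equal to the value selected at the chooser. Third, \emph{verification gadgets}: for each pair $i<j$, a station that reads the two relevant values off the bundle and contains a would-be blocking edge engineered to be dominated (hence not blocking) precisely when the two chosen vertices are adjacent in~$G$, so that a non-adjacent combination would leave this edge blocking. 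Finally, $e^*$ is attached at the right end so that it can be placed in a stable $\capac$-matching only if every verification station is satisfied.

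For correctness I would argue both directions. Given a multicolored clique, I select the corresponding edges at the choosers, propagate them consistently through the relays, satisfy every station, and add~$e^*$; this yields a stable $\capac$-matching containing~$e^*$. Conversely, any stable $\capac$-matching containing~$e^*$ must, by the stability conditions built into the gadgets, make one consistent choice per class, propagate it faithfully, and pass every pairwise adjacency test, so its choices form a multicolored clique. For the parameter bound, chooser and relay capacities are $O(1)$, while each comparison edge at a verification station must span the $O(k)$-wide bundle, giving $\ell_{\max}=O(k)$ and $\capac_{\max}=O(1)$; the number of edges per gadget is governed by the (unbounded) sizes~$|V_i|$, which is irrelevant to the parameter. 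The linear order on the vertices must be fixed so that every gadget edge, every relay overlap, and every comparison edge is a genuine contiguous interval, certifying the \rcp\ property.

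The main obstacle I anticipate is the propagation gadget. I must design the overlapping relay intervals, their capacities, and their preference lists so that stability forces the carried bundle to equal exactly the tuple of selected values, ruling out any ``drift'' in which a class silently changes its value between its chooser and a later verification station, and doing so while keeping interval size $O(k)$ and capacity $O(1)$. Proving that stability pins the propagation down exactly---neither permitting an inconsistent carry nor accidentally excluding the intended consistent carry---and that the blocking edge at each station triggers on precisely the non-adjacent combinations, is the delicate core of the argument; the remaining routing (realizing all edges as subpaths) and the final equivalence are then comparatively routine.
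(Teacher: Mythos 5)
Your high-level plan is the same as the paper's: use the weight function to reduce the question to whether some stable $\capac$-matching contains a prescribed edge, then give a parameterized reduction from \textsc{Multicolored Clique} with selection gadgets at the left end of the path, a propagation structure carrying the $k$ selected vertices rightward, and verification gadgets that a stable matching survives exactly when the selections are pairwise adjacent (these correspond to the paper's selection, repeater, and incidence gadgets; the paper places one verification station per \emph{non-edge} of $G$ rather than per pair, which is simpler since a station then only has to certify that not both endpoints of that particular non-edge were selected). The problem is that what you have written is an architecture, not a proof. The entire technical content of this theorem is the explicit propagation gadget: its vertices, capacities, interleaved preference lists, and the stability argument showing that the index selected for class $i$ can neither increase nor decrease from one station to the next (in the paper: four repeater vertices $r^1_{(i,e)},\dots,r^4_{(i,e)}$ per class and station, two families of auxiliary edges $\hat{e}^j(v),\hh{e}^j(v)$, and merged preference orders arranged so that one repeater vertex forces the new index to be at least the old one while two others force it to be at most the old one). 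You explicitly defer exactly this step as ``the delicate core of the argument'' and ``the main obstacle I anticipate,'' so the proposal is missing its proof precisely where the difficulty lies.

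Moreover, the one concrete quantitative claim you do make about the missing gadget---that relay capacities can be kept $O(1)$ while the bundle has width $O(k)$---cannot be realized in the architecture you describe. On a path, the region across which class $i$'s value is carried is an interval (a chain of overlapping carrier edges from its chooser to its stations); since every pair $(i,j)$ has a common station, these $k$ intervals pairwise intersect, and by the Helly property of intervals on a line they share a common vertex~$x$. For each class, some carrier edge of the intended matching contains~$x$, so $x$ lies in at least $k$ matching edges and needs $\capac(x)\ge k$. This is consistent with the paper, whose construction uses $\capac_{\max}=k+1$; it does not endanger the theorem (the parameter bound $\capac_{\max}+\ell_{\max}=O(k)$ holds either way), but it shows the propagation design has not been thought through---and designing it, together with proving that stability pins the propagation down exactly, \emph{is} the proof.
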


\section{\texorpdfstring{\shbm\ for Subtree Hypergraphs}{SHbM for subtree hypergraphs}}\label{sec:subtree}

We now turn our focus to the class of subtree hypergraphs. As these are not necessarily unimodular, so an instance of \subtrshbm\ may not admit a stable $\capac$-matching. However, as subtree hypergraphs are still normal, \subtrshm\ is guaranteed to be solvable.

We start by presenting a polynomial-time algorithm that solves \subtrshm.
Before explaining our algorithm, let us introduce some additional notation. Let $\HH=(V,\EE)$ be a subtree hypergraph and let $T=(V,F)$ be the underlying tree, that is, for each hyperedge~${e \in \EE}$, the set of vertices contained in~$e$  induces a subtree of $T$. 
Let $r$ be a fixed vertex of $V$ and consider the rooted tree $T$ with root $r$. For each vertex $v$, let $\dist_T(r,v)$ be the \emph{distance} of~$r$ and~$v$ in~$T$, that is the number of edges of the (unique) path of~$T$ between~$r$ and~$v$.
For each hyperedge $e \in \EE$, let $\mytop_{T,r}(e)$ be the (unique) vertex of $e$ that is closest to~$r$ in~$T$, and define $\dist_{T,r}(e):= \dist_T(r,\mytop_{T,r}(e))$.

Given an instance $(\HH,(\succ_v)_{v \in V})$ of \subtrshm, where $\HH=(V,\EE)$ is a subtree hypergraph with underlying tree~$T$, algorithm Subtree-\shm\ does the following (see Algorithm~\ref{alg:subtree} for a pseudocode). First, it finds a hyperedge~$f$ such that $\mytop_{T,r}(f)$ is furthest from~$r$ in~$T$; for simplicity, let us write~$r_f$ for $\mytop_{T,r}(f)$. Then it deletes~$f$ and every hyperedge incident to~$r_f$ that are less preferred by~$r_f$ than~$f$, and recursively calls itself on the remaining hypergraph. Finally, it builds a matching by adding $f$ to the output~$M$ of the recursive call if $r_f$ is not covered by~$M$.
\begin{algorithm}[t]
\caption{Subtree-\shm%
$(\HH,(\succ_v)_{v \in V})$ where $\HH=(V,\EE)$ is a subtree hypergraph with underlying tree~$T$ rooted at~$r \in V$}

\label{alg:subtree}
\begin{algorithmic}[1]
\If{$\EE=\emptyset$} \textbf{return} $M=\emptyset$.
\EndIf

\State $f:=\textrm{argmax}_{e\in \EE}\dist_{T,r}(e)$
\State $r_f:=\mytop_{T,r}(f)$
\State $\remsetL_f := \{ f'\in \EE\mid r_f \in f' ,  f\succ_{r_f}f' \}$
\State $\EE':=\EE \setminus (\remsetL_f\cup \{ f\})$
\State $M=$ Subtree-\shm%
$((V,\EE'),(\succ_v)_{v \in V})$

\If{$r_f$ is not covered by $M$} $M:= M\cup \{f\}$.
\EndIf
\State \textbf{return} $M$.
 
\end{algorithmic}
\end{algorithm}

\begin{theorem}
Algorithm \ref{alg:subtree} solves \subtrshm\ in polynomial time.
\label{thm:subtree-poly}
\end{theorem}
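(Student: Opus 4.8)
The plan is to prove both claims—that the set~$M$ returned by Algorithm~\ref{alg:subtree} is a matching and that it is stable—by induction on~$|\EE|$, with the base case $\EE=\emptyset$ being trivial. For the inductive step, let $f$ be the edge chosen with $\dist_{T,r}(f)$ maximal, write $r_f=\mytop_{T,r}(f)$, and let $M'$ be the stable matching returned by the recursive call on $\HH'=(V,\EE')$, which exists by the induction hypothesis since $|\EE'|<|\EE|$; the algorithm returns $M=M'\cup\{f\}$ if $r_f$ is uncovered by~$M'$, and $M=M'$ otherwise. The single structural fact I would isolate first is that, in a subtree hypergraph rooted at~$r$, every vertex of an edge~$e$ lies in the subtree of~$T$ rooted at $\mytop_{T,r}(e)$; this holds because $e$ induces a connected subtree whose vertex closest to~$r$ is $\mytop_{T,r}(e)$, so no vertex of~$e$ can be a strict ancestor of $\mytop_{T,r}(e)$. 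Polynomial running time is then immediate: each recursive call deletes at least the edge~$f$, so there are at most $|\EE|$ calls, and each performs only the polynomial tasks of locating a furthest top, forming $\remsetL_f$, and testing coverage.

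For the matching property, the only nontrivial case is when $r_f$ is uncovered by~$M'$ and we must show $f$ is vertex-disjoint from every $e\in M'$. Suppose some $v\in f\cap e$ with $v\neq r_f$; by the structural fact $v$ is a strict descendant of~$r_f$. Since $e\in\EE'\subseteq\EE$ while $f$ was chosen with maximal top-distance, we have $\dist_{T,r}(e)\le\dist_{T,r}(f)$, so $\mytop_{T,r}(e)$ is an ancestor of~$v$ at distance at most $\dist_T(r,r_f)$; in a tree this forces $\mytop_{T,r}(e)$ to equal $r_f$ or to be an ancestor of~$r_f$. As $e$ induces a connected subtree containing both $\mytop_{T,r}(e)$ and~$v$, the path from~$r$ to~$v$ through~$r_f$ lies inside~$e$, whence $r_f\in e$, contradicting that $r_f$ is uncovered by~$M'$. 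Thus $f$ is disjoint from~$M'$ and $M$ is a matching. I expect this to be the main obstacle, since it is precisely where the greedy choice of a furthest top must be married to the subtree structure.

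Finally, for stability I would verify that no edge of $\EE\setminus M$ blocks~$M$, partitioning these edges into $\EE'\setminus M'$, the set $\remsetL_f$, and $\{f\}$ (when $f$ is not added). If $f$ is added, then $M=M'\cup\{f\}$ is a matching with $f$ disjoint from~$M'$, so every vertex covered by~$M'$ keeps the same matched edge in~$M$; this also holds trivially when $f$ is not added, as then $M=M'$. Hence every edge of $\EE'\setminus M'$—each dominated by~$M'$ by stability of~$M'$—remains dominated by~$M$ and cannot block. For $g\in\remsetL_f$ we have $r_f\in g$ and $f\succ_{r_f}g$: if $f$ was added it dominates~$g$ at~$r_f$; if $f$ was not added, then $r_f$ is covered by some $e\in M'\subseteq\EE'$ which, not lying in~$\remsetL_f$, satisfies $e\succ_{r_f}f\succ_{r_f}g$, so $g$ is again dominated at~$r_f$. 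The same coverage argument shows that if $f$ itself is not added, it is dominated at~$r_f$ by such an~$e$. As every edge outside~$M$ falls into one of these groups, $M$ is stable, completing the induction.
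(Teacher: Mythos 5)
Your proof is correct and follows essentially the same route as the paper's: induction on $|\EE|$, the key structural fact that (by maximality of $\dist_{T,r}(f)$ and the subtree property) any hyperedge meeting $f$ must contain $r_f$, and a case analysis on whether $r_f$ is covered by the recursively obtained matching. The only difference is one of detail: you actually prove the structural fact (which the paper merely asserts), though note the small wording slip there—what lies inside $e$ is the path from $\mytop_{T,r}(e)$ to $v$, not the whole path from $r$ to $v$; the conclusion $r_f \in e$ is unaffected.
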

\begin{proof}

The algorithm runs in polynomial time, because Subtree-\shm\ is called at most $|\EE|$ times recursively, and in each call, it only needs to compute the farthest hyperedge~$f$ and delete those hyperedges incident to~$r_f=\mytop_{T,r}(f)$ that are worse than~$f$ for~$r_f$.

We show that the output $M$ is a stable hypergraph matching. We use induction on $|\EE|$. If $\EE=\emptyset$, then the output $M=\emptyset$ of  Algorithm~\ref{alg:subtree} is clearly a stable hypergraph matching.

Hence, assume that $\EE\ne \emptyset$. Let $f$ be the hyperedge chosen by the algorithm in the first iteration. By induction, %
$M$ is a stable hypergraph matching in the subtree hypergraph $(V,\EE\setminus (\remsetL_f\cup \{f\} ))$. 

First, suppose that $M$ saturates $r_f$, so Algorithm \ref{alg:subtree} outputs $M$. Then, $M$ is clearly a matching. Also, no hyperedge of $\EE\setminus (\remsetL_f\cup \{f\} )$ blocks $M$ by induction. Finally, no hyperedge of $\remsetL_f\cup \{ f\}$ blocks~$M$ either, because $r_f$ is  covered with some hyperedge $e \in M$ such that $e\succ_{r_f} f\succ_{r_f}f'$ for every $f'\in \remsetL_f$ by the choice of~$\remsetL_f$. 

Suppose now that $M$ does not saturate $r_f$, and the output is therefore ${M\cup \{ f\}}$. The fact that $\HH$ is a subtree hypergraph and that $f$ is chosen so that it  maximizes $\dist_{T,r}(f)$ implies that any hyperedge that intersects~$f$ must also contain~$r_f$. Hence, as $M$ does not cover~$r_f$, $M\cup \{ f\}$ is a matching. By induction, no hyperedge of $\EE\setminus (\remsetL_f\cup \{f\} )$ blocks $M$, and thus $M\cup \{ f\}$. Finally, no hyperedge of $\remsetL_f$ blocks $M\cup \{ f\}$, because $f\succ_{r_f}f'$ for every $f'\in \remsetL_f$ and $r_f$ is saturated by $M\cup \{ f\}$.

We conclude that the output is a stable matching. 
\end{proof}

A natural question to ask whether a stable $\capac$-matching can still be found efficiently in the \shbm\ setting. Sadly, the answer is no, even with very severe restrictions.

\begin{restatable}[\linkproof{sec:app-proof-thmfa}]{theorem}{thmfa}  
\label{thm:fa}
Deciding if there exists an stable hypergraph $\capac$-matching in an instance of \subtrshbm\ is $\NP$-hard, even if the underlying tree is a star, only one vertex has capacity greater than 1, and all hyperedge sizes are at most~4.
\end{restatable}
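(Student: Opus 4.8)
The plan is to establish $\NP$-hardness by a polynomial-time reduction from a suitable variant of satisfiability, such as $3$-\textsc{Sat} or positive $1$-in-$3$-\textsc{Sat}, whose clause structure matches the size-$4$ bound. The whole construction lives on a single star: there is one \emph{center} vertex~$c$, every other vertex is a leaf, and the underlying tree is the star with center~$c$. Since a subtree of a star is either a single vertex or the center together with a set of leaves, every hyperedge of size at least~$2$ is automatically of the form $\{c\}\cup S$ with $S$ a set of leaves; capping $|S|\le 3$ keeps all hyperedge sizes at most~$4$. I would make $c$ the unique high-capacity vertex by setting $\capac(c)$ to a value determined by the instance and giving every leaf capacity~$1$, so that $c$ is the only vertex with capacity above~$1$, as required by the statement.

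Each variable~$x_i$ is represented by a constant-size \emph{variable gadget}: a small collection of leaves and hyperedges through~$c$ whose preferences are engineered to admit \emph{exactly two} locally stable configurations, one encoding $x_i=\text{true}$ and one encoding $x_i=\text{false}$, and such that no third configuration survives (in particular the two ``states'' use mutually exclusive leaves). Encoding a free Boolean choice as a stable configuration is the reason a naive two-edge gadget does not suffice: if a single shared leaf decided between the true-edge and the false-edge, that leaf would prefer one of them, so only one truth value could ever be stable. I would therefore build each gadget around an \emph{even preference cycle} routed through~$c$ and two auxiliary leaves, so that both truth assignments yield a stable local state with every discarded edge dominated inside the gadget. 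Each clause $C=(\ell_1\vee\ell_2\vee\ell_3)$ is then represented by a single \emph{clause edge} $e_C=\{c\}\cup\{\lambda(\ell_1),\lambda(\ell_2),\lambda(\ell_3)\}$, where $\lambda$ maps a literal to the gadget-leaf that is \emph{matched} precisely when that literal is satisfied. The preferences are set so that $c$ ranks every clause edge above the gadget edges---hence $e_C$ can never be dominated at~$c$---while each gadget-leaf prefers its own matched edge to any clause edge containing it. Thus $e_C$ is dominated, and so does not block, if and only if at least one of its three literals is satisfied by the chosen variable states.

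For the soundness direction I would start from a satisfying assignment, put each variable gadget into the corresponding stable state, and include no clause edges; by construction every gadget edge outside~$M$ is dominated inside its gadget, and every clause edge is dominated at the leaf of a satisfied literal, so $M$ is a stable $\capac$-matching. The completeness direction is the step I expect to be the main obstacle: assuming the formula is unsatisfiable, I must show that \emph{no} $\capac$-matching is stable. The clean case is when $M$ restricts to a genuine combination of variable states using no clause edge, for then some clause is unsatisfied and its clause edge is dominated nowhere, hence blocks~$M$. The difficulty is ruling out all ``mixed'' matchings---those that place a clause edge into~$M$, leave a gadget in a non-canonical partial state, or leave~$c$ below capacity---since each such deviation must be shown to create a fresh blocking edge, using only the preferences and the single tight capacity at~$c$. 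Designing the variable gadget so that every non-canonical local state is internally blocked, and simultaneously so that inserting any clause edge disrupts some gadget and re-introduces a block, is the crux of the proof; I would carry it out by a case analysis on, for each variable, which of its gadget edges are present in~$M$ at~$c$, arguing that any matching surviving all gadget-internal and clause-edge blocks must project onto a consistent truth assignment, which by unsatisfiability still leaves a blocking clause edge.
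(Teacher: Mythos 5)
There is a genuine gap in your proposal, and it sits exactly where you locate it yourself: the variable gadget is never constructed. You list the properties it must have (exactly two stable local states, every non-canonical partial state internally blocked, insertion of any clause edge re-creating a block), but you do not exhibit preferences achieving them, and you defer the entire completeness direction (unsatisfiable $\Rightarrow$ no stable $\capac$-matching exists) to an unperformed case analysis that you call ``the crux of the proof.'' This is not a routine detail to fill in: engineering hypergraph gadgets with exactly two stable configurations is precisely the hard content of the known NP-completeness proofs for three-dimensional stable matching. There is also a structural obstacle your sketch does not resolve: since every hyperedge contains the center~$c$, and you give $c$ a finite, instance-dependent capacity whose preference order is meant to matter, domination at~$c$ couples all gadgets globally --- whether a gadget edge is dominated at~$c$ depends on the whole matching, not just on its own gadget --- so ``locally stable configurations'' cannot be certified by local analysis without further argument.

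The paper avoids all of this with a short padding reduction. It starts from \shm\ with hyperedges of size~$3$ (three-dimensional stable matching), whose NP-completeness is already known, and forms $\HH'=(V\cup\{x\},\EE')$ with $\EE'=\{e\cup\{x\}: e\in\EE\}$, where the new vertex~$x$ gets capacity $|\EE|+1$ and arbitrary preferences, while every original vertex keeps capacity~$1$ and its old preferences. The hypergraph $\HH'$ is automatically a subtree hypergraph on a star centered at~$x$, all hyperedges have size at most~$4$, and only $x$ has capacity above~$1$. Since $x$ can never be saturated, it never dominates any edge, so an edge $e\cup\{x\}$ blocks $M'$ in the new instance if and only if $e$ blocks the projection of $M'$ in the original one; hence stable $\capac$-matchings correspond bijectively and the hardness transfers. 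In short: the paper outsources the gadget engineering to the cited hardness result and only needs a two-line stability-preservation argument, whereas your plan would require redoing that gadget engineering from scratch inside the star --- work that your write-up does not carry out.
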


\begin{Rem}
Using the reduction presented in the proof of Theorem~\ref{thm:fa} and the fact that the problem of finding a stable \emph{fractional} hypergraph matching is $\PPAD$-complete even if each hyperedge has size~$3$~\cite{csaji2022complexity}, it is straightforward to verify that the problem of finding a stable fractional hypergraph matching in an instance of \subtrshbm\ satisfying the conditions of Theorem~\ref{thm:fa} is $\PPAD$-complete too.  
\end{Rem}

\section{Conclusions}
\label{sec:conclusion}
We studied stable hypergraph matchings and $\capac$-matchings, focusing on structured hypergraph families. We introduced the University Dual Admission problem and showed it can be regarded as a special case of \tushbm. 
We identified multiple classes of hypergraphs where stable ($\capac$-)matchings can be found efficiently, and provided contrasting proofs of intractability;  see Table~\ref{tab:results} for a summary.

Besides some open questions in Table~\ref{tab:results}, the main problem we leave open for further research is whether it is possible to find a stable matching in an arbitrary instance of \tushm\ in polynomial time. Our complexity results seem to suggest that the problem might be computationally hard (e.g., PPAD-hard) in general; however, for special cases there is a potential for  efficient combinatorial algorithms.
In particular, resolving the complexity of this problem for network hypergraphs would be of great interest. 
More generally, the development of new techniques to attain stable solutions for instances of \shm\ and \shbm\ is an important direction for future research that is motivated by various practical application such as the \textsc{Hospitals/Residents with Couples} problem~\cite[Chapter 5.3]{manlove2013algorithmics}.

\section*{Funding}
{All authors acknowledge the financial support by the Hungarian Academy of Sciences under its Momentum Programme, grant number LP2021-2.
Gergely Csáji and Péter Biró acknowledge the financial support by the Hungarian Scientific Research Fund via OTKA grant
number K143858.
Gergely Csáji is further supported by the Ministry of Culture and Innovation of Hungary from the National Research, Development
and Innovation fund, financed under the KDP-2023 funding scheme (grant number C2258525). 
Ildik\'o Schlotter is supported by the Hungarian Academy of Sciences
under its J\'anos Bolyai
Research Scholarship. 

\clearpage
\begin{appendices}

\section{Motivation by Applications}
\label{sec:app-motivapplic}

     Instances of \shm\ and \shbm\ on special hypergraph classes arise when there is some underlying structure over participating agents. Such structure may be due to the physical positions of agents, organizational constraints, or some other feature of the model. Besides the real-life application of dual admission in Hungarian universities already mentioned~\cite{fleiner-ferkai-biro2019}, we briefly describe three additional examples:

\begin{description}
    \item[Budgeting in transportation networks:] Suppose that a set of cities aims to collaborate in order to renovate certain roads (or other infrastructure) running between the cities. Assume that there are several possible contracts to choose from, each  involving a subset of the cities (and the roads pertaining to these cities), with each city having preferences over the contracts it is involved in. Such a scenario can be formulated as an instance of \shm\ with cities as agents and contracts as hyperedges.
    In particular, if cities are positioned along a single road  or they are connected in a treelike fashion, 
    then the underlying hypergraph may be a subpath or a subtree hypergraph, respectively.
    \item[Project selection management:] Consider a firm whose organizational structure is a rooted tree~$T$. Suppose that there is a set of possible projects, each of them to be undertaken by different working groups, and an optimal subset of the projects needs to be selected. Naturally, the firm's employees may have different preferences over the projects, and employees may  have an upper bound on the number of projects they can participate in. 
    To incentivize employees, notions of stability may become important, and hence such a situation can be modeled as an instance of \shbm\ where agents are employees and hyperedges represent projects. 
    If each working group responsible for a project   forms a subtree of~$T$, %
    the underlying hypergraph is a subtree hypergraph. 
    In the special case when working groups always include all subordinates of any participant, %
    the underlying hypergraph is laminar.
    \item[Coalition formation in politics:] Often, political parties can be represented on a left–right axis based on their ideology. To gain power, parties may need to form coalitions, and it is reasonable to assume that possible coalitions %
    form subpaths along this axis. Then coalition formation can be thought of as an instance of \shm\ over a subtree hypergraph.
\end{description}

\section{Additional Material on Section~\ref{sec:prelim}}

\subsection{Graph-Theoretic Basics}
\label{sec:app-graphs}

Let us provide here some basic definitions and notation related to directed and undirected graphs.

An undirected (directed) graph consists of set~$V$ of vertices and a set~$E$ of edges (arcs, respectively). The edge \emph{connecting} two vertices~$u$ and~$v$ of an undirected graph is denoted by~$\{u,v\}$, while the arc \emph{leading from~$u$ to~$v$} in a directed graph is denoted by~$(u,v)$; in both cases, $u$ and $v$ are the called the endpoints of the edge or arc. 

Given an undirected or directed graph~$G$, a series~$e_1,e_2,\dots, e_k$ of edges or arcs in $G$ forms a \emph{cycle} if there are vertices $v_1,v_2,\dots, v_\ell$ in~$G$ such that for each $i \in [\ell]$, 
(a) $e_i=\{v_i v_{i+1}\}$  in the undirected case, and (b) $e_i=(v_i,v_{i+1})$  in the directed case, where in both cases the subscripts are taken modulo~$\ell$, i.e., $v_{\ell+1}=v_1$.

An undirected graph~$G=(V,E)$ is a \emph{tree} if it is \emph{acyclic}, i.e., it does not contain a cycle, and additionally, it is not possible to add an edge to~$G$ so that it remains acyclic. 
A \emph{clique} is a set of vertices in~$G$ whose vertices are pairwise adjacent, 
and an \emph{independent set} in~$G$ is a set of vertices that are pairwise non-adjacent.
The subgraph of~$G$ \emph{induced by} a set~$U$ of vertices has~$U$ as its vertex set and contains all edges of~$G$ with both endpoints in~$U$.
An \emph{arborescence} is a directed rooted tree with a unique path leading from the root to each non-root vertex.

\subsection{Hypergraph Classes}
\label{sec:app-hyp}
Here we provide Figure~\ref{fig:hypergraph_hierarchy} that describes the hierarchy of the various hypergraph classes studied in the paper.
We also prove 
a folklore fact about the relationship between laminar and subpath hypergraphs.

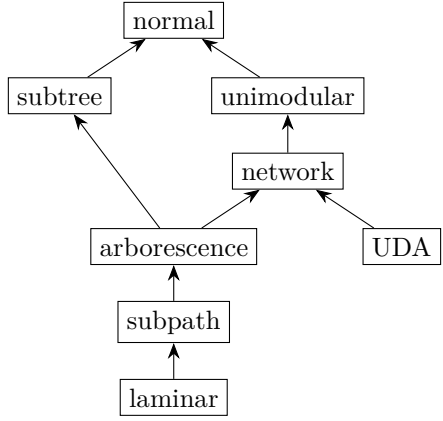
\begin{figure}[h]
    \centering
    \begin{tikzpicture}
\node[draw,rectangle] (N) at (0,0) {normal};
\node[draw,rectangle] (S) at (-1.5,-1) {subtree};
\node[draw,rectangle] (TU) at (1.5,-1) {unimodular};
\node[draw,rectangle] (NW) at (1.5,-2) {network};
\node[draw,rectangle] (A) at (0,-3) {arborescence};
\node[draw,rectangle] (RC) at (0,-4) {\rcp\ };
\node[draw,rectangle] (CC) at (3,-3) {UDA};
\node[draw,rectangle] (L) at (0,-5) {laminar};

\draw[-{Stealth[length=6pt]}] (S)--(N);
\draw[-{Stealth[length=6pt]}] (TU)--(N);
\draw[-{Stealth[length=6pt]}] (NW)--(TU);
\draw[-{Stealth[length=6pt]}] (A)--(S);
\draw[-{Stealth[length=6pt]}] (A)--(NW);
\draw[-{Stealth[length=6pt]}] (RC)--(A);
\draw[-{Stealth[length=6pt]}] (CC)--(NW);
\draw[-{Stealth[length=6pt]}] (L)--(RC);
    \end{tikzpicture}
    \caption{Connections between various classes of hypergraphs. An arrow from~$\mathcal{A}$ to~$\mathcal{B}$ means that hypergraphs with property~$\mathcal{A}$ are a subset of hypergraphs with property~$\mathcal{B}$. For the normality of subtree and unimodular hypergraphs, see~\cite[Section 4.4]{bretto-hypergraphs} and~\cite[Page~163]{berge-hypergraphs}, resp. In the figure, UDA stands for the class of hypergraphs that arise in our \textsc{University Dual Admission} problem, formally defined in Definition~\ref{def:uda-hypergraph}.    
}
    \label{fig:hypergraph_hierarchy}
\end{figure}

\begin{proposition}
\label{prop:laminar_RC1P}
    Laminar hypergraphs are \rcp\ hypergraphs.
\end{proposition}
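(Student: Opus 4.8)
The plan is to exhibit a linear ordering $v_1,\ldots,v_n$ of the vertex set~$V$ witnessing the \rcp\ property, that is, an ordering in which every hyperedge $e\in\EE$ has the form $\{v_i,v_{i+1},\ldots,v_j\}$; a directed path on $v_1,\ldots,v_n$ then serves as the required underlying path. I would obtain this ordering by exploiting the tree-like structure of a laminar family.

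First I would organize $\EE$ (assuming all hyperedges nonempty) into a rooted forest~$\mathcal{F}$ whose nodes are the hyperedges, where $e$ is made a child of~$e'$ precisely when $e\subsetneq e'$ and no hyperedge $e''$ satisfies $e\subsetneq e''\subsetneq e'$. Laminarity makes this parent unique whenever it exists: if two distinct minimal supersets $e',e''$ of~$e$ existed, then since both contain the nonempty set~$e$ they are not disjoint, so by laminarity one is nested in the other, contradicting minimality. Thus $\mathcal{F}$ is well defined, with the inclusion-wise maximal hyperedges as roots. I would also record that any two siblings are disjoint: if two children of~$e$ intersected, laminarity would nest one inside the other, contradicting that $e$ is the minimal superset of the smaller one.

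Next I would define the ordering by a depth-first traversal of~$\mathcal{F}$. Recursively, for a hyperedge~$e$ with children $e_1,\ldots,e_k$, I place the (recursively ordered) blocks of $e_1,\ldots,e_k$ consecutively, followed by the vertices of~$e$ that lie in no child, in arbitrary order; at the top level I concatenate the blocks of the root hyperedges and then append all ``free'' vertices belonging to no hyperedge. The key invariant, proved by induction on the height in~$\mathcal{F}$, is that the vertices of every hyperedge~$e$ occupy a contiguous interval: the block of~$e$ is exactly the union of its children's blocks, which are contiguous and pairwise disjoint, together with $e$'s leftover vertices, all laid down consecutively.

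The only real obstacle is the inductive contiguity step, where one must verify that inserting a child's block into its parent's block never splits another hyperedge; this reduces precisely to the defining property of laminarity, namely that any two hyperedges are either nested or disjoint, so no hyperedge can straddle the boundary between a child block and the surrounding leftover vertices. Once contiguity is established for every $e\in\EE$, reading off the endpoints $i,j$ of each interval shows $e=\{v_i,\ldots,v_j\}$, so the ordering $v_1,\ldots,v_n$ witnesses that $\HH$ is a \rcp\ hypergraph, completing the proof.
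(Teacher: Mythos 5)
Your proof is correct and takes essentially the same approach as the paper's: the paper's recursive procedure $\mathsf{List}(\HH)$ (pick a maximal edge, list the vertices belonging only to it, recurse on the nested part and on the disjoint remainder) is precisely a depth-first traversal of the laminar containment forest that you construct explicitly, with the same contiguity invariant for each edge's block of vertices. The only cosmetic differences are that the paper leaves the forest implicit and outputs an edge's leftover vertices before, rather than after, its children's blocks.
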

\begin{proof}
    Consider a laminar hypergraph $\HH=(V,\EE)$. We say that a vertex~$v \in V$ \emph{belongs} to a given edge~$e \in \EE$ if $v \in e$ but there is no edge $f\subset e$, $f\in \EE$ that contains~$v$. 
    We use a recursive procedure called $\mathsf{List}(\HH)$ to list the vertices of~$\HH$ in an ordering that all hyperedges contain consecutive vertices. From that it is easy to see that the hyperedges indeed correspond to some subpaths of a directed path over $V$.
    \begin{itemize}[leftmargin=1.8cm]
        \item[Step 1.] Take a maximal edge~$e_{\max}$ in~$\HH$, and list all vertices belonging to~$e_{\max}$.
        \item[Step 2.] Let $\HH'$ be the hypergraph obtained restricting~$\HH$ to all vertices that are contained in~$e_{\max}$ but do not belong to~$e_{\max}$. \\
        Perform the recursive call $\mathsf{List}(\HH')$.
        \item[Step 3.] Let $\HH''$ be the subhypergraph of~$\HH$ induced by all vertices that are not in~$e_{\max}$. \\
        Perform the recursive call $\mathsf{List}(\HH'')$.
    \end{itemize}

    Since $\HH$ is laminar, we know that $\HH$ is the disjoint union of the hypergraph obtained by adding~$e_{\max}$ to~$\HH'$ and hypergraph~$\HH''$. Hence, the procedure lists each vertex of~$\HH$ exactly once.
    
    Consider any edge~$e \in \EE$, and observe that the first moment when a vertex of~$e$ is listed must be at the point where $e$ is a maximal edge in Step~1. Then the procedure lists all vertices belonging to~$e$, and 
    then  calls $\mathsf{List}(\HH'_e)$  where $\HH'_e$ contains exactly those edges in~$\EE \setminus \{e\}$ that are subsets of~$e$. 
    This means that during the call $\mathsf{List}(\HH'_e)$, including all of its recursive calls, exactly those vertices are listed that are contained in~$e$ but do not belong to~$e$. 

    Since no subset of~$e$ is taken as the maximal edge in Step~1 before $e$ is taken, we know that the vertices listed up to this moment are not in~$e$.
    This shows that the ordering of~$V$ created by $\mathsf{List}(\HH)$ yields an ordering, where~$e$ contains consecutive vertices. As this holds for each edge~$e \in \EE$, we obtain that $\HH$ is an  \rcp\ hypergraph.
\end{proof}

\section{Additional Material for Section~\ref{sec:uda}}
\label{sec:app-uda}

We start off by describing an instance, where a Gale-Shapley type proposal-rejection algorithm cycles.

\begin{Ex}
Consider an instance of \uda\ with students $s_1,s_2,s_3,s_4$, universities $u_1,u_2$ and programs $p_{11},p_{12},p_{13},p_2=p_{21}$. 
Let $u_1$ have capacity~2,~$u_2$ have capacity~1, and let  every program have quota~1. The preferences are as follows.
\begin{center}
\begin{tabular}{r@{\hspace{3pt}}lr@{\hspace{3pt}}lr@{\hspace{3pt}}l}
$s_1:$ & $p_{11}$ & $p_{11}: $ & $s_1\succ s_3  \quad$ & $u_1:$ & $s_4\succ s_3\succ s_2\succ s_1$\\
$s_2:$ & $p_{12}$ & $p_{12}:$ & $s_2$ & $u_2:$ & $s_3\succ s_4$\\
$s_3:$ & $p_{11}\succ p_2 \quad$ & $p_{13}:$ & $s_4 $ & &\\
$s_4:$ & $p_2\succ p_{13}$ & $p_2:$ & $s_3\succ s_4$\\

\end{tabular}
\end{center}
We consider a variant of the Gale-Shapley algorithm where students propose, and programs as well as universities may accept or reject these proposals.
Let the proposing order of the students be $s_3,s_4,s_1,s_2$. Then in the first run of the Gale-Shapley algorithm, the proposals are $s_3\to p_{11}$, 
$s_4\to p_2$, 
$s_1\to p_{11}$ (leading to the rejection of~$s_3$), 
$s_2\to p_{12}$, 
$s_3\to p_2$ (leading to the rejection of~$s_4$), 
$s_4\to p_{13}$ 
 (leading to the rejection of~$s_1$ by~$u_1$). After this first round of proposals, the current assignment is
\[\M=\{(s_2,u_1,p_{12}),(s_3,u_2,p_2),(s_4,u_1,p_{13})\}\] 
which still admits a blocking triple $(s_3,u_1,p_{11})$. 

Thus, let us  run the Gale-Shapley algorithm again, or more precisely, let us start another round of proposals where each student starts with~$\M$ as her current allocation, but the proposal pointers are reset. Then the proposals of the second round are: 
$s_3\to p_{11}$ (leading to the rejection of~$s_2$ by~$u_1$), $s_4\to p_2$, 
$s_1\to p_{11}$ (leading to the rejection of~$s_3$), 
$s_2\to p_{12}$,
$s_3\to p_2$ (leading to the rejection of~$s_4$), 
$s_4\to p_{13}$ (leading to the rejection of~$s_1$ by~$u_1$). 
Hence, after the second iteration we arrive at the same (unstable) assignment~$\M$. Hence, this variant of the Gale--Shapley algorithm never terminates on this instance. 
\end{Ex}

Intuitively, the problem that makes \uda\ difficult is the following. Assume for simplicity that programs have capacity one. While a university~$u_i$ is unsaturated, then in case there is a demand for the same program by at least two students, then since we do not know in advance whether the university will be filled at the end, we must let the program decide on whom to reject. However, this can lead to rejecting the worse student for~$u_i$. This can pose a problem, because if~$u_i$ gets saturated later and there is a demand for a free program at~$u_i$ by a student that $u_i$ prefers, then we have to reject a student from some program~$p_{ik}$. This can lead to some student $s_j$ creating a blocking triple $(s_j,u_i,p_{ik})$, because as we explained earlier, $s_j$ might have been necessarily rejected by the program. Finally, letting students propose again can easily lead to universities getting unsaturated again and thus to the algorithm cycling. 

\medskip
For the rest of the section, we first present all proofs that were omitted from Section~\ref{sec:uda} in Section~\ref{sec:app-uda-proofs}, and 
then proceed in Section~\ref{sec:app-uda-ILP} with an ILP formulation for the \uda\ problem. Then,  
we describe the notion of \emph{half-stability} and provide an algorithm to find a half-stable matching
in Section~\ref{sec:half-stab}.
Finally, we conclude in Section~\ref{sec:uda-allone} by showing that \uda\ is polynomial-time solvable if each university has capacity one.

\subsection{Proofs omitted from Section~\ref{sec:uda}}
\label{sec:app-uda-proofs}

\subsubsection*{Proof of Theorem~\ref{thm:uda-reduces-to-shm}}
\label{sec:app-proof-uda-reduces-to-shm}
\thmudareducestoshm*
\begin{proof}
Suppose that $\M$ is a stable assignment for~$I$. We define a  corresponding hypergraph $\capac^{\HH}$-matching as $M=\{\{s_j,u_i,p_{ik}\}:s_j \in S,p_{ik}=\M(s_j) \neq \nomatch\}$. To see that $M$ is indeed a stable $\capac^{\HH}$-matching for~$\HH_I$, assume for the sake of contradiction that a hyperedge $e=\{ s_j,u_i,p_{ik}\}$ blocks~$M$. This means that for each agent~$a$ incident to~$e$, either~$a$ is unsaturated, or prefers~$e$ to some hyperedge~$e' \in M$ incident to~$a$. By the definition of~$\succ^{\HH}$, this implies that 
(i) $s_j$ prefers $p_{ik}$ to $\M(s_j)$ (this allows $\M(s_j)=\nomatch$), 
(ii) $u_i$ is unsaturated in~$\M$, 
or there is a student $s_{j'}\in \M(u_i)$ such that $s_j\succ_{u_i}s_{j'}$, or $s_j\in \M(u_i)$ (recall that $u_i$ ranks hyperedges according to its preferences over the students, and ranks two hyperedges containing the same student according to the student's preferences), and 
(iii) $p_{ik}$ is either unsaturated or there is a student $s_{j''}\in \M(p_{ik})$ such that $s_j\succ_{p_{ik}} s_{j''}$. Hence, $e$ satisfies the conditions in Definition~\ref{def:uda-stability} for blocking~$\M$, contradiction. 

For the other direction, assume that $M$ is a stable hypergraph $\capac^{\HH}$-matching for~$\HH_I$. Define the corresponding assignment~$\M$ for~$I$ by setting $\M(s_j)=p_{ik}$ if $M$ contains a hyperedge $\{ s_j,u_i,p_{ik}\}$, and setting~$\M(s_j)=\nomatch$ if no hyperedge in~$M$ contains~$s_j$; recall that $\capac^{\HH}(s_j)=1$, so there can be at most one hyperedge in~$M$ incident to~$s_j$, so the assignment~$\M$ is well-defined. By the feasibility of~$M$, we know that $\M$ is an assignment that respects the capacities of the universities and the quotas of programs.
It remains to show that a blocking triple for the assignment~$\M$ would also be a blocking hyperedge for~$M$, by our definition of the extended preferences~$\succ^{\HH}$. So assume that the triple $ (s_j,u_i,p_{ik})$ blocks the assignment~$\M$.

Note first that since it blocks~$\M$, either $\M(s_j)=\nomatch$, or the preferences~$\succ^{\HH}_{s_j}$ are such that $s_j$ prefers~$e=\{ s_,,u_i,p_{ik}\}$ to the hyperedge $e'=\{ s_j,u_{i'},\M(s_j)\} \in M$ where $u_{i'}$ is the university offering the program~$\M(s_j)$.
Hence, the condition on~$s_j$ in 
Definition~\ref{def:shm-stability} necessary for~$e$ to be a blocking hyperedge for~$M$ is satisfied.
Second, if the university $u_i$ is unsaturated or $\M(u_i)$ contains a student~$s_{j'}$ for which $s_j \succ_{u_i} s_{j'}$, then the condition on~$u_i$ in 
Definition~\ref{def:shm-stability} necessary for~$e$ to block~$M$ is satisfied as well, because
$e \succ^{\HH}_{u_i} e''$ holds for the hyperedge $e''=\{ s_{j'},u_i,\M(s_{j'})\} \in M$.
For the case when $s_j \in \M(u_i)$, recall that we know $p_{ik} \succ_{s_j} \M(s_j)$, 
so by construction we get $e \succ^{\HH}_{u_i} \{s_j,u_i,\M(s_j)\} \in M$. Thus, $u_i$ fulfills the necessary condition of
Definition~\ref{def:shm-stability} for~$e$ to block~$M$ in all three cases.
Third, the program~$p_{ik}$ is either unsaturated, or $e \succ^{\HH}_{p_{ik}} e'''$ for some hyperedge $e'''=\{ s_{j''},u_i,p_{ik}\} \in M$ where $s_{j''}$ is the student whose existence is guaranteed by the third condition of Definition~\ref{def:uda-stability}, using that the triple~$(s_j,u_i,p_{ik})$ blocks~$\M$. 
Therefore, we can conclude that $e$ as a hyperedge blocks~$M$, a contradiction proving that~$\M$ must be stable. 
\end{proof}

\subsubsection*{Proof of Theorem~\ref{thm:max-uda}}
\label{sec:app-proof-max-uda}

\thmmaxuda*
\begin{proof}
We reduce from the  \textsc{com-smti} problem, the problem of deciding whether an instance of \textsc{stable matching with ties and incomplete preferences} admits a complete stable matching. The input instance~$I$ of \textsc{com-smti} contains a set $V_w=\{w_1,\dots,w_n\}$ of women, a set~$V_m=\{m_1,\dots, m_n\}$ of men, and a preference list~$\succ_a$ for each person $a \in V_w \cup V_m$, which is a weak ordering over a subset of persons from the opposite sex. We will assume that each man~$m_i$ has a strict preference list, and each woman~$w_j$ has either a strict preference list, or her preference list is a tie of length two; it is known that \textsc{com-smti} remains $\NP$-hard even in this special case~\cite{MIIMM02}. We may assume w.l.o.g.\ that the set of women whose preference list is a tie is $\{ w_1,\dots,w_{\ell} \}$.

Create an instance $I'$ of \uda\ as follows.
For each woman $w_i \in V_w$ we create a university $u_i$ along with a program $p_i=p_{i1}$, where both the capacity of~$u_i$ and the quota of~$p_i$ are~$1$. 
For each man $m_j \in V_m$, we create a student~$s_j$. 
The preference list of each student $s_j$ is inherited from $m_j$, by replacing each woman~$w_i$ in the preference list of~$m_j$ with the program $p_j$. 
For some $i \in [\ell]$, there are two acceptable triples containing university~$u_i$ and program~$p_i$, namely $(s_k,u_i,p_i)$ and~$(s_l,u_i,p_i)$ where $m_k$ and~$m_l$ ($k<l$) are the two acceptable partners of~$w_i$. The preferences of~$u_i$ and~$p_i$ ($i \in [\ell]$) are such that $s_k \succ_{u_i} s_l$ and $s_l\succ_{p_i} s_k$. For the remaining universities and programs, i.e., for $i \in [n ] \setminus [\ell]$, the preference lists of~$u_i$ and~$p_i$ are the same and are inherited from $w_i$, by replacing each man~$m_j$ in the preference list of~$w_i$ with the student~ $s_j$. We set each capacity to be~$1$ and the desired size of a stable assignment for~$I'$ as $t=n$.

We claim that there is a complete stable matching in~$I$ if and only if there is a stable assignment matching of size $t=n$ in~$I'$. 

Let $M$ be a complete stable matching in~$I$. Let $\M$ be the assignment obtained by setting $\M (s_j)=p_i$ for each $\{w_i,m_j\} \in M$. As $M$ is a matching covering each person exactly once, $\M$ is a feasible assignment of size~$n$ (in which no agent is unsaturated). Suppose that there is a blocking triple $(s_j,u_i,p_i)$ for~$\M$. Then $s_j$ prefers $p_i$ to $\M (s_j)=p_{i'}$, which means that $m_j$ prefers~$w_{i}$ to $w_{i'}=M(s_j)$. 
Similarly, $u_i$ and $p_i$ both prefer $s_j$ to the unique student assigned to $p_i$ in~$\M$. This implies that 
$u_i$ and~$p_i$ must correspond to a woman~$w_i$ whose preference list is strict, as otherwise $u_i$ and $p_i$ would have opposite rankings over the students who apply there and one of them would prefer $\M (p_i)$ to $s_j$. Thus, we obtain that $w_i$ prefers~$m_j$ to~$m_{j'}=M(w_i)$. This proves that 
$\{ w_i,m_j\}$ blocks $M$, a contradiction to the stability of~$M$. Thus, $\M$ is stable as required.

Suppose now that we have a stable assignment~$\M$ of size~$n$ in~$I'$. Observe that for each $i \in [n]$. Since all capacities in~$I'$ are~$1$, we can create a matching $M$ in~$I$ by adding $\{w_i,m_j\}$ whenever $\M (s_j)=p_i$. Clearly, $M$ has size~$n$, hence it is complete. Suppose that $\{ m_j,w_i\}$ blocks $M$. Then $w_i$ must have a strict preference list, and thus it is straightforward to check that the triple $(s_j,u_i,p_i)$ blocks~$\M$, a contradiction. Hence, $M$ is stable, as required.
\end{proof}

\subsection{An Integer Programming Formulation for \uda}
\label{sec:app-uda-ILP}

Let $I=(S,U,P,\capacuda,\capacquota,(\succ_a)_{a \in S \cup U \cup P})$ be an instance of \uda.
Without loss of generality, we may suppose that each program has quota one, as we can replace each program~$p$ with~$q(p)$ clones $p^1,\dots,p^{q(p)}$, each having capacity~$1$ and retaining the preferences of~$p$; we then replace $p$ with the list~$p^1,\dots,p^{q(p)}$ in the preference lists of students. 

 For convenience, we rely on the corresponding \tushbm\ instance $I'=(\HH_I,b^{\HH},(\succ_a^{\HH})_{a\in S\cup U\cup P})$. Let $\HH_I=(S \cup U \cup P,\EE_I)$ be the hypergraph associated with~$I$, and let $|\EE_I|=m$.

\leqnomode
\begin{align}
\tag{IP$_{\textup{UDA}}$} 
\label{IP:UDA}
\phantom{a}\\[-20pt] \notag
\sum_{e\ni s}x_{e} & \le 1 \quad & \forall s\in S \\ \notag
   \sum_{e\ni u}x_{e} & \le \capacuda(u)  \quad 
   & \forall u\in U \\ \notag
   \sum_{e\ni p}x_{e} & \le 1 \quad & \forall p\in P \\ \notag
    \capacuda(u)\bigg(\sum_{e'\succ^{\HH}_se}x_{e'} +\sum_{e'\succ^{\HH}_pe}x_{e'}+x_e\bigg)+\sum_{e'\succ^{\HH}_ue}x_{e'} & \ge \capacuda(u) \quad 
    & \forall e=(s,u,p)\in \EE_I \\ \notag
    x\in \{ 0,1\}^m  & &
\end{align}

Consider an arbitrary solution~$x$ to~\ref{IP:UDA}.
Clearly, the first three inequalities in~\ref{IP:UDA} ensure that $x$ corresponds to a $\capac^{\HH}$-matching~$M_x$ in~$I'$ whose characteristic vector is~$x$.
By the last inequality, for each hyperedge $e=\{ s,u,p\}$ it holds that either $e \in M_x$, or there is a hyperedge~$e' \in M$ that preferred to~$e$ by either~$s$ or~$p$, or $\sum_{e'\succ_ue}x_{e'}\ge \capacuda (u)$, so $u$ is saturated in~$M_x$ with hyperedges that it prefers to~$e$. Hence, $M_x$ is stable. 

Using the above observations, it is straightforward to verify that the stable matchings for~$\HH_I$ 
are exactly the solutions of~\ref{IP:UDA}. Therefore, as we have seen, this IP is guaranteed to have a solution.

By Theorem~\ref{thm:uda-reduces-to-shm}, $M_x$ also gives a stable assignment $\M_x$ to the \uda\ instance $I$.

\subsection{A weaker version of stability for \uda}\label{sec:half-stab}

We can define stability in \uda\ in a slightly weaker, but still natural way. To avoid confusion with previous notions of weak and strong stability in the literature, we will call this relaxed notion \emph{half-stability}.

\begin{Def}
A feasible assignment $\M$ is \textit{half-stable} if there is no $(s_j,u_i,p_{ik})$ triple, such that 
\begin{itemize}
    \item [--]$(s_j,u_i,p_{ik})$ blocks $\M$ and
    \item[--] if $p_{ik}$ is saturated, then there is a student $s_{j'} \in \M(p_{ik})$, who is hence also in $\M(u_i)$, such that $s_j\succ_{u_i} s_{j'}$ and $s_j\succ_{p_{ik}} s_{j'}$.
 \end{itemize}
 If there is such a triple, then we call it \textit{doubly blocking}. 
\end{Def}
The motivation behind this definition is that if there is a blocking triple $(s_j,u_i,p_{ik})$, and $p_{ik}$ is saturated, then in order to accept $s_j$ to $p_{ik}$, we have to drop a student from~$p_{ik}$, hence also from $u_i$. However, if there is no student that is worse for both $u_i$ and $p_{ik}$, either $u_i$ or $p_{ik}$ would have to loose a better student in exchange for $s_j$, so they may prefer not to participate in the blocking after all. 

Of course, a stable assignment $\M$ is also \textit{half-stable}, hence a half-stable assignment always exists. We show that one can also be found in polynomial time.

\begin{theorem}
\label{thm:uda-halfstability}
    For an instance of \uda\, a half-stable assignment can be found in $\mathcal{O}(|\EE|)$ time where $\EE$ is the set of acceptable triples.
\end{theorem}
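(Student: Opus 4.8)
The plan is to reduce \uda\ to the \textsc{Student--Project Allocation} problem with lecturer preferences (SPA-S) studied by Abraham, Irving, and Manlove~\cite{AIM07}, by simply \emph{discarding the programs' own preferences}. Given an instance of \uda, I would build an SPA-S instance whose students are the students, whose projects are the programs (with the same quotas $\capacquota$), whose lecturers are the universities (with capacities $\capacuda$ and the rankings $\succ_{u_i}$ over students), and in which program $p_{ik}$ is offered by the university $u_i$ owning it; students keep their lists over programs. The crucial point is that in SPA-S admission to a program is governed \emph{entirely} by the university's ranking $\succ_{u_i}$, so the conflict between $\succ_{u_i}$ and $\succ_{p_{ik}}$ that makes full stability hard disappears. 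This is precisely the slack that half-stability grants us.

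Running the student-oriented SPA-S algorithm of~\cite{AIM07} produces an SPA-stable matching~$\M$ in time linear in the total length of the preference lists, which is $\mathcal{O}(|\EE|)$ since each acceptable triple contributes exactly one list entry. Feasibility of~$\M$ (respecting every $\capacuda$ and $\capacquota$) is immediate from the definition of an SPA matching, so the only thing left to prove is that an SPA-stable~$\M$ is half-stable for the original \uda\ instance.

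The heart of the argument is a case analysis establishing: if $(s_j,u_i,p_{ik})$ doubly blocks~$\M$, then $(s_j,p_{ik})$ is an SPA-S blocking pair, contradicting SPA-stability. Condition~(i) of blocking in Definition~\ref{def:uda-stability} states $p_{ik} \succ_{s_j} \M(s_j)$, which is exactly the first requirement of an SPA-S blocking pair; it remains to match the rest to the three SPA-S blocking types. If $p_{ik}$ is unsaturated, the triple is doubly blocking as soon as it blocks, and Definition~\ref{def:uda-stability}(ii) splits into three SPA-S cases: $u_i$ unsaturated yields SPA-S type~1 (project and lecturer both undersubscribed); a student in $\M(u_i)$ worse for~$u_i$ than~$s_j$ yields type~2 (lecturer full and preferring~$s_j$ to its worst); and $s_j \in \M(u_i)$ yields type~2 through the ``$s_j$ already assigned to the lecturer'' clause. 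If $p_{ik}$ is saturated, double blocking additionally supplies some $s' \in \M(p_{ik})$ with $s_j \succ_{u_i} s'$ (the condition $s_j \succ_{p_{ik}} s'$ is not even needed); since $p_{ik}$ is full and $u_i$ prefers~$s_j$ to~$s'$, hence to the $\succ_{u_i}$-worst student of $\M(p_{ik})$, this is an SPA-S blocking pair of type~3. In every case $\M$ fails to be SPA-stable, a contradiction.

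I expect the main obstacle to be verifying that \emph{every} way a triple can doubly block \uda\ is captured by one of the three SPA-S blocking-pair types; in particular, one must check that the university conditions of Definition~\ref{def:uda-stability}(ii) line up precisely with the lecturer conditions in the SPA-S blocking definition, and that the saturated-program case requires only the \emph{university} comparison $s_j \succ_{u_i} s'$, so the programs' preferences are never consulted. Once this alignment is confirmed, SPA-stability $\Rightarrow$ half-stability follows, and the $\mathcal{O}(|\EE|)$ bound is inherited directly from~\cite{AIM07}.
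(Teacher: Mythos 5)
Your proposal is correct and follows essentially the same route as the paper: the paper also reduces \uda\ to the \textsc{Student--Project Allocation} problem of Abraham, Irving and Manlove by replacing each program's ranking with the owning university's ranking $\succ_{u_i}$, invokes their linear-time algorithm, and then derives a contradiction by showing that a doubly blocking triple $(s_j,u_i,p_{ik})$ would force an SPA blocking pair (in particular noting, as you do, that in the saturated-program case only the comparison $s_j \succ_{u_i} s_{j'}$ is needed). Your case analysis against the three SPA-S blocking types is just a more explicit rendering of the same argument.
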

\begin{proof}
We reduce this problem to the  \textsc{Student--Project Allocation} (or \textsc{spa}) problem which can be solved in linear time due to the results of Abraham, Irving and Manlove~\cite{AIM07}. This problem can be thought of as the special case of \uda\ where the preference ordering of each program~$p$ coincides with the preference ordering of the university offering~$p$. 

Given an instance $I$
of \uda, 
we construct an instance~$I'$ of \textsc{spa} by simply setting the preferences of each program~$p_{ik}$ to be identical to~$\succ_{u_i}$. A stable matching~$M$ in~$I'$, which can be found in linear time~\cite{AIM07}, 
is automatically a half-stable assignment $\M$ in~$I$ as well.

To see this, assume for the sake of a contradiction that
$(s_j,u_i,p_{ik})$ doubly blocks~$\M$. 
Then $s$ prefers $p_{ik}$ to $\M(s_j)$ (allowing the possibility that $\M(s_j)=\nomatch)$. 
By the stability of~$M$ in~$I'$, this means that either the program~$p_{ik}$ or the university~$u_i$ was saturated by students that were better than~$s_j$ according to~$\succ_{u_i}$. If~$u_i$ is saturated by students better than~$s_j$, then this contradicts our assumption that $(s_j,u_i,p_{ik})$ (doubly) blocks~$\M$ in~$I$. 
Thus, it must be the case that $p_{ik}$ is saturated with students that $u_i$ prefers to~$s_j$; however, then there is no student in $\M(p_{ik})$ that is worse than~$s_j$ for both~$p_{ik}$ and~$u_i$, contradicting our assumption that $(s_j,u_i,p_{ik})$ doubly blocks~$\M$. 
\end{proof}

\subsection{Solving \uda\ with unit capacities}\label{sec:uda-allone}

Consider an instance $I=(S,U,P,\capacuda, \capacquota, (\succ_v)_{v\in S\cup U \cup P})$, where $\capacuda (u)=1$ for all universities $u\in U$. 
We show that the following simple version of the deferred acceptance algorithm by Gale and Shapley finds a stable matching for~$I$. 

Initially, every student is unassigned. While there exists an unassigned student, we do the following.
Let $s_j$ be an unassigned student, and let $ p_{ik}$ be the best program for $s_j$ that has not rejected $s_j$ yet. We let $s_j$ propose to $p_{ik}\in P_i$. However, the university will be the one who decides whether $s_j$ is rejected or not. That is, if there are no students at $u_i$, then $s_j$ gets assigned to $p_{ik}$. If there is a student $s_{j'}$ such that $s_{j'}\succ_{u_i}s_j$, then $s_j$ is rejected. Finally, if $u_i$ has a student $s_{j'}$, but he is worse than $s_j$, then $s_j$ gets assigned to $p_{ik}$ and $s_{j'}$ gets rejected.

Let the assignment thus obtained be $\M$; we claim that $\mu$ is stable. Suppose for the contrary that a triple $(s,u,p)$ blocks $\M$. Then, $p\succ_s \M (p)$, so $s$ has proposed to $p$, but got rejected. Hence, at that point $u$ had a better student~$s'$ assigned. Since no student is rejected for another student that is worse for the university, it follows that $u$ has a better student assigned in $\M$ than $s$. Finally, as $\capacuda (u)=1$, this implies that $u$ is saturated with better students than $s$, a contradiction to $(s,u,p)$ blocking~$\mu$.

\medskip
The intuitive reason why the above algorithm works is that for unit capacities stability coincides with half-stability. 
Indeed, if $(s_j,u_i,p_{ik})$ is a triple that blocks a matching~$\mu$ for~$I$ and $p_{ik}$ is saturated by~$\mu$, then by $\capacuda(u_i)=1$ there must be a unique student~$s_{j'}$ assigned to~$p_{ik}$ and to~$u_i$, and since $(s_j,u_i,p_{ik})$ blocks~$\mu$, it immediately follows that $s_j \succ_{u_i} s_{j'}$ and $s_j \succ_{p_{ik}} s_{j'}$, i.e.,  the triple $(s_j,u_i,p_{ik})$ doubly blocks~$\mu$. This means that half-stability implies stability and, hence, Theorem~\ref{thm:uda-halfstability} has the following consequence.

\begin{Cor}
    For an instance of \uda\ with unit capacities, a stable assignment can be found in $O(|\EE|)$ time where $\EE$ is the set of acceptable triples.
\end{Cor}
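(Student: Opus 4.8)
The plan is to obtain the corollary as a direct consequence of Theorem~\ref{thm:uda-halfstability}, which already produces a half-stable assignment in $O(|\EE|)$ time. The single extra ingredient is the claim that, under the unit-capacity hypothesis $\capacuda(u)=1$ for every university $u \in U$, the notions of stability and half-stability coincide. Once this equivalence is in place, the half-stable assignment returned by the algorithm of Theorem~\ref{thm:uda-halfstability} is automatically stable, so both the correctness and the $O(|\EE|)$ running time transfer immediately.

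To prove the equivalence I would first note the trivial direction: every stable assignment is half-stable, because a doubly blocking triple is in particular a blocking triple, so the absence of blocking triples rules out doubly blocking ones. For the converse, I would suppose toward a contradiction that a half-stable assignment~$\M$ admits a blocking triple $(s_j,u_i,p_{ik})$, and then argue that this triple is in fact doubly blocking, contradicting half-stability. If $p_{ik}$ is unsaturated in~$\M$, the additional requirement in the definition of a doubly blocking triple is vacuous, so $(s_j,u_i,p_{ik})$ is already doubly blocking and we are done.

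The substantive case is when $p_{ik}$ is saturated, and this is precisely where $\capacuda(u_i)=1$ is used. Since $|\M(u_i)|\le 1$ and $\M(p_{ik}) \subseteq \M(u_i)$, the saturated program $p_{ik}$ contains exactly one student~$s_{j'}$, who is simultaneously the unique student assigned to~$u_i$ (so $u_i$ is saturated as well). Because $(s_j,u_i,p_{ik})$ blocks~$\M$, condition~(iii) of Definition~\ref{def:uda-stability} applied to the saturated $p_{ik}$ forces $s_j \succ_{p_{ik}} s_{j'}$. For condition~(ii), the branch ``$u_i$ unsaturated'' is excluded since $|\M(u_i)|=1$, and the branch ``$s_j \in \M(u_i)$'' would give $s_j=s_{j'}$ and hence $\M(s_j)=p_{ik}$, contradicting $p_{ik}\succ_{s_j}\M(s_j)$ from condition~(i); thus the only surviving branch yields $s_j \succ_{u_i} s_{j'}$. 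Hence $s_{j'}$ is worse than $s_j$ for both $u_i$ and $p_{ik}$, so $(s_j,u_i,p_{ik})$ is doubly blocking, the desired contradiction.

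There is no real obstacle here; the argument is a short case analysis. The only point that requires care is the collapse of the two potential blocking witnesses---one at the university and one at the program---into the same student $s_{j'}$, which is exactly what $\capacuda(u_i)=1$ guarantees (and what fails for larger capacities, matching the fact that the full \uda\ problem resists a simple Gale--Shapley approach). With the equivalence established, the corollary follows at once: run the linear-time algorithm of Theorem~\ref{thm:uda-halfstability} to get a half-stable assignment, which under unit capacities is stable.
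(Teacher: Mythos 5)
Your proposal is correct and follows essentially the same route as the paper: the corollary there is derived precisely by observing that under unit university capacities a blocking triple with a saturated program must be doubly blocking (the unique student in $\M(p_{ik})=\M(u_i)$ serving as the common witness), so half-stability coincides with stability and Theorem~\ref{thm:uda-halfstability} applies. Your case analysis is in fact slightly more careful than the paper's, which asserts $s_j \succ_{u_i} s_{j'}$ ``immediately'' without explicitly ruling out the $s_j \in \M(u_i)$ branch of condition~(ii) as you do.
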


\section{Additional Material for  Section~\ref{sec:algorithms}}

We first present all omitted proofs from Sections~\ref{sec:laminar} and~\ref{sec:rc1p} in Sections~\ref{sec:app-laminar-proofs} and~\ref{sec:app-rc1p-proofs}, respectively.

\subsection{Omitted proofs from Section~\ref{sec:laminar}}
\label{sec:app-laminar-proofs}

\subsubsection*{Proof of Theorem~\ref{thm:laminar-weighted-nphard}}
\label{sec:app-proof-laminar-weighted-nphard}

\thmrcpconstrainednphard*
\begin{proof}
It is clear that the problem is in $\NP$. To prove its $\NP$-hardness, 
we present a reductions from the {\sc cnf-sat} problem.
The input of {\sc cnf-sat} is a CNF formula~$\varphi$ defined over variables~$x_1, \dots, x_n$ 
and containing clauses $c_1,\dots,c_m$; the task is to decide whether $\varphi$ is satisfiable.

\medskip
\noindent
{\bf Construction.}
We are going to define a hypergraph~$\HH$ with vertex set $V=\{z\} \cup \{v_i,\bar{v}_i : i \in [n]\}$
and edge set $\EE=\{e_z\} \cup E_V \cup F$ where $E_V=\{e_i,\bar{e}_i : i \in [n]\}$ 
and $F=\{f_j: j \in [m]\}$. 
For convenience, we allow our hypergraph~$\HH$ to contain parallel edges, 
so we will use an incidence function $\psi: \EE \rightarrow 2^V$ to define~$\HH$ as follows:

\begin{tabular}{ll}
$\psi(e_z)=\{z\}$;  \\
$\psi(e_i)=\psi(\bar{e}_i)=\{v_i,\bar{v}_i\}$ & for each $i \in [n]$; \\
$\psi(f_j)=V $ & for each $j \in [m]$. \\
\end{tabular}

We define the edge sets $F_i^+=\{f_j:c_j \textrm{ contains the literal $x_i$}\}$ and 
$F_i^-=\{f_j:c_j \textrm{ contains the literal $\bar{x}_i$}\}$ for each variable~$x_i$.
The preferences associated with the vertices of~$\HH$ are defined below; a set in the preferences denotes any arbitrarily fixed ordering of the set.

\begin{tabular}{l}
$z:F \succ e_z$;  \\
$v_i:F \setminus F^+_i \succ e_i \succ F^+_i \succ \bar{e}_i$;  \\
$\bar{v}_i:F \setminus F^-_i \succ \bar{e}_i \succ F^-_i \succ e_i$.
\end{tabular}

We set the weight function~$\weight$ to be~$0$ on every hyperedge except for~$e_z$, and we set $\weight(e_z)=1$. Moreover, we set our threshold (the weight of the desired stable matching) as~$t=1$.
This finishes our instance~$I_{\HH}$ of \maxwlamshm. 
It is clear that the hypergraph~$\HH$ %
is laminar, as promised.

\medskip
\noindent
{\bf Correctness.}
Observe that weight of a stable matching reaches the threshold~$t$ in $I_{\HH}$ if and only if it contains the hyperedge~$e_z$.
We claim that $I_{\HH}$  admits a stable hypergraph containing $e_z$ if and only if 
$\varphi$ can be satisfied.

First assume that a stable matching~$M$ in~$I_{\HH}$ contains~$e_z$. 
Clearly, $M \cap F=\emptyset$, since every edge in~$F$ is incident to~$z \in \psi(e_z)$. 
Moreover, $|M \cap \{e_i,\bar{e}_i\}|=1$ for each $i \in [n]$, as otherwise both $v_i$ and $\bar{v}_i$ would be unmatched (by $M \cap F=\emptyset$), 
and thus $e_i$ (and also $\bar{e}_i$) would block~$M$. 
We construct a truth assignment~$\alpha_M$ for~$\varphi$ by setting $\alpha_M(x_i)=\texttt{true}$ if $e_i \in M$, 
and $\alpha_M(x_i)=\texttt{false}$ if~$\bar{e}_i \in M$. 

To see that $\alpha_M$ satisfies~$\varphi$, let us consider any clause $c_j$. 
Since $z$ prefers $f_j$ to~$e_w$, but $f_j$ cannot block~$M$, 
there must be some vertex~$v \in \psi(f_j) \setminus \{z\}$ 
and an edge~$e \in M$ incident to~$v$ that is preferred by $v$ to~$f_j$.
First, if $v=v_i$ for some~$i$, then the edge of~$M$ preferred by~$v_i$ to~$f_j$ must be $e_i$ (since $\bar{e}_i$ is the edge least preferred by~$v_i$).
Hence, $\alpha(x_i)=\texttt{true}$. Moreover, since $v_i$ prefers $e_i$ to $f_j$, it must be that $f_j \in F_i^+$, that is, 
the clause $c_j$ contains $x_i$ as a positive literal. Thus, $\alpha_M$ makes $c_j$ true.
Similarly, if~$v=\bar{v}_i$ for some~$i$, then the edge of~$M$ preferred by~$\bar{v}_i$ to~$f_j$ must be $\bar{e}_i$ 
(since $e_i$ is the edge least preferred by~$\bar{v}_i$).
Hence, $\alpha(x_i)=\texttt{false}$. Moreover, since $\bar{v}_i$ prefers $\bar{e}_i$ to $f_j$, it must be that $f_j \in F_i^-$, that is, 
the clause $c_j$ contains the literal $\bar{x}_i$. Thus, $\alpha_M$ again makes $c_j$ true.
Since this holds for each clause~$c_j$ of~$\varphi$, we get that $\alpha_M$ satisfies~$\varphi$.

For the other direction, assume now that $\alpha$ is a truth assignment that satisfies~$\varphi$. 
Let \[M_\alpha=\{e_z\} \cup \{e_i: \alpha(x_i)=\texttt{true}\} \cup \{\bar{e}_i:\alpha(x_i)=\texttt{false}\}.\] 
Clearly, the edges in~$M_\alpha$ cover each vertex of~$V$ exactly once, so $M_\alpha$ is indeed a matching. 
To see that $M_\alpha$ is stable as well, first observe that no edge of the form~$e_i$ or~$\bar{e}_i$ can block~$M_\alpha$, 
as any such edge is the least-preferred edge by some vertex in~$V$. Therefore we only have to prove that no edge in~$F$ blocks~$M_\alpha$. 
Consider some $f_j \in F$. Since $c_j$ contains a literal that is set to $\texttt{true}$ by~$\alpha$, 
there must exist either some $e_i \in M_\alpha$ for which $c_j \in F_i^+$, 
or some $\bar{e}_i \in M_\alpha$ for which $c_j \in F_i^-$. 
In the former case, $f_j$ does not block~$M_\alpha$ because $v_i$ prefers $e_i \in M_\alpha$ to $f_j \in F^+_i$, 
and in the latter case, $f_j$ does not block~$M_\alpha$ because $\bar{v}_i$ prefers $\bar{e}_i \in M_\alpha$ to $f_j \in F^-_i$. 
This proves that $M_\alpha$ is stable in~$I_\HH$ and our reduction is correct. 
\end{proof}

\subsection{Omitted proofs from Section~\ref{sec:rc1p}}
\label{sec:app-rc1p-proofs}

\subsubsection*{Proof of Lemma~\ref{lem:subpath_real}} 
\label{sec:app-proof-subpath_real}
\subpathlemmareal*
\begin{proof}
We use induction on~$i$. The case $i=0$ is trivial, so assume that the lemma holds for $i-1$.

First, suppose that $(\sigma,M)$ was added to~$S_i$ on line~\ref{alg2:lineX}, so $M \subseteq \HE_{i-1}$.
Let $(\sigma^-,M) \in S_{i-1}$ be the pair selected on line~\ref{alg2:lineA}
in the cycle during which $(\sigma,M)$ was put into~$T_i$ on line~\ref{alg2:lineB}. 
By induction, %
\begin{itemize}
    \item[(a)] $M$ is compatible with~$\sigma^-$,
    \item[(b)] $\sigma^-(v) \in M(v)$ for every $v \in L_{i-1}$ where $\sigma^-(v) \in \HE_{i-1}$, and 
    \item[(c)] every edge in $\EE_{i-1} \setminus M$ that blocks~$M$ is planned to be dominated by~$\sigma^-$.
\end{itemize}

First, observe that the $\capac$-matching~$M$ is compatible with~$\sigma$: the compatibility conditions for each $v \in L_{i-1} \cap L_i$ follow immediately from $\sigma^-(v)=\sigma(v)$ due to~(a), while the compatibility condition for each $v \in L_i \setminus L_{i-1}$ 
holds irrespective of the value of~$\sigma(v)$, because edges of~$M \subseteq \HE_{i-1}$ have no vertices in~$L_i \setminus L_{i-1}$. 

Second, by our assumption that 
Algorithm~\ref{alg:RC1P} added $(\sigma,M)$ to~$S_i$ on line~\ref{alg2:lineX}, we know that $\sigma(v) \neq e_i$ for each $v$ incident to~$e_i$. 
Hence, if $\sigma(v) \in \EE_i$ for some $v \in L_i$, then $\sigma(v) \in \EE_{i-1}$, and consequently, $v \in L_i \cap L_{i-1}$ (because vertices of $L_i \setminus L_{i-1}$ are not contained in any edge of~$\EE_{i-1}$). 
Therefore, (b) implies
$\sigma(v)=\sigma^-(v) \in M(v)$. 
Third, it remains to show that every edge $e \in \HE_i \setminus M$ that blocks $M$ is planned to be dominated by~$\sigma$.
First note that Algorithm~\ref{alg:RC1P} checks that this holds for~$e_i$ explicitly on line~\ref{alg2:lineX-1}, so we may assume $e \neq e_i$.
By (c), we know that every edge in~$\HE_{i-1} \setminus M$ that blocks~$M$ is planned to be dominated by~$\sigma^-$, 
so there is some $v \in L_{i-1}$ such that $\sigma^-(v) \succ_v e$. On the one hand, if $v \in L_i \cap L_{i-1}$, then $e$ is planned to be dominated by $\sigma$ too, because $\sigma(v)=\sigma^-(v)$.
On the other hand, if $v \in L_{i-1} \setminus L_{i}$, then Algorithm~\ref{alg:RC1P} must have confirmed on line~\ref{alg2:lineB}
that $M$ is complete on~$v$ (otherwise it would not have put $(\sigma^-,M)$ into~$T_i$),  which implies that $M$ dominates~$e$ at~$v$. Thus, $e$ does not block~$M$, proving 
that every edge in $\EE_i \setminus M$ that blocks~$M$ is planned to be dominated by~$\sigma$.
This finishes the proof that $M$ realizes~$\sigma$ with respect to~$\HE_i$.

Suppose now that $(\sigma,M)$ was put into~$S_i$ on line~\ref{alg2:lineY}.
Let $M^-=M \setminus \{e_i\}$, and let $(\sigma^-,M^-) \in S_{i-1}$ be the pair selected on line~\ref{alg2:lineA}
in the cycle during which $(\sigma,M^-)$ was put into~$T_i$ on line~\ref{alg2:lineB}.  
By induction, conditions~(a)--(c) hold with $M^-$ replacing~$M$; let these modified conditions be referred to as (a')--(c').
First, by (a') and the conditions checked on line~\ref{alg2:lineY-2} we get that $M=M^- \cup \{e_i\}$ is a $\capac$-matching that is compatible with~$\sigma$ on~$L_i$.
Second, if $\sigma(v) \in \HE_i$ for some $v \in L_i$, then either $\sigma(v) \in \HE_{i-1}$, in which case $\sigma(v)=\sigma^-(v) \in M^-(v) \subseteq M(v)$ by (b'),
or $\sigma(v) =e_i \in M(v)$. 
Hence, it remains to see that every edge $e \in \HE_i \setminus M = \HE_{i-1} \setminus M^-$ that blocks $M$ is planned to be dominated by~$\sigma$.
First note that $e$ must block also $M^-$, and thus by (c') must be planned to be dominated by~$\sigma^-$ at some~$v \in L_{i-1}$. 
We can argue exactly as for previous case to prove that either 
$e$ is planned to be dominated by~$\sigma$ at $v$ (in case $v \in L_i$)
or $e$ is dominated by~$M$ at~$v$ (in case $v \notin L_i$). Thus, again it holds that $M$ realizes $\sigma$ with respect to~$\HE_i$. 
\end{proof}

\subsubsection*{Proof of Theorem~\ref{thm:RC1P-Whardness}}
\label{sec:app-proof-RC1P-Whardness}

\thmrcpwhardness*
\begin{proof}
We are going to present a reduction from {\sc Multicolored Clique} which is known to be $\mathsf{W}[1]$-hard~\cite{FHRV2009-multicolored-hardness}.
The input of {\sc Multicolored Clique} is a graph~$G=(V,E)$ and integer~$k$ with the vertex set of~$G$ is partitioned into $k$ independent sets~$V_1,\dots,V_k$;
the task is to decide whether $G$ admits a clique of size~$k$. 

\smallskip
\noindent
{\bf Construction.}
We first construct a \emph{selection gadget} which involves vertices $w,z,s_1,\dots,s_k$ and edge set $\{e_w, d_1, \dots, d_k\} \cup \{e^0(v):v \in V\}$. 
We will also need \emph{incidence gadgets} which will involve a vertex set $\{a_e,a'_e:e \in \bar{E}\}$ and edge set $\{f_e:e \in \bar{E}\}$ 
where $\bar{E}$ contains all edges in the complement of~$G$ running between different partitions. 
We fix an ordering over~$\bar{E}$, and denote its elements by $e_1,\dots, e_{\bar{m}}$. 
We will also need \emph{repeater gadgets} which will contain vertices $\{r_{(i,e)}^h:h \in [4], i \in [k], e \in \bar{E}\}$
and edge set $\{e^j(v),\hat{e}^j(v),\hh{e}^j(v):e_j \in \bar{E},i \in [k],v \in V_i\}$.

The incidence function~$\psi$ of the edges is as follows: 
\[\begin{array}{l@{\hspace{1pt}}l@{\hspace{1pt}}ll}
\psi(e_w)&=&\{w,z\}; \\
\psi(d_i)&=&\{z,s_1,\dots,s_i\} & \textrm{for $i \in [k]$}; \\
\psi(e^0(v))&=&\{s_i,\dots,s_k\} \cup \{r_{(i',e_1)}^h:h \in [4],i' \in [i-1]\} \\
&& \quad \cup \{r_{(i,e_1)}^1,r_{(i,e_1)}^2,r_{(i,e_1)}^3\}
& \textrm{for $v \in V_i$, $i \in [k]$;} \\
\psi(e^j(v))&=&\{r_{(i,e_j)}^2,r_{(i,e_j)}^3,r_{(i,e_j)}^4\} \cup \{r_{(i',e_j)}^h:h \in [4],i' \in [k] \setminus [i]\} \\
&& \quad \cup \{a_{e_j},a'_{e_j}\} 
\cup \{r_{(i',e_{j+1})}^h:h \in [4],i' \in [i-1]\} \\
&& \quad \cup \{r_{(i,e_{j+1})}^1,r_{(i,e_{j+1})}^2,r_{(i,e_{j+1})}^3\}
& \textrm{for  $v \in V_i$, $i \in [k]$, $j \in [\bar{m}-1]$;} \\
\psi(e^j(v))&=& \{r_{(i,e_j)}^2,r_{(i,e_j)}^3,r_{(i,e_j)}^4\} \cup \{r_{(i',e_j)}^h:h \in [4],i' \in [k] \setminus [i]\} \\
&& \quad \cup \{a_{e_j},a'_{e_j}\}
& \textrm{for $v \in V_i$, $i \in [k]$, $j=\bar{m}$}; \\
\psi(\hat{e}^j(v))&=&\{r_{(i,e_j)}^1,r_{(i,e_j)}^2\} & \textrm{for $v \in V$, $i \in [k]$, $j \in \bar{m}$;} \\
\psi(\hh{e}^j(v))&=&\{r_{(i,e_j)}^3,r_{(i,e_j)}^4\} & \textrm{for $v \in V$, $i \in [k]$, $j \in \bar{m}$;} \\
\psi(f_e) &=& \{a_e,a'_e \} & \textrm{for $e \in \bar{E}$}. 
\end{array}
\]
It is straightforward to check that the following ordering of the vertices in the resulting hypergraph~$\HH$ witnesses its \rcp\ property.
We start with the vertices of the selection gadget as $w,z,s_1,\dots,s_k$, and then 
proceed with a \emph{segment} for each $e_j$, $j=1,\dots,\bar{m}$, 
where the segment for~$e$ is defined as the list 
\[r_{(1,e)}^1,r_{(1,e)}^2,r_{(1,e)}^3,r_{(1,e)}^4,\dots, r_{(k,e)}^1,r_{(k,e)}^2,r_{(k,e)}^3,r_{(k,e)}^4,a_e,a'_e.\]

We set the capacities of the vertices as follows:
\[\begin{array}{ll}
\capac(w) = \capac(z) = 1; \\
\capac(s_i) = i & \textrm{for each $i \in [k]$;} \\
\capac(r_{(i,e)}^1)=\capac(r_{(i,e)}^4)=c(a_e)=\capac(a'_e)=k & \textrm{for each $i \in [k]$, $e \in \bar{E}$;} \\
\capac(r_{(i,e)}^2)=\capac(r_{(i,e)}^3)=k+1 & \textrm{for each $i \in [k]$, $e \in \bar{E}$.} 
\end{array}\]

\def\hype{\mathcal{E}}
Next, before defining the preferences of the vertices in~$\HH$, we need some additional notation.
For convenience, for each $j=0,1,\dots,\bar{m}$ and $i  \in [k]$ we define hyperedge sets $\hype_i^j=\{e^j(v):v \in V_i\}$,
$\hype_i^j=\{\hat{e}^j(v):v \in V_i\}$, and $\hh{\hype}_i^j=\{\hh{e}^j(v):v \in V_i\}$.
Let us also fix an arbitrary ordering over each $V_i$, and denote the resulting ordering of $V_i$ and the corresponding ordering of the three sets just defined 
by~$\underrightarrow{V}_i$, $\underrightarrow{\hype}_i^j$, $\underrightarrow{\hat{\hype}}_i^j$, and $\underrightarrow{\hh{\hype}}_i^j$, respectively.
The reverse ordering of these sets will be denoted 
by~$\underleftarrow{V}_i$, $\underleftarrow{\hype}_i^j$, $\underleftarrow{\hat{\hype}}_i^j$, and $\underleftarrow{\hh{\hype}}_i^j$, respectively.
Furthermore, we write $\hype_{<i}^j=\bigcup_{i' \in [i-1]} \hype_{i'}^j$ and $\hype_{>i}^j=\bigcup_{i' \in [k] \setminus [i]} \hype_{i'}^j$.
We let $\hype^j=\bigcup_{i \in [k]} \hype_i^j$, and we define $\hat{\hype}^j$ and $\hh{\hype}^j$ analogously.

We also define $\bar{E}_{i,i'}$ to contain those edges $xy$ of $\bar{E}$ for which $x \in V_i$ and $y \in V_{i'}$. 
For two series $A=(a_1,\dots, a_n)$ and $B=(b_1,\dots, b_n)$ of the same length
let us denote the merging of $A$ and $B$ as $A \merge B=(a_1,b_1,\dots,a_n,b_n)$.
Now we are ready to define the preferences, yielding an instance $I_\HH$ of \shbm;
again, sets in the preferences are ordered arbitrarily.
 
\[\begin{array}{r@{\hspace{3pt}}r@{\hspace{3pt}}ll}
w &:& e_w; \\
z &:& \{d_1,\dots,d_k\} \succ e_w; \\
s_i &:& \{d_{i+1},\dots,d_k\} \succ \hype_{<i}^0 \succ \hype_i^0 \succ d_i  & \textrm{for $i \in [k]$}; \\
r_{(i,e_j)}^1 &:& \hype_{<i}^{j-1} \cup \hype_{>i}^{j-1} \succ \underrightarrow{\hype}_i^{j-1} \merge \underrightarrow{\hat{\hype}}_i^j 
 & \textrm{for $i \in [k]$, $j \in [\bar{m}]$;}  \\
r_{(i,e_j)}^2 &:& \hype^{j-1} \succ \underleftarrow{\hat{\hype}}_i^j \merge \underleftarrow{\hype}_i^j
 & \textrm{for $i \in [k]$, $j \in [\bar{m}]$;}  \\
r_{(i,e_j)}^3 &:& \hype_{<i}^{j-1} \cup \hype_{>i}^{j-1} \cup  \hype_i^j \succ \underleftarrow{\hh{\hype}}_i^j \merge \underleftarrow{\hype}_i^{j-1} 
 & \textrm{for $i \in [k]$, $j \in [\bar{m}]$;}  \\
r_{(i,e_j)}^4 &:& \hype_{<i}^{j-1} \cup \hype_{>i}^{j-1} \succ \underrightarrow{\hype}_i^j \merge \underrightarrow{\hh{\hype}}_i^j 
 & \textrm{for $i \in [k]$, $j \in [\bar{m}]$;}  \\
a_{e_j} &:& \hype^j \setminus \{e^j(x)\} \succ f_{e_j} \succ e^j(x) 
 & \textrm{for $e_j=\{x,y\} \in \bar{E}_{i,i'}$, $i<i'$;}  \\
a'_{e_j} &:& \hype^j \setminus \{e^j(y)\} \succ f_{e_j} \succ e^j(y) 
 & \textrm{for $e_j=\{x,y\} \in \bar{E}_{i,i'}$, $i<i'$}.  \\
\end{array}\]

To finish our instance of {\sc \maxwshbm}, we set the weight of~$e_w$ as~1, and we set the weight of every other edge as~$0$; moreover, we set $t=1$ as the weight of the desired stable $\capac$-matching.  
Clearly, a $\capac$-matching has weight at least~$t=1$ if and only if contains the edge~$e_w$.

Notice that the maximum capacity is~$\capac_{\max}=k+1$, and the maximum size of any edge is $\ell_{\max}=4k+4$, 
the size of any edge $e^j(v)$ with $1 \leq j<\bar{m}$.
It remains to show that our reduction is correct, i.e., $e_w$ is contained in some stable $\capac$-matching of $I_\HH$ if and only if $G$ admits a clique of size~$k$.

\smallskip
\noindent
{\bf Direction ``$\Rightarrow$'':}
Suppose first that $M$ is a stable $\capac$-matching for~$I_\HH$ with $e_w \in M$. 
Clearly, any edge not in~$M$ must be dominated at some vertex.

Note that since $z$ prefers each of the edges $d_1, \dots, d_k$ to~$e_w$ and $\capac (z)=1$,
it must be the case for each $i \in [k]$ such that $d_i \notin M$ and $d_i$ is dominated at some vertex other than~$z$. 
By $\psi(d_1)=\{z,s_1\}$ we know that $d_1$ must be dominated at $s_1$; by $\capac(s_1)=1$ this implies $|\hype^0_1 \cap M|=1$. 
More generally, observe that $d_i$ for some $i \in [k]$ can only be dominated at $s_i$ 
(as the vertices $s_1,\dots,s_{i-1}$ do not prefer any edge that might be in~$M$ to~$d_i$), 
implying that $|\hype^0_i \cap M|=1$. Let $q_1, \dots, q_k$ be the vertices in~$G$ 
such that $e^0(q_i)$ is the unique edge in~$\hype^0_i \cap M$. Our aim is to show that the vertices $q_1,\dots,q_k$ form a clique in~$G$.

\begin{claim} 
\label{clm:whard-induction}For each $j \in [\bar{m}]$ and $i \in [k]$ it holds that 
(i) $|\hype^j_i \cap M|=1$, and 
(ii) $\hype^j_i \cap M=\{e^j(q_i)\}$.  %
\end{claim}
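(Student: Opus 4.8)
The plan is to prove the claim by induction on the segment index~$j$, taking as the base case the selection $\mathcal{E}^0_i \cap M = \{e^0(q_i)\}$ ($i \in [k]$) that was established just before the claim. In the inductive step I would fix $j \ge 1$, assume $\mathcal{E}^{j-1}_{i} \cap M = \{e^{j-1}(q_i)\}$ for every color~$i$, and then prove the segment-$j$ statement by a secondary induction on the color index~$i$ running from $1$ to~$k$; in this way, when I analyze the repeater gadget at~$(i,e_j)$ I may already invoke $\mathcal{E}^{j}_{i'} \cap M = \{e^j(q_{i'})\}$ for all $i' < i$. Throughout I write $V_i = \{v_1,\dots,v_{n_i}\}$ for the fixed ordering of~$V_i$ and let~$t$ be the index with $q_i = v_t$. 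The whole argument is local to the four vertices $r^1_{(i,e_j)},\dots,r^4_{(i,e_j)}$ together with the internal edges $\hat{e}^j(v)$ and $\hh{e}^j(v)$ for $v \in V_i$, so the key is to read off exactly how stability constrains their neighborhoods in~$M$.

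First I would establish the upper bound $|\mathcal{E}^j_i \cap M| \le 1$ by a capacity count at $r^2_{(i,e_j)}$, whose capacity is $k+1$. Its neighborhood in~$M$ contains, besides any edge of $\mathcal{E}^j_i \cup \hat{\mathcal{E}}^j_i$, precisely the already-selected edges of the other color classes incident to it: the edge $e^{j-1}(q_i)$, the edges $e^{j-1}(q_{i'})$ for $i'>i$, and the edges $e^j(q_{i'})$ for $i'<i$. By the two induction hypotheses these number $1 + (k-i) + (i-1) = k$, and all of them are preferred by~$r^2$ to every edge of $\mathcal{E}^j_i \cup \hat{\mathcal{E}}^j_i$. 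Hence the single remaining slot of~$r^2$ yields $|\mathcal{E}^j_i \cap M| + |\hat{\mathcal{E}}^j_i \cap M| \le 1$; in particular $|\mathcal{E}^j_i \cap M| \le 1$. The same list of~$k$ fixed edges will reappear at the other repeater vertices, with the crucial difference that $r^1$ has capacity only~$k$.

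The heart of the argument is to pin the index~$s^*$ of the selected edge $e^j(v_{s^*})$ (should one exist) to equal~$t$, via a pincer built from the two internal edge-families. On one side, $r^1_{(i,e_j)}$ (capacity~$k$) is saturated exactly by the $k$ fixed edges above; in particular $\hat{\mathcal{E}}^j_i \cap M = \emptyset$, and the worst matched edge of~$r^1$ is $e^{j-1}(v_t)$. By the forward-interleaved preference of~$r^1$ this dominates every $\hat{e}^j(v_s)$ with $s \ge t$, so each $\hat{e}^j(v_s)$ with $s<t$, being unmatched and undominated at~$r^1$, must be dominated at its other endpoint $r^2_{(i,e_j)}$; reading off $r^2$'s reverse-interleaved preference, its one free slot can dominate all of them simultaneously only if it holds some $e^j(v_{s^*})$ with $s^* \ge t$. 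A mirror-image analysis on the $\hh{\mathcal{E}}$-side, where $e^{j-1}(v_t)$ again serves as the pivot at $r^3_{(i,e_j)}$ and dominates the edges $\hh{e}^j(v_s)$ with $s<t$ while the remaining ones are pushed onto $r^4_{(i,e_j)}$ (which the selected edge saturates), forces $s^* \le t$. Together these give $s^* = t$, i.e.\ the selected edge is $e^j(q_i)$.

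Finally I would exclude $\mathcal{E}^j_i \cap M = \emptyset$: if nothing were selected for color~$i$ at segment~$j$, then $r^2$ and $r^4$ would each keep a free slot, so the internal edges left undominated by the pincer (the $\hat{e}^j(v_s)$ with $s<t$, or, in the extreme case $t=1$, the $\hh{e}^j(v_s)$) would block~$M$, a contradiction; this gives $|\mathcal{E}^j_i \cap M| = 1$ and closes the induction. I expect the principal difficulty to be bookkeeping rather than any single idea: one must verify that the capacities tally exactly at all four repeater vertices (so that $r^1$ is forced full of fixed edges while $r^2,r^3,r^4$ retain precisely one slot), that the forward and reverse merges place each internal edge immediately above or below the correct pivot, and that the degenerate cases stay consistent—most notably $j = \bar{m}$, where $e^j(v)$ has no forward part into a segment~$j+1$, and the color extremes $i \in \{1,k\}$, where one of the ``other-color'' groups is empty. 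Keeping the direction of every preference and the exact count at each $r^h_{(i,e_j)}$ straight is where all the care lies.
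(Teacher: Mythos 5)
Your proposal is correct and, at its core, is the same argument as the paper's: the same capacity counts at the four repeater vertices, the same use of the edges $\hat{e}^j(\cdot)$ (via $r^1_{(i,e_j)}$ and $r^2_{(i,e_j)}$) to force the index of the selected edge up to at least $t$, and the same mirror use of the edges $\hh{e}^j(\cdot)$ (via $r^3_{(i,e_j)}$ and $r^4_{(i,e_j)}$) to force it down to at most $t$, yielding $s^\ast=t$.

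The one structural difference is your inner induction on the colour index $i$. The paper inducts on $j$ alone and justifies the saturation of $r^1_{(i,e_j)}$ by declaring it incident to all $k$ edges of $M \cap \mathcal{E}^{j-1}$; that reading matches the printed preference tables (which list $\mathcal{E}^{j-1}_{<i}$ at the vertices $r^h_{(i,e_j)}$), but under the incidence function $\psi$ as actually defined, an edge $e^{j-1}(v)$ with $v$ of colour $i'<i$ terminates at $r^3_{(i',e_j)}$ and never reaches the colour-$i$ block of segment~$j$; the long edges that do reach it from the left are $e^{j}(q_{i'})$ for $i'<i$, exactly as you say. So your count $1+(k-i)+(i-1)=k$, and the secondary induction on $i$ that licenses it, is precisely what is needed to make the counting consistent with $\psi$ (equivalently, one should read $\mathcal{E}^{j}_{<i}$ where the paper's tables say $\mathcal{E}^{j-1}_{<i}$); this is a genuine, if small, repair rather than an extravagance. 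The remaining deviation is cosmetic: you establish existence of a selected edge by exhibiting a blocking internal edge when nothing is selected, whereas the paper extracts existence from the requirement that $\hat{e}^j(v_{p-1})$ be dominated at $r^2_{(i,e_j)}$ — an inference that is vacuous when $q_i$ is first in the ordering ($p=1$). Your treatment of that degenerate case (and of a possible stray $\hh{e}^j(v_1)\in M$ in it) is informal, but no more so than the paper's own, so on balance your write-up is at least as rigorous as the published proof.
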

\begin{claimproof}%
By the paragraph preceding the claim, both statements hold for $j=0$.
We prove our claim by induction, so assume that (i) and (ii) hold for $j-1$. 

Fix an index $i \in [k]$.
Let $\underrightarrow{V}_i=(v_1,\dots,v_{|V_i|})$, and
assume that $q_i$ is the $p$-th vertex in this order, so by (ii) we have $\hype^{j-1}_i \cap M=\{e^{j-1}(v_p)\}$.
Observe that none of the edges $\hat{e}^j(v_1), \dots,\hat{e}^j(v_{p-1})$ is dominated at the vertex~$r_{(i,e_j)}^1$.
Since $r_{(i,e_j)}^1$ is incident to $k$ edges in~$M \cap \hype^{j-1}$ by (i) and the capacity of $r_{(i,e_j)}^1$ is~$k$,
we know that $\hat{\hype}_i^j \cap M = \emptyset$. 
Therefore, all of the edges $\hat{e}^j(v_1), \dots,\hat{e}^j(v_{p-1})$ must be dominated at $r_{(i,e_j)}^2$. 
By the preferences of~$r_{(i,e_j)}^2$ (and using $\hat{\hype}^j \cap M = \emptyset$), 
it follows that $M$ must contain an edge in $\hype_i^j$ that precedes $\hat{e}^j(v_{p-1})$ in~$\underleftarrow{\hat{\hype}}_i^{j} \merge \underleftarrow{\hype}_i^j$. 
By definition, this means that $M$ contains $e^j(v_{p'})$ for some $p' > p-1$. 
By $\capac(r^2_{(i,e_j)})=k+1$, this also yields that $r^2_{(i,e_j)}$ is saturated by $M$, 
so $|M \cap \hype_i^j|=1$. 

Consider now the vertex $r^3_{(i,e_j)}$ and its capacity of~$k+1$. 
We already know that $|M \cap (\hype^{j-1} \cup  \hype_i^j)|= k+1$, and since every edge in $\hype^{j-1} \cup  \hype_i^j$ is incident to~$r^3_{(i,e_j)}$, 
we know that $M \cap \hh{\hype}_i^j=\emptyset$. Thus, every edge $\hh{e}^j(v_h) \in \hh{\hype}_i^j$ must be dominated either at $r^3_{(i,e_j)}$ or at $r^4_{(i,e_j)}$.
In the former case, $e^{j-1}(v_p)$ precedes $\hh{e}^j(v_h)$ in~$\underleftarrow{\hh{\hype}}_i^j \merge \underleftarrow{\hype}_i^{j-1}$, meaning $h<p$.
In the latter case, $e^j(v_{p'})$ precedes $\hh{e}^j(v_h)$ in~$\underrightarrow{\hype}_i^j \merge \underrightarrow{\hh{\hype}}_i^j$, meaning $p' \leq h$. 
We can conclude that $h<p$ or $p' \leq h$ holds for \emph{every} $h \in [|V_i|]$, in particular, for $h=p$ this yields $p' \leq p$. 
Taking into account that we proved $p' \geq p$, we get $p'=p$. Hence, we have proved that $\hype^j_i \cap M=\{e^j(v_p)\} = \{e^j(q_i)\}$,
finishing our proof of the claim. 
\end{claimproof}

Using Claim~\ref{clm:whard-induction}, we can now show that the vertices $q_1,\dots,q_k$ form a clique in~$G$. 
Suppose for the sake of contradiction that $q_{i_1}$ and $q_{i_2}$ are not adjacent in~$G$ for some $i_1<i_2$. 
Then $\bar{E}$ contains the edge $\{q_{i_1},q_{i_2}\}$, let $e_j=\{q_{i_1},q_{i_2}\}$. Consider the vertices $a_{e_j}$ and $a'_{e_j}$. 
By our claim, we know that they are incident to exactly $k$ edges in $M \cap \hype^j$, and since they have capacity~$k$, 
it follows that the edge $f_{e_j}$ (incident to both $a_{e_j}$ and $a'_{e_j}$) cannot be contained in~$M$. 
Hence, it must be dominated either at $a_{e_j}$ or at $a'_{e_j}$. 
The former happens exactly if $e^j(q_{i_1}) \notin (M \cap \hype^j)$, while the latter happens exactly if $e^j(q_{i_2}) \notin (M \cap \hype^j)$. 
However, by definition, neither of these holds, a contradiction.

\smallskip
\noindent
{\bf Direction ``$\Leftarrow$'':}
Suppose now that there is a clique of size~$k$ in~$G$, containing a vertex $q_i$ from each partition~$V_i$.
We define the $\capac$-mathing~$M=\{e_w\} \cup \{ e^j(q_i): i \in [k], j \in \{0,1, \dots,\bar{m}\} \}$. 
It is easy to check that $M$ saturates all vertices of~$\HH$. 

It suffices to show that $M$ is stable. To this end, we argue that no edge of~$\HH$ blocks~$M$:
\begin{itemize}
\item Edges in~$\{d_1,\dots,d_k\}$:
It is clear that any edge~$d_i$ is dominated by~$M$ at~$s_i$, so cannot block~$M$. \\[-10pt]
\item Edges in $\hat{\hype}^j$ for some $j \in [\bar{m}]$:
Consider an edge~$\hat{e}^j(v) \in \hat{\hype}^j$ with $v \in V_i$. 
On the one hand, if $q_i$ weakly precedes $v$  in $\underrightarrow{V}_i$
(allowing $q_i=v$), then $\hat{e}^j(v)$  is dominated by~$M$ at $r_{(i,e_j)}^1$:
indeed, $r_{(i,e_j)}^1$ prefers each of the $k-1$ vertices in $M \cap (\hype_{<i}^{j-1} \cup \hype_{>i}^{j-1})$ to~$\hat{e}^j(v)$, 
and since $e^{j-1}(q_i)$ precedes~$\hat{e}^j(v)$ in~$\underrightarrow{\hype}_i^{j-1} \merge \underrightarrow{\hat{\hype}}_i^j$, 
we get that $r_{(i,e_j)}^1$ prefers $e^{j-1}(q_i)$ to $\hat{e}^j(v)$ as well. 
On the other hand, if $q_i$ follows~$v$ in~$\underrightarrow{V}_i$, then $\hat{e}^j(v)$  is dominated by~$M$ at~$r_{(i,e_j)}^2$:
again, $r_{(i,e_j)}^2$ prefers each of the $k$ vertices in $M \cap \hype^{j-1}$ to~$\hat{e}^j(v)$, 
and since $e^{j-1}(q_i)$ precedes~$\hat{e}^j(v)$ in~$\underleftarrow{\hat{\hype}}_i^j \merge \underleftarrow{\hype}_i^j$,
we get that $r_{(i,e_j)}^2$ prefers~$e^{j-1}(q_i)$ to~$\hat{e}^j(v)$ as well. 
\item Edges in $\hh{\hype}^j$ for some $j \in [\bar{m}]$: 
Consider an edge $\hh{e}^j(v) \in \hat{\hype}^j$ and $v \in V_i$. We argue similarly as in the previous case.
If $q_i$ weakly precedes $v$ in~$\underrightarrow{V}_i$, then $\hh{e}^j(v)$ is dominated by~$M$ at~$r_{(i,e_j)}^4$, 
otherwise it is dominated by~$M$ at~$r_{(i,e_j)}^3$.
\item Edges in~$\{f_{e_j} :e_j \in \bar{E}\}$: Let $e_j$ run between $V_{i_1}$ and $V_{i_2}$ in the complement of~$G$, with $i_1<i_2$. 
Since $e_j$ is not an edge of~$G$, at least one of its endpoints does not belong to the clique on~$\{q_1,\dots,q_k\}$. 
If this endpoint is a vertex in $V_{i_1} \setminus \{q_{i_1}\}$, then $f_{e_j}$ is dominated at~$a_{e_j}$ by the edges $\{e^j(q_i):i \in [k]\}$.
By contrast, if $e_j$ has an endpoint in~$V_{i_2} \setminus \{q_{i_2}\}$, then $f_{e_j}$ is dominated at $a'_{e_j}$. 
\item Edges in $\hype^j$ for some $j \in \{0,1,\dots,\bar{m}\}$: 
Suppose that $v \in V_i$, and consider the edge $e^j(v)$. 
On the one hand, if $v$ precedes $q_i$ in~$\underrightarrow{V}_i$, then $e^j(v)$ is dominated at $r_{(i,e_j)}^2$ (in case $j \in [\bar{m}]$)
and also at~$r_{(i,e_{j+1})}^3$ (in case $0 \leq j <\bar{m}$). 
On the other hand, if $v$ follows $q_i$ in~$\underrightarrow{V}_i$, then $e^j(v)$ is dominated at~$r_{(i,e_j)}^1$ (in case $0 \leq j <\bar{m}$) 
and also at~$r_{(i,e_{j+1})}^4$ (in case $j \in [\bar{m}]$). \qedhere
\end{itemize}
\end{proof}

\section{Omitted proofs from Section~\ref{sec:subtree}}

\subsubsection*{Proof of Theorem~\ref{thm:fa}} 
\label{sec:app-proof-thmfa}

\thmfa*
\begin{proof}
We give a reduction from \shbm, which is $\NP$-complete even if each hyperedge has size~3~\cite{NgHirschberg91}. Let $I=(\HH,\capac,(\succ_v)_{v \in V})$ be such an instance of \shbm\ with hypergraph $\HH=(V,\EE)$. We create a new instance $I'=(\HH',\capac',(\succ'_v)_{v \in V})$ of \subtrshbm\ satisfying the conditions in the theorem. Let the vertices of $\mathcal{H'}$ be the vertices of $\HH$ with one additional vertex~$x$, and let us extend each hyperedge in~$\EE$ by adding~$x$. That is, $\HH'=(V \cup \{x\},\EE')$ where $\EE'=\{e \cup \{x\}:e \in \EE\}$.
Notice that $\HH'$ is indeed a subtree hypergraph whose underlying tree is a star on~$V \cup \{x\}$ with $x$ as its center.
For each vertex $v \in V$, we retain both its capacity and its preferences, i.e., $\capac'(v)=\capac(v)$ and $\succ'_v=\succ_v$. For the vertex~$x$, we set $\capac'(x)=|\EE|+1$ and fix its preferences arbitrarily. Note that 
$x$ can never be saturated in any $\capac$-matching $M \subseteq \EE'$, so $x$ cannot dominate any edge for any $\capac$-matching. 

It is now straightforward to check that the stable $\capac$-matchings in~$I$ correspond bijectively to the stable $\capac$-matchings in~$I'$. 
More precisely, for a stable $\capac$-matching~$M'$ in~$I'$, the projection of~$M'$ to~$V$ yields a stable $\capac$-matching in~$I$. Similarly, 
given a stable $\capac$-matching~$M$ in~$I$, the $\capac$-matching obtained by adding $x$ to each hyperedge in~$M$ is a stable $\capac$-matching in~$I'$. 
\end{proof}

\end{appendices}
\end{document}